\let\H\relax
\DeclareMathOperator{\bn}{bn}
\DeclareMathOperator{\viol}{viol}
\DeclareMathOperator{\wn}{wn}
\newcommand{\ENV}{\mathtt{ENV}}
\newlang{\oBP}{1\mhyph BP}
\newlang{\tauBP}{\tau\mhyph BP}
\crefname{claim}{Claim}{Claims}
\title{Testing Spreading Behavior in Networks with Arbitrary Topologies}
\author{Augusto Modanese\\
  Aalto University, Finland\\
  \texttt{augusto.modanese@aalto.fi}
  \and
  Yuichi Yoshida\\
  National Institute of Informatics\\
  \texttt{yyoshida@nii.ac.jp}}
\date{}
\begin{document}

\maketitle

\begin{abstract}
  Given the full topology of a network, how hard is it to test if it is evolving
  according to a local rule or is far from doing so?
  Inspired by the works of Goldreich and Ron (J.~ACM, 2017) and Nakar and Ron
  (ICALP, 2021), we initiate the study of property testing in dynamic
  environments with arbitrary topologies.
  Our focus is on the simplest non-trivial rule that can be tested, which
  corresponds to the $\oBP$ rule of bootstrap percolation and models a simple
  spreading behavior:
  Every \enquote{infected} node stays infected forever, and each
  \enquote{healthy} node becomes infected if and only if it has at least one
  infected neighbor.
  Our results are subdivided into two main groups:
  \begin{itemize}
    \item If we are testing a single time step of evolution, then the query
    complexity is $O(\Delta/\eps)$ or $\tilde{O}(\sqrt{n}/\eps)$ (whichever is
    smaller), where $\Delta$ and $n$ are the maximum degree of a node and the
    number of vertices in the underlying graph, respectively.
    We also give lower bounds for both one- and two-sided error testers that
    match our upper bounds up to $\Delta = o(\sqrt{n})$ and $\Delta =
    O(n^{1/3})$, respectively.
    If $\eps$ is constant, then the first of these also holds against adaptive
    testers.
    \item For the setting of testing the environment over $T$ time steps, we
    give two algorithms that need $O(\Delta^{T-1}/\eps T)$ and
    $\tilde{O}(\abs{E}/\eps T)$ queries, respectively, where $E$ is the set of
    edges of the underlying graph.
  \end{itemize}
  All of our algorithms are one-sided error, and all of them are also
  non-adaptive, with the single exception of the more complex
  $\tilde{O}(\sqrt{n}/\eps)$-query tester for the case $T = 2$.
\end{abstract}


\section{Introduction}

Imagine we are observing the state of a network as it evolves over time.
The network is static and we have complete knowledge about the connections;
it is too large for us to keep track of the state of every single node, though
nevertheless we are able to query nodes directly and learn their states.
We might hypothesize that the global behavior can be explained by a certain
local rule that is applied at every node, and we would like to verify if our
hypothesis is correct or not.
In this paper, we focus on this question:
\emph{How hard is it to test, given a local rule $R$, if the network is
following $R$ or is far from doing so?}

Following previous works \cite{goldreich17_learning_jacm,nakar21_back_icalp}, we
refer to the series of configurations assumed by the network over time as the
\emph{environment} $\ENV$ that we are observing.
The network itself is static and its connections defined by a graph $G = (V,E)$.
The local rule $R$ is a map (admitting a finite description) from the states
that a node observes in its neighborhood (including the node itself) to the new state it will assume in the
next time step.
Plausible scenarios that could be modeled in this context include not only
rumor dissemination in social networks but also spreading of infectious
diseases (where the connections between nodes represent proximity or contact
between the organisms that we are observing).
As is common in property testing \cite{bhattacharyya22_property_book}, we assume
that the bottleneck of this problem is keeping track of the states across the
entire network, and hence we consider only the number of queries made by a
testing algorithm as its measure of efficiency (and otherwise assume that the
algorithm has access to unbounded computational resources).

\subsection{Problem Setting}

There exist two previous works
\cite{goldreich17_learning_jacm,nakar21_back_icalp} that study the problem of
determining whether $\ENV$ evolves according to $R$ in the context of property
testing (and, in the case of \cite{goldreich17_learning_jacm}, also in the
context of learning theory).
In these works, the structure underlying the environment is always a cellular
automaton (in the case of \cite{nakar21_back_icalp} one-dimensional, whereas 
\cite{goldreich17_learning_jacm} also considers automata of multiple
dimensions), and thus $\ENV$ corresponds to the time-space diagram of such an
automaton.
This perspective is certainly meaningful when we are interested in phenomena
that take place on a lattice or can be adequately represented in such grid-like
structures, for instance the movement of particles on a surface or across
three-dimensional space.
Nevertheless there are limits as to what can be modeled in this way.
A prominent example are social networks, in which the connections hardly fit
well into a regular lattice (even with several dimensions).

In this work we cast off these restraints and instead take the radically
different approach of making no assumptions about the underlying structure or
the space it is embedded in.
Our only requirement is that it corresponds to a static graph $G$ that is known
to us in advance.
This leaves a much broader avenue open when it comes to applications.
In addition, the rule that we consider is effectively the simplest rule possible
in such a setting that is not trivial.
As we will see, despite the rule being very simple, it is rather challenging to
fully determine the complexity of the problem.
Indeed, compared to the previous works mentioned above, it might seem as if our
progress is more modest; however, one should keep in mind that, in our case, the
underlying network $G$ has a much more rich structure (whereas in cellular
automata we are dealing with a highly regular one).

The rule that we study is the $\oBP$ rule of \emph{bootstrap percolation}
\cite{gregorio09_bootstrap_ecss,zehmakan19_tight_lata,janson12_bootstrap_aap,%
balogh07_bootstrap_rsa}.
For $\tau \in \N_0$, the rule $\tauBP$ is defined based on two states,
\emph{black} and \emph{white}, as follows:
If a node is black, then it always remains black; if a node is white, then it
turns black if and only if it has at least $\tau$ black neighbors.
These rules were originally inspired in the behavior observed in certain
materials, and they are very naturally suited for modeling spreading phenomena.

As seen from the lenses of property testing, testing for the $\oBP$ rule in some
sense resembles the setting of monotonicity testing
\cite{fischer02_monotonicity_stoc}.
Though we cannot directly apply one strategy to the other, if we view black as
$1$ and white as $0$, then in both cases we have a violation whenever we see a
$1$ preceding a $0$.
The difference is that in $\oBP$ \emph{every} $1$ must arise from a preceding
$1$, whereas in the case of monotonicity we are happy if an isolated $0$
spontaneously turns into a $1$.

Another way of modeling the $\oBP$ rule is as a \emph{constraint satisfaction
problem} (CSP).
CSPs have been studied in the context of property testing to some extent 
\cite{bhattacharyya13_algebraic_icalp,chen19_constant_siamjc}.
We can characterize $\oBP$ by two constraints:
A black node in step $t$ implies every one of its neighbors is also black in
step $t+1$; meanwhile, a node is white in step $t+1$ if and only if every one of
its neighbors in step $t$ was white.
Then we can recast testing if $\ENV$ follows $\oBP$ as testing if $\ENV$ is a
satisfying assignment for these constraints.
Nevertheless, although this seems to be a useful rephrasing of the problem, the
current methods in CSPs in the context of property testing are not sufficient to
tackle it.
And, even if we could indeed test either constraint with a sublinear number of
queries, $\ENV$ being close to satisfying both constraints would not necessarily
imply that $\ENV$ is close to satisfying their intersection.

\subsection{Results and Techniques}

We now present our results and the methods used to obtain them.
As this is a high-level discussion, formal definitions are postponed to
\cref{sec:preliminaries}, which the reader is invited to consult as needed.

The relevant parameters for the results are the number of nodes $n$ in the graph
$G = (V,E)$, the number of steps $T$ during which the environment $\ENV$
evolves, the maximum degree $\Delta$ of $G$, and the accuracy parameter $\eps >
0$.
The size of the environment is $nT$, which is the baseline for linear complexity
in this context (instead of $n$).
We write $\ENV \in \oBP$ to indicate that $\ENV$ follows the $\oBP$ rule and
$\dist(\ENV,\oBP) \ge \eps$ when it is $\eps$-far from doing so, that is, one
must flip at least $\eps n T$ colors in $\ENV$ in order for $\oBP$ to be obeyed
everywhere.
(As already mentioned, see \cref{sec:preliminaries} for the precise
definitions.)

In most cases we will be interested in optimizing the dependency of the query
complexity on $\Delta$.
This is due to the fact that, intuitively, graphs with small $\Delta$ should be
easier to verify locally, that is, by looking only at each node's neighborhood.
(Indeed, this is the strategy followed by the first algorithm we present below
in \cref{thm:alg-low-degree}.)

Another desirable property that we wish our algorithms to have is
\emph{non-adaptiveness}; that is, the algorithm first produces a list of queries
(without looking at the input), gathers their results, and then decides whether
to accept or not.
This is in contrast to an \emph{adaptive} algorithm, which may perform later
queries based on the answers it has seen so far.
A third property \enquote{in-between} these is \emph{time-conformability},
meaning that the algorithm does not make queries in step $t$ if it has already
queried nodes at some later step $t' > t$.
It is easy to see that the existence of a non-adaptive algorithm with query
complexity $q$ implies a time-conforming algorithm with the same complexity:
Just gather the $q$ queries in a list, sort them according to the time step
queried, and then execute the queries in order.
The converse is not true in general, however, since a time-conforming algorithm
might choose its queries in a later time step based on what it has seen
beforehand (or in the same time step, even).

Recalling our motivation of testing the evolution of huge networks, we see that
non-adaptive algorithms are the most desirable because the queries may all be
performed in parallel (at each time step).
In case this cannot be achieved, an adaptive, time-conforming algorithm is
still satisfactory, even though it might require \enquote{freezing} the network
at a specific time step (so that the algorithm has time to gather the results
received and decide on the next queries to make on the same time step).
Adaptive algorithms that violate time-conformability are not particularly
desirable since they require \enquote{rewinding} the state of the network back
in time.
Nevertheless, depending on the nodes' capabilities, we might still be able to
find strategies to cope with this; for example, if $T$ is small, it is plausible
to require nodes to cache their state in previous steps (and thus they can
answer any of the algorithm's queries, even about previous states).

In this paper, we study two different settings: testing a single time step of
evolution ($T = 2$) and testing multiple steps ($T > 2$).
In the first case we prove both upper and lower bounds, which also match up to
certain values of $\Delta$.
In the second we show only upper bounds, but which suffice to demonstrate that
the problem admits non-trivial testers, at least for moderate (non-constant)
values of $\Delta$.

\subsubsection{The Case \texorpdfstring{$T = 2$}{T = 2}}
\label{sec:intro-results-t-2}

Let us first discuss our results for the case where $T = 2$.
In this case there is a natural graph-theoretical rephrasing of the problem:
For $t \in \{1,2\}$, let $S_t$ be the set corresponding to $\ENV(\cdot,t)$ where
we see $\ENV(\cdot,t)$ as an indicator function (i.e,. $S_t$ is exactly the set
of nodes $v \in V$ for which $\ENV(v,t) = 1$).
Then $\ENV \in \oBP$ if and only if $S_2$ is dominated by $S_1$ (in
graph-theoretic terms).\footnote{%
  Technically the definition of domination is so that $A$ dominates $B$ if and
  only if $B \subseteq A \cup N(A)$.
  Using this definition the equivalence is only true if the graph $G$ contains
  self-loops everywhere.
  (Nevertheless, adding self-loops everywhere does not impact the maximum
  degree, which is the relevant parameter here.)
  The equivalence is certainly true if we change the definition so that $A$
  dominates $B$ if $B \subseteq N(A)$.
}
From this perspective, the distance from $\ENV$ to $\oBP$ is the (relative)
total number of nodes we need to add or remove from either of $S_1$ or $S_2$ in
order for the domination property to be satisfied.

The hardness of the problem in this case is highly dependent on the maximum
degree $\Delta$.
Our first result is that there is a very natural and simple algorithm that
achieves query complexity $O(\Delta/\eps)$.

\begin{restatable}{theorem}{restateThmAlgLowDegree}%
  \label{thm:alg-low-degree}
  Let $T = 2$ and $\eps > 0$. 
  There is a \emph{non-adaptive, one-sided error} algorithm with query
  complexity $O(\Delta / \eps)$ that decides whether $\ENV \in \oBP$ or
  $\dist(\ENV,\oBP) \ge \eps$.
\end{restatable}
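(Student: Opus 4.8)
The plan is to test the $\oBP$ rule with the most straightforward local check. For $T=2$ the environment consists of an initial configuration $\ENV_0$ and its successor $\ENV_1$, and — identifying black with $1$ and white with $0$ — we have $\ENV\in\oBP$ exactly when $\ENV_1(v)=\ENV_0(v)\vee\bigvee_{u\sim v}\ENV_0(u)$ holds at every node $v$; say $v$ is \emph{violated} otherwise. The algorithm samples $m=\Theta(1/\eps)$ nodes $v_1,\dots,v_m$ independently and uniformly at random, queries each $v_i$ at times $0$ and $1$ as well as every neighbor of $v_i$ at time $0$, and rejects iff some $v_i$ turns out to be violated. Each check uses at most $\Delta+2$ queries, so the total is $O(\Delta/\eps)$; the queries depend only on the random sample (hence non-adaptive), and since they touch time $1$ only at the sampled nodes themselves we may issue all time-$0$ queries before any time-$1$ query (hence time-conforming). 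If $\ENV\in\oBP$ there are no violated nodes and the tester always accepts, which gives one-sided error.

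The only substantive step is soundness, which I would derive from the correction bound $\dist(\ENV,\oBP)\le k/(nT)$, where $k\le n$ denotes the number of violated nodes. To see it, repair $\ENV$ by leaving $\ENV_0$ untouched and overwriting $\ENV_1$ with the configuration obtained by applying the $\oBP$ rule to $\ENV_0$; the result lies in $\oBP$ by construction and differs from $\ENV$ precisely at the time-$1$ copies of the violated nodes, i.e., in exactly $k$ colors. (In particular $\dist(\ENV,\oBP)\le n/(nT)=1/2$ always, so we may assume $\eps\le 1/2$.) Hence, if $\dist(\ENV,\oBP)\ge\eps$ then $k\ge\eps nT=2\eps n$, so a uniformly random node is violated with probability at least $2\eps$. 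With $m=\lceil\ln 3/(2\eps)\rceil=O(1/\eps)$ samples the probability that none of them is violated is at most $(1-2\eps)^m\le e^{-2\eps m}\le 1/3$, so the tester rejects with probability at least $2/3$, as required.

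I do not anticipate a real obstacle here: the crux is simply that for $T=2$ the later configuration is a deterministic function of the earlier one, so local violations can be fixed independently and in place. The one thing worth packaging as a separate lemma is the correction bound $\dist(\ENV,\oBP)\le k/(nT)$, since statements of this shape — few local violations $\Rightarrow$ the environment is globally close to $\oBP$ — are exactly what becomes delicate for $T>2$, where repairing one time step can create fresh violations at later steps and this in-place argument breaks down.
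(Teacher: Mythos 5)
Your proof is correct and follows essentially the same approach as the paper: the same sample-and-check-neighborhood algorithm, and the same key observation that for $T=2$ one can repair $\ENV$ by overwriting only the time-$1$ entries at violated nodes (so that the distance is at most the number of violated nodes divided by $nT$), which is exactly the upper-bound half of the paper's \cref{lem:meq2_viol_vs_dist}. The only cosmetic differences are the target error probability ($1/3$ vs.\ $1/2$) and that you phrase the repair as recomputing $\ENV_1$ from $\ENV_0$ rather than flipping violating pairs one by one, which amounts to the same thing.
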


The algorithm simply selects nodes at random and then queries their entire
neighborhoods in both time steps.
Since we are dealing with a local rule, this is sufficient to detect if $\ENV$
contains too many violations of the rule or not.
One detail that needs care here is that, in general, our notion of distance does
\emph{not} match the number of violations of the rule.
Nevertheless, as we show, the cases where it does not are only playing in our
favor, and so this strategy always succeeds.

It turns out that this algorithm is optimal when we are in regimes where there
is a constant $b \ge 2$ such that $\eps = \Omega(\Delta^b/n)$.
We also prove lower bounds for the case where $b \ge 1$, which are especially
useful in regimes where $\Delta$ is larger than $\sqrt{n}$.

\begin{restatable}{theorem}{restateThmLBOneSided}%
  \label{thm:lb-one-sided}
  There is a constant $\eps_0 > 0$ such that the following holds:
  Let $\eps = \Omega(\Delta^b/n)$ be given where $b \ge 1$ is constant, and let
  $\eps \le \eps_0$.
  Then deciding if $\ENV \in \oBP$ or is $\eps$-far from $\oBP$ with a
  \emph{one-sided error} tester requires at least $q$ queries in general, where:
  \begin{enumerate}
    \item If $b > 2$, then $q = \Omega(\Delta/\eps)$ if the tester is
    non-adaptive or $q = \Omega(1/\eps + \Delta)$ if it is adaptive.
    \item If $b = 2$, then $q = \Omega(\Delta/\eps\log\Delta)$ if the tester is
    non-adaptive or $q = \Omega(1/\eps + \Delta/\log\Delta)$ if it is adaptive.
    \item If $1 \le b < 2$, then $q = \Omega(\Delta^{b-1}/\eps)$ if the tester
    is non-adaptive or $q = \Omega(1/\eps + \Delta^{b-1})$ if it is adaptive.
  \end{enumerate}
  Note the lower bounds hold even for adaptive testers in general, \emph{even
  for those that do not respect time-conformability}.
\end{restatable}

If we are only interested in the regime where $\eps$ is constant, then setting
$b = \log_\Delta n$ above we obtain a lower bound of $\Omega(\Delta)$ whenever
$\Delta = O(n^{1/2-c})$ for a constant $c > 0$.
This is matched by the upper bound of \cref{thm:alg-low-degree}.
For $\Delta = \Theta(\sqrt{n})$, the lower bound is $\Omega(\Delta/\log n)$; for
larger $\Delta$ the lower bound becomes $\Omega(n/\Delta)$ and thus deteriorates
as $\Delta$ increases.

The lower bound is based on an adequate construction of expander graphs.
More specifically, the expanders we construct are bipartite, $\Delta$-regular,
and have \emph{distinct} expansion guarantees for sets of nodes on either side.
This is needed because in one direction the expansion is giving the
$\eps$-farness of the instances we create; meanwhile expansion in the other
direction yields the actual lower bound on the number of queries that a correct
algorithm must make.

The hard instances themselves are simple:
We color a moderately large set $B$ of randomly chosen nodes black in the second
time step and leave the rest colored white.
The intuition is that, since $B$ is chosen at random, it will not match nicely
with a cover $C = \bigcup_{u \in S} N(u)$ induced by some set of nodes $S$ in
the first time step; that is, the symmetric difference between $B$ and $C$ will
likely be large, giving us $\eps$-farness.
At the same time, since a one-sided error algorithm $A$ cannot reject good
instances, it is hard for it to detect that there is something wrong with $B$
without having to \enquote{cover} a considerable number of nodes in either step.
Indeed, in order to ascertain that a node $v \in B$ is incorrect, $A$ must
verify that there is no black node in $N(v)$ in the first step; if the existence
of some black $u \in N(v)$ is compatible with its view, then there is no
contradiction to $v$ being black, and hence $A$ cannot reject.
Querying all of $N(v)$ requires $\Omega(\Delta)$ queries, but it is also
possible for $A$ to determine the colors \emph{indirectly} by querying neighbors
of nodes in $N(v)$ (since $u \in N(v)$ having only black neighbors could
indicate that $u$ itself is black or, alternatively, that $\ENV$ does not follow
$\oBP$ but is nevertheless close to doing so).
To obtain the lower bound we show that this other strategy also requires too
many queries---although it might be more efficient when $\Delta$ is large (thus
explaining why we get a weaker result in that case).

For two-sided error algorithms, we are able to prove similar, though slightly
more modest lower bounds.
These are also based on expander graphs but require a more complex set of
instances for the argument to go through.

\begin{restatable}{theorem}{restateThmLBTwoSided}%
  \label{thm:lb-two-sided}
  There are constants $\eps_0, \zeta > 0$ such that, for any $0 < \eps \le
  \eps_0$ with $\eps \ge \zeta\Delta^b/n$ where $b \ge 1$ is constant, deciding
  if $\ENV \in \oBP$ or is $\eps$-far from $\oBP$ with a \emph{non-adaptive,
  two-sided error} tester requires $q$ queries in general, where:
  \begin{itemize}
    \item If $b > 3$, then $q = \Omega(\Delta/\eps)$.
    \item If $b = 3$, then $q = \Omega(\Delta/\eps\log\Delta)$.
    \item If $1 \le b < 3$, then $q = \Omega(\Delta^{(b-1)/2}/\eps)$.
  \end{itemize}
\end{restatable}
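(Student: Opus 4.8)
The plan is to invoke Yao's minimax principle: it suffices to exhibit a distribution $\mathcal{D}_{\mathrm{yes}}$ supported on environments in $\oBP$ and a distribution $\mathcal{D}_{\mathrm{no}}$ supported on environments that are $\eps$-far from $\oBP$ such that, for every fixed set $Q$ of at most $q$ cells, the two distributions induced on the colours of the cells in $Q$ are within total variation distance $1/3$; then no non-adaptive (randomised, two-sided error) tester making $q$ queries can decide the property. As in \cref{thm:lb-one-sided}, both distributions live on top of a bipartite $\Delta$-regular graph with \emph{asymmetric} vertex expansion, one side of the expansion being used to certify $\eps$-farness and the other to bound the information a query set extracts; the expander parameters and a seed-set size $s$ are left free and tuned only at the end to produce the three regimes.

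For $\mathcal{D}_{\mathrm{yes}}$ I would use the obvious family: pick a random set $S$ of roughly $s$ step-$1$ black nodes, and in step $2$ colour the closure $S \cup N(S)$ black and everything else white. This always lies in $\oBP$. To obtain $\mathcal{D}_{\mathrm{no}}$ I would apply a \emph{balanced local perturbation} to the second configuration: recolour $\Theta(\eps n)$ uniformly random cells of $N(S)\setminus S$ white and an equal number of uniformly random cells outside the closure black. Each recoloured cell is a local violation ($\oBP$ forces the former kind to remain black and the latter kind to be white), so this creates $\Theta(\eps n)$ violations; crucially, because exactly as many cells are whitened as blackened, the single-cell colour marginals---in \emph{both} steps---coincide with those of $\mathcal{D}_{\mathrm{yes}}$ up to negligible error, which is what denies a distinguisher the option of simply estimating densities.

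Two claims then need to be established. (i) \emph{$\eps$-farness:} a draw from $\mathcal{D}_{\mathrm{no}}$ is $\eps$-far from $\oBP$ with high probability. One argues that no repair is cheap: correcting the perturbed cells in the second configuration alone already costs $\Theta(\eps n)$ flips, and any repair that instead edits the first configuration so as to \enquote{explain} the spuriously black cells must, by expansion, enlarge the closure and hence create more new discrepancies than it removes; a union bound over all small candidate edit sets, using the expansion bound in the appropriate direction, rules this out. (ii) \emph{Indistinguishability:} conditioned on the colours of the step-$1$ cells in $Q$, each step-$2$ cell $v\in Q$ is either \enquote{legitimately} black (it has a black step-$1$ neighbour, possibly outside $Q$) or \enquote{spuriously} black (it was hit by the perturbation), and one must show that distinguishing these two explanations requires $Q$ to essentially contain a whole neighbourhood $N(v)$ of some step-$2$ query $v$. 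Querying such a neighbourhood costs $\Omega(\Delta)$ and, since only a $\Theta(\eps)$ fraction of the relevant cells is perturbed, it catches a violation with probability only $\Theta(\eps)$, giving $\Omega(\Delta/\eps)$; querying only a $t/\Delta$-fraction of a neighbourhood leaves a $\Theta(1-t/\Delta)$-fraction of the legitimate explanations unresolved and weakens the statistical signal accordingly, and balancing this trade-off---together with the cost of first locating black step-$2$ cells, which is where $s$ and hence the hypothesis $\eps \ge \zeta\Delta^b/n$ enter---yields the general bound and, after optimising $s$, the three cases. Intuitively, the threshold moves from $b=2$ to $b=3$ and the exponent from $b-1$ to $(b-1)/2$ because a two-sided tester may accumulate many individually weak statistical hints, whereas the one-sided tester of \cref{thm:lb-one-sided} needed one conclusive witness, and converting a \enquote{conclusive-witness} lower bound into a \enquote{statistical} one typically costs a square root.

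The main obstacle is claim (ii): one must control the joint law of the answers to $Q$ under $\mathcal{D}_{\mathrm{no}}$, which entangles the shared random seed set $S$, the perturbation, and the graph, and show it stays within total variation $1/3$ of the corresponding law under $\mathcal{D}_{\mathrm{yes}}$ for \emph{every} admissible non-adaptive $Q$. I expect this to demand a genuine case analysis on the \enquote{shape} of $Q$---how many step-$1$ queries lie in neighbourhoods of step-$2$ queries and how fully those neighbourhoods are covered---rather than a one-line hybrid argument, and aligning the resulting constants with the claimed exponents in all three regimes is where most of the work will lie.
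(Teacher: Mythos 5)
Your proposal shares the paper's general framework (Yao's minimax principle on expander graphs with asymmetric expansion, marginal matching), but the construction of the no-distribution is substantially different, and the crucial indistinguishability step---which you flag as the main obstacle---is precisely where your construction diverges from the paper's in a way that makes the argument substantially harder. The paper's $D_N$ does \emph{not} perturb a yes-instance: it sets step~1 to be \emph{all white}, picks a three-times-larger seed set $S$ (each vertex with probability $\alpha/\Delta$ instead of $\alpha/3\Delta$), leaves $S$ white, and colors each cell of $N(S)$ black in step~2 independently with probability $1/3$. This design has two properties your balanced-perturbation construction lacks. First, since step~1 is trivially all-white under $D_N$ (and almost surely white under $D_Y$ by the smallness of $S$), the paper can immediately discard step-1 queries and restrict attention to a set $Q$ of right-vertices; your $\mathcal{D}_{\mathrm{no}}$ keeps $S$ black in step~1, so the conditional law of the step-2 neighbors of a black step-1 query genuinely differs between the two distributions, and you would have to argue separately that exploiting this costs too many queries. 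Second, every black step-2 cell under the paper's $D_N$ lies inside $N(S)$, so the joint law of the step-2 answers factors through the hidden set $S$ in exactly the same way as under $D_Y$; the unique-neighbor expansion then makes the step-2 query answers approximately i.i.d.\ with identical per-cell marginals. By contrast, your blackened cells outside the closure are independent of $S$, breaking this shared correlation structure, and your whitening of a fixed-size random subset of $N(S)$ is not i.i.d.\ (the paper's independent $1/3$-thinning is), so the step-2 coordinates are not exchangeable in the clean way the argument needs.

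There is also a smaller conceptual mismatch: your heuristic that the exponent degrades from $b-1$ to $(b-1)/2$ because converting a \enquote{conclusive-witness} bound to a \enquote{statistical} one typically costs a square root is not the mechanism here. The paper's exponents come from the admissible range $\gamma \le \min\{\Delta/\eps,\sqrt{n/\eps\Delta}\}$ in the expander parameter when instantiating \cref{lem_expander_lower_bound_one_step_construction}, which is exactly what is needed so that the seed set $S$ avoids the $O(rq)$ non-unique neighbors of $Q$ with high probability; the $\sqrt{\cdot}$ is a constraint on how large a query set can still have mostly unique neighbors, not a generic one-sided-to-two-sided transformation.
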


Again focusing on the regime where $\eps$ is constant, we now obtain
$\Omega(\Delta)$ as the lower bound for regimes where $\Delta = O(n^{1/3-c})$
for a constant $c > 0$ or also $\tilde{\Omega}(\Delta)$ when $\Delta =
\Theta(n^{1/3})$.
Hence, given the algorithm of \cref{thm:alg-low-degree}, up to $\Delta =
\Theta(n^{1/3})$ there is essentially \emph{no advantage} for two-sided error
algorithms compared to one-sided error ones.
For larger values of $\Delta$, the lower bound is $\Omega(\sqrt{n/\Delta})$ and
again deteriorates as $\Delta$ increases.

Since we are dealing with two-sided error algorithms, we apply Yao's minimax principle, and we now generate instances
according to two different distributions $D_Y$ and $D_N$ where $D_Y$ follows
$\oBP$ whereas $D_N$ generates instances that are (with high probability) far
from doing so.
The point is that we can show that it is hard to distinguish between $D_Y$ and
$D_N$ without making a considerable number of queries.
The distributions are such that, in both cases, we pick a set $S$ of
$\Theta(\eps n/\Delta)$ vertices in the first time step uniformly at random.
Then we color $S$ and $N(S)$ black in $D_Y$ (and leave the remaining nodes
white) while in $D_N$ we color only \emph{a (constant) fraction} of $N(v)$ for
$v \in S$ black.
(We must also offset the fact that nodes in the second step in $D_N$ are colored
black with less probability by using a larger $S$ when generating $D_N$
instances.)
By the expansion guarantees, this then gives us $\eps$-farness of the instances
in $D_N$.
Observe that in this setting it is meaningless to query nodes in the first step
since only a small fraction of them can ever be black; hence we need only deal
with a set $Q$ of nodes that are queried in the second step.
The indistinguishability of $D_Y$ and $D_N$ follows from using the expansion
guarantee from nodes in the second step to those in the first one.
The argument is that, unless the set $Q$ of queried nodes is large, almost all
neighbors of $Q$ are in fact \emph{unique neighbors} and, moreover, it is
impossible to distinguish $D_Y$ from $D_N$ if the set $S$ only intersects the
unique neighbors of $Q$.
(That is, one can only distinguish $D_Y$ and $D_N$ if one queries \emph{two}
distinct neighbors $u,u' \in N(v)$ of some $v \in S$; due to the expansion
guarantees, this requires a large number of queries.)

In light of these lower bounds, looking back at the algorithm of
\cref{thm:alg-low-degree} we realize that its single weakness is that it does
not perform well when $\Delta$ is large.
Unfortunately our lower bounds do not say as much in that case, and thus a wide
gap is left between lower and upper bounds in that regime.
Nevertheless, we can narrow this gap by using a more complex strategy---if we
are prepared to let go of non-adaptiveness and time-conformability (though we
can still obtain a one-sided error algorithm).
As previously discussed, however, this is not such a large limitation when
taking possible applications into account (as when $T = 2$ it is plausible to,
e.g., require nodes to cache their previous state) and indeed it is offset by
the significant reduction in the query complexity.

\begin{restatable}{theorem}{restateThmAlgLargeDegree}%
  \label{thm:alg-large-degree}
  Let $T = 2$ and let $\eps > 0$ be given.
  There is an \emph{adaptive, one-sided error} algorithm for testing whether
  $\ENV \in \oBP$ or is $\eps$-far from $\oBP$ with query complexity
  \[
    O\left( 
      \frac{\sqrt{n} \log^{3/2} n}{\eps}
    \right).
  \]
\end{restatable}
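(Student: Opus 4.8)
The plan is to build an adaptive tester that, as in \cref{thm:alg-low-degree}, draws $O(1/\eps)$ anchor nodes and hunts for a certificate of non-membership near an anchor; the new ingredient is that each anchor gets an adaptive local search with a query budget of $\tilde{O}(\sqrt{n})$ rather than a blind readout of a full neighborhood. First I would pin down exactly which observations let a \emph{one-sided} tester reject. Writing $X$ and $Y$ for the (unknown) sets of nodes colored black in the first and second step, a partial observation is inconsistent with every $\oBP$ environment precisely in two ways: \textbf{(i)} we see a node $u$ black in step $1$ with some $w \in N[u]$ white in step $2$ (a $2$-query certificate); or \textbf{(ii)} we see a node $v$ black in step $2$ whose entire closed neighborhood $N[v]$ has been \emph{certified white in step $1$}, where $u$ counts as certified white in step $1$ if it was queried white in step $1$ or if $u \in N[w]$ for some $w$ queried white in step $2$ (a step-$2$-white node forces its whole closed neighborhood white in step $1$). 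The reusable combinatorial fact, which I expect is already available from the analysis behind \cref{thm:alg-low-degree}, is that $\dist(\ENV,\oBP) \ge \eps$ implies $\abs{Y \triangle N[X]} \ge 2\eps n$ (apply the cheap repair $X \mapsto X$, $Y \mapsto N[X]$), and $Y \triangle N[X]$ is exactly the set of nodes witnessing a violation of type (i) or (ii); hence at least one of these types occurs $\Omega(\eps n)$ times.

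The heart of the matter is a subroutine that, given an anchor $v$ observed black in step $2$, decides with $\tilde{O}(\sqrt{n})$ queries whether $N[v] \cap X = \emptyset$ (a type-(ii) violation) or not. If $\deg(v) \le \sqrt{n}\log n$, read all of $N[v]$ in step $1$: either every answer is white and we reject, or we hit a step-$1$-black node of $N[v]$ and correctly decline to reject. If $\deg(v) > \sqrt{n}\log n$, we \emph{propagate} instead: repeatedly pick a still-uncertified $u \in N[v]$, query a neighbor $w$ of $u$ in step $2$, and, if $w$ is white, certify all of $N[w]$ as white in step $1$. The point I would aim to establish is a counting argument — exploiting $\sum_x \deg(x) = 2\abs{E} \le n^2$ and the fact that a large neighborhood cannot be covered solely by pairwise-almost-disjoint certifiers — that $\tilde{O}(\sqrt{n})$ white answers suffice to certify all of $N[v]$; black answers are absorbed into separate bookkeeping. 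Balancing the two regimes at the threshold $\deg(v) = \Theta(\sqrt{n\log n})$ yields the per-anchor budget, the extra $\sqrt{\log n}$ factor covering a coupon-collector overhead and a union bound over anchors and over the $\tilde{O}(\sqrt{n})$ propagation steps.

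With this subroutine in hand the tester is: sample $O(1/\eps)$ nodes, query each in both steps, reject on any node $u$ black in step $1$ that is white in step $2$; from each sampled step-$1$-black node run $\tilde{O}(\sqrt{n})$ adaptively guided neighbor queries in step $2$ seeking a white neighbor; from each sampled step-$2$-white node run $\tilde{O}(\sqrt{n})$ such queries in step $1$ seeking a black neighbor (using the same dense-neighborhood propagation idea to handle high-degree anchors); and from each sampled step-$2$-black node run the core subroutine. A case analysis — on whether the prevalent violation type is type (i) realized by a step-$1$-black/step-$2$-white adjacent pair, or type (ii), and on the global density of the step-$2$ coloring — shows that in any $\eps$-far instance at least one of these four searches succeeds with probability $\Omega(\eps)$ per sampled anchor, so $O(1/\eps)$ anchors suffice to reject with probability $2/3$. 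Soundness (never rejecting an $\oBP$ environment) is immediate from the certificate characterization, since the tester rejects only when it has assembled a full type-(i) or type-(ii) certificate.

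I expect the main obstacle to be the analysis of the propagation subroutine under one-sided error. Concretely: (a) the covering/counting estimate must be made to work for arbitrary, non-regular neighborhoods, which likely forces a finer sub-case on the degrees of the neighbors of the anchor $v$ — the troublesome configuration is a high-degree $v$ all of whose neighbors have low degree, where neither a direct readout nor propagation is obviously cheap, and this has to be circumvented either by a global amortization over anchors or by refuting a carefully chosen low-degree neighbor of $v$ instead; and (b) one must handle the regime where almost all of $V$ is black in step $2$, since then there are too few white certifiers to propagate through, and that regime has to be disposed of separately by showing it forces $\Omega(\eps n)$ type-(i) violations. Finally, the budget must be shown large enough to always \emph{complete} a certificate for a genuine violation while the subroutine silently gives up the moment the budget runs out on a non-violating anchor; once all of this goes through, collapsing the accumulated logarithmic factors to exactly $\log^{3/2} n$ is a matter of tightening the coupon-collector and union-bound steps.
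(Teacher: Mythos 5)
Your proposal takes a genuinely different route from the paper, and it has a gap that the paper's own argument is specifically designed to avoid.

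The paper does \emph{not} do per-anchor local search. Instead it queries a single \emph{global} random set $Q_2$ of $\tilde{O}(\sqrt{n}/\eps)$ nodes in step $2$, defines $W_1$ to be every node with a white step-$2$ neighbor in $Q_2$ (certified white in step $1$), and then splits the vertex set into $F = \{v : \abs{N(v)\setminus W_1} \le 4\sqrt{n\log n}\}$ and its complement. Only $F$ is tested by a low-degree-style sampler (reading $N(v)\setminus W_1$ directly for $O(\log(n)/\eps)$ random $v\in F$). The crucial conceptual move is that the algorithm \emph{never tries to certify anything about the nodes outside $F$}: for these nodes the paper proves (\cref{lem:dom-set} applied to the bipartite graph with left side $V\setminus W_1$ and right side $R\setminus F$) that any type-II violations among them can be \emph{repaired} by coloring a dominating set $D$ of size $O(\sqrt{n\log n})$ and its second-step neighborhoods black, costing $< \eps n/2$ recolorings, precisely because every $d\in D$ lies outside $X_1\subseteq W_1$ and therefore has $\wn_2(d) < \theta$. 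So high-residual-degree anchors do not witness $\eps$-farness and may be silently ignored; completeness does not require refuting them.

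This is exactly where your plan breaks down. Your core subroutine must \emph{succeed} on a genuine type-II anchor $v$, but in the configuration you yourself flag --- a high-degree step-$2$-black $v$ all of whose step-$1$ neighbors have, say, degree $2$ --- each $u\in N(v)$ has exactly one step-$2$ neighbor other than $v$, so indirect certification costs one query per $u$ and the per-anchor budget $\tilde{O}(\sqrt{n})$ is hopeless when $\deg(v) = n^{0.9}$. The global bound $\sum_x \deg(x) \le n\Delta$ does not rescue you because it is an average over all nodes, not over $N(v)$, and an adversary concentrates high-degree $v$'s whose neighborhoods consist entirely of low-degree vertices. Your two suggested repairs do not close the gap: \enquote{amortizing over anchors} fails because a single such $v$ exhausts the whole budget and you would then lose completeness on that anchor, and \enquote{refuting a carefully chosen low-degree neighbor of $v$} cannot detect a type-II violation at $v$ (a single white, low-degree neighbor of $v$ is not a certificate that \emph{all} of $N(v)$ is white). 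What is actually needed --- and what the paper provides --- is the converse observation that such a $v$ is cheap to repair, so the tester does not need to reject on account of it at all. Without some version of that repair lemma, the soundness/completeness tradeoff in your scheme (\enquote{the budget must be large enough to always complete a certificate for a genuine violation}) is not satisfiable with $\tilde{O}(\sqrt{n})$ per anchor.

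Two smaller points. First, your proposal requires a separate case for the regime where almost all of $V$ is black in step $2$; the paper does not need a separate case because the construction of $B_2$ from $Q_1$ together with the check $Q_1'\cap B_2$ handles it uniformly. Second, the characterization $\dist(\ENV,\oBP) \ge \eps \Rightarrow \abs{Y\triangle N(X)} \ge 2\eps n$ via the repair $Y\mapsto N(X)$ is sound, but note the paper's correctness argument actually needs the finer split into type-I and type-II violations (\cref{lem:meq2_viol_vs_dist} and the two closing claims of \cref{sec:upper-bound-sqrt-n-eps}), because type-I violations are fixed node-by-node while type-II violations outside $F$ are fixed by the dominating-set cover; collapsing both into a single symmetric-difference bound would lose the structure the argument exploits.
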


The algorithm achieving this query complexity is far less trivial than that of
\cref{thm:alg-low-degree}.
Indeed, it must decide whether $\ENV \in \oBP$ or not \emph{without being able
to query the entire neighborhood of any node}.
To achieve this, the algorithm uses a \enquote{filtering} process in which we
first try to infer the color (assuming $\ENV \in \oBP$) of as much nodes as we
can (in either step) by querying some of their neighbors indirectly.
Since we are certain of which color these nodes must have, we can verify these
separately using a small number of random queries.
By some careful observations, we then realize that we can simply ignore these
nodes afterwards and thus reduce the degree of most of the remaining nodes to
$\tilde{O}(\sqrt{n})$.
This allows us to essentially fall back to a strategy as in the algorithm of
\cref{thm:alg-low-degree}, though a particular corner case requires special
attention.

The results for $T = 2$ and the regime where $\eps$ is constant are summarized
in \cref{fig:results-t-2}.

\begin{figure}
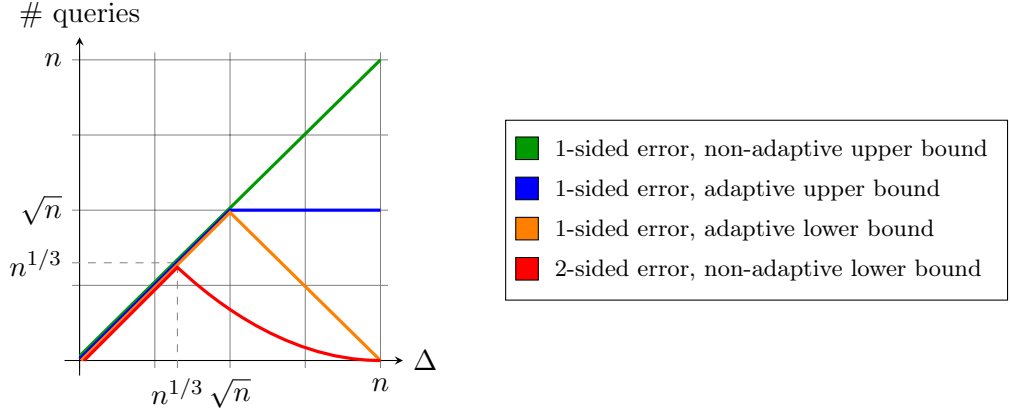

  \centering
  \includestandalone{figs/results_t_2}
  \caption{Summary of results for the case $T = 2$ and constant $\eps$, ignoring
  logarithmic factors}
  \label{fig:results-t-2}
\end{figure}

\subsubsection{Case of General \texorpdfstring{$T$}{T}}

Let us now discuss the case $T > 2$.
Here we obtain a couple of upper bounds that show that the problem admits testing
algorithms with sublinear query complexity, at least in a few regimes of
interest.
We present two algorithms that complement each other and that we discuss next.

We should note at this point that a quick observation shows that the problem
becomes essentially trivial when $T \ge 2\diam(G)/\eps$.
(In a nutshell, this is because otherwise $\ENV \in \oBP$ reaches a fixed point
well before $T$, and thus most configurations of $\ENV$ must all be this one
fixed point.)
Hence for this discussion it should be kept in mind that the problem is only
interesting when $\diam(G)$ is non-trivial and $T = o(\diam(G)/\eps)$.
Furthermore, recall that, since $\ENV$ has $nT$ entries, the benchmark for a
non-trivial testing algorithm is not $o(n)$ but $o(nT)$.

The first algorithm we present is a direct generalization of the one from
\cref{thm:alg-low-degree}.

\begin{restatable}{theorem}{restateThmAlgLowDegreeGeneralT}%
  \label{thm:alg-low-degree-gen-T}
  Let $\eps > 0$ and $T > 2$. 
  There is a \emph{non-adaptive, one-sided error} algorithm that performs
  $O(\Delta^{T-1} / \eps T)$ queries and decides if $\ENV \in \oBP$ or
  $\dist(\ENV,\oBP) \ge \eps$.
\end{restatable}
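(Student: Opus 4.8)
The plan is to mimic the tester of \cref{thm:alg-low-degree}: repeatedly sample a \enquote{local rule check}, query the neighborhood it involves, and reject if the check fails. The one new idea is that the checks should be sampled non-uniformly in time, biased toward earlier transitions, and this bias is what produces the $1/T$ saving. Write $S_t\subseteq V$ for the black set in the $t$-th configuration of $\ENV$, put $N[v]:=N(v)\cup\{v\}$, and let $R(X):=\{v:N[v]\cap X\ne\emptyset\}$ be the $\oBP$ update, so $\ENV\in\oBP$ iff $S_{t+1}=R(S_t)$ for every $t\in\{1,\dots,T-1\}$. For such a $t$, I call $v$ a \emph{failing check at transition $t$} if $v\in S_{t+1}\triangle R(S_t)$, and set $\viol_t:=\abs{S_{t+1}\triangle R(S_t)}$. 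Whether $v$ is a failing check at $t$ is decided by querying $N[v]$ at time $t$ and $v$ at time $t+1$ (at most $\Delta+2$ queries), and $\ENV\in\oBP$ iff no failing check exists; hence a tester that rejects only upon seeing a failing check has one-sided error.

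The crux is an amplification bound. Let $\hat S_1:=S_1$ and $\hat S_{t+1}:=R(\hat S_t)$; the environment with black sets $\hat S_1,\dots,\hat S_T$ lies in $\oBP$, so $\dist(\ENV,\oBP)\cdot nT\le\sum_{t=1}^{T}\abs{S_t\triangle\hat S_t}$. I would prove by induction on $t$ that $\abs{S_t\triangle\hat S_t}\le\sum_{s=1}^{t-1}(\Delta+1)^{t-1-s}\viol_s$: the base case $t=1$ is clear, and the inductive step combines (i) that $R$ is $(\Delta+1)$-Lipschitz for symmetric difference, since $R(X)\triangle R(Y)\subseteq\bigcup_{w\in X\triangle Y}N[w]$ and $\abs{N[w]}\le\Delta+1$, with (ii) $\abs{S_{t+1}\triangle R(\hat S_t)}\le\abs{S_{t+1}\triangle R(S_t)}+\abs{R(S_t)\triangle R(\hat S_t)}\le\viol_t+(\Delta+1)\abs{S_t\triangle\hat S_t}$. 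Summing over $t$ and exchanging the order of summation gives
\[
  \dist(\ENV,\oBP)\cdot nT\;\le\;\sum_{s=1}^{T-1}w_s\,\viol_s,\qquad w_s:=\frac{(\Delta+1)^{T-s}-1}{\Delta}\ge 1.
\]
This is the step I expect to be the main obstacle: getting the Lipschitz estimate and the telescoping exactly right (it should, in particular, collapse to the bound behind \cref{thm:alg-low-degree} at $T=2$, where there is a single transition and $w_1=1$).

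Given the bound, the tester performs $m=\Theta\!\bigl((\Delta+1)^{T-1}/(\eps T\Delta)\bigr)$ independent rounds; each round draws a transition $t\in\{1,\dots,T-1\}$ with probability proportional to $(\Delta+1)^{T-t}$, draws $v\in V$ uniformly, queries $N[v]$ at time $t$ and $v$ at time $t+1$, and rejects if $v$ is a failing check at $t$. It accepts if no round rejects. One-sidedness was argued above. If $\dist(\ENV,\oBP)\ge\eps$, then using $(\Delta+1)^{T-s}=\Delta w_s+1\ge\Delta w_s$ the amplification bound yields $\sum_s(\Delta+1)^{T-s}\viol_s\ge\Delta\eps nT$, so with $Z:=\sum_{t=1}^{T-1}(\Delta+1)^{T-t}\le(\Delta+1)^{T}/\Delta$ the probability that a single round finds a failing check is
\[
  \sum_{t=1}^{T-1}\frac{(\Delta+1)^{T-t}}{Z}\cdot\frac{\viol_t}{n}\;=\;\frac{1}{Zn}\sum_{t=1}^{T-1}(\Delta+1)^{T-t}\viol_t\;\ge\;\frac{\Delta\eps T}{Z}\;\ge\;\frac{\Delta^{2}\eps T}{(\Delta+1)^{T}} ,
\]
so $m$ rounds reject with constant probability (boosted in the usual way). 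Each round makes $O(\Delta)$ queries, for a total of $O\!\bigl((\Delta+1)^{T-1}/(\eps T)\bigr)=O(\Delta^{T-1}/\eps T)$. Finally, the set of queried cells is fixed once the $m$ pairs $(v,t)$ are drawn, so the algorithm is non-adaptive, and issuing its queries in nondecreasing order of time step makes it time-conforming.
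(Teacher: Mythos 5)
Your proposal is correct, and it takes a genuinely different and in fact sharper route than the paper. The paper's algorithm (\cref{alg:m2_gen}) draws the queried pairs \emph{uniformly} from $V\times\{2,\dots,T\}$ and relies on the coarse two-sided bound of \cref{lem:gen-T-relation-viol-dist}, which charges every violation the worst-case cascade length $(\Delta^{T-1}-1)/(\Delta-1)$; in effect this is your bound with every $w_s$ replaced by $\max_s w_s = w_1$. Your proof keeps the per-transition weights $w_s$ separate and then \emph{matches} the sampling distribution to them---an importance-sampling trick the paper does not use. This refinement is not cosmetic: with uniform sampling the per-query probability of hitting a violation is $\abs{\viol(\ENV)}/(n(T-1))$, which is only $\Omega(\eps/\Delta^{T-2})$, so the uniform tester actually needs $\Theta(\Delta^{T-2}/\eps)$ draws (total $O(\Delta^{T-1}/\eps)$ queries), not $\Theta(\Delta^{T-2}/\eps T)$; the paper's inequality $\Pr[Q\cap\viol(\ENV)=\varnothing]\le(1-\eps T/2\Delta^{T-2})^{\abs{Q}}$ silently drops the $T-1$ in the denominator of the hit probability. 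Your biased distribution $\Pr[t]\propto(\Delta+1)^{T-t}$ is exactly what recovers the claimed $1/T$ saving, and your Lipschitz/telescoping derivation of the weighted bound $\dist(\ENV,\oBP)\cdot nT\le\sum_s w_s\viol_s$ is correct (and it does collapse to the $T=2$ bound with $w_1=1$).

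One small inaccuracy to fix: the final step asserts $O((\Delta+1)^{T-1}/\eps T)=O(\Delta^{T-1}/\eps T)$, but $(1+1/\Delta)^{T-1}$ is not $O(1)$ when $\Delta$ is constant and $T$ grows (e.g.\ $\Delta=2$, $T=\Theta(\log n)$ gives a polynomial discrepancy). The fix is painless: in this paper's model, ``black stays black'' is enforced by giving every node a self-loop, so $v\in N(v)$ already and your $N[v]$ is just $N(v)$, of size at most $\Delta$. Running the identical argument with $N(v)$ in place of $N[v]$ gives the Lipschitz constant $\Delta$, weights $w_s=(\Delta^{T-s}-1)/(\Delta-1)$, sampling distribution $\Pr[t]\propto\Delta^{T-t}$, and a final count of $O(\Delta^{T-1}/\eps T)$ queries exactly. (You should also restrict to $\Delta\ge2$, as the paper does; $\Delta=1$ is trivial.)
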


The algorithm is only useful in settings where, say, $T = O(\log_\Delta n)$.
Nevertheless, it is relatively simple to obtain and outperforms our more complex
algorithm in certain regimes.
\begin{restatable}{theorem}{restateThmAlgLowDiam}%
  \label{thm:alg-low-diam}%
  Let $\eps > 0$ and $T \ge 4/\eps$. 
  Then there is a \emph{non-adaptive, one-sided error} algorithm with query
  complexity $O(\abs{E}\log(n)/\eps T)$ that decides whether $\ENV \in \oBP$ or
  $\dist(\ENV,\oBP) \ge \eps$.
  In addition, if $G$ excludes a fixed minor $H$ (which includes the case where
  $G$ is planar or, more generally, $G$ has bounded genus), then $O(\abs{E}/\eps
  T)$ queries suffice.
\end{restatable}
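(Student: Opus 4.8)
The plan is to reduce membership in $\oBP$ to a family of local constraints and then test it by sampling constraints, matching the per-constraint query cost against the budget. Concretely, for a vertex $v$ and a step $t\in\{1,\dots,T-1\}$ the rule imposes a constraint $\mathcal{C}_{v,t}$ depending only on the colors of $\{v\}\cup N(v)$ at step $t$ and of $v$ at step $t+1$: (i) if $v$ is black at $t$ then $v$ is black at $t+1$; (ii) if some $u\in N(v)$ is black at $t$ then $v$ is black at $t+1$; and (iii) if $v$ is black at $t+1$ then $v$ or some $u\in N(v)$ is black at $t$. Thus $\ENV\in\oBP$ iff every $\mathcal{C}_{v,t}$ holds, each $\mathcal{C}_{v,t}$ can be evaluated with $\deg(v)+2=O(\deg(v)+1)$ queries, and the total ``query mass'' of all constraints is $\sum_{v,t}(\deg(v)+2)=\Theta(\abs{E}T)$ (using that $G$ is connected, so $n=O(\abs{E})$).

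The algorithm then samples $k=\Theta(n\log(n)/\eps T)$ pairs $(v,t)$ uniformly at random, queries the cells needed to evaluate each sampled $\mathcal{C}_{v,t}$, and rejects iff some sampled constraint is violated. A violated constraint is a bona fide witness that $\ENV\notin\oBP$, so the tester has one-sided error; scheduling the queries in increasing order of step makes it time-conforming, and it is manifestly non-adaptive. Its expected number of queries is $k\cdot O(1+\abs{E}/n)=O(\abs{E}\log(n)/\eps T)$, and a mild modification (sampling $(v,t)$ with probability $\propto\deg(v)+1$, or discarding the few samples whose neighborhood is far above average) makes this hold with high probability without affecting soundness.

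What remains --- and what I expect to be the crux --- is completeness: if $\dist(\ENV,\oBP)\ge\eps$ and $T\ge4/\eps$, then the number $V$ of violated constraints is $\Omega(\eps T^2/\log n)$, which makes the probability of sampling one of them $\Omega(1)$. Equivalently, few violations implies closeness, which I would prove by a repair argument. The ``dirty'' steps carrying a violation number at most $V$ and split $\{1,\dots,T\}$ into at most $V+1$ ``clean'' intervals, on each of which $\ENV$ restricted to those steps is an honest $\oBP$-trajectory; since such a trajectory reaches a fixed point within $\diam(G)$ steps, a clean interval is a short transient followed by many repetitions of a single configuration. Stitching consecutive clean intervals together --- splicing the $\oBP$-continuation of the earlier interval onto the later one, and exploiting that black is absorbing and spreads monotonically toward a per-component fixed point --- should cost $\tilde{O}(n/T)$ cells amortized per dirty step; together with an additive $O(n)$ (to fix the first configuration, say), which $T\ge4/\eps$ lets us absorb into $\frac{\eps}{2}nT$, this gives $\dist(\ENV,\oBP)\le\tilde{O}(Vn/T)+O(n)$ and hence the claimed bound on $V$.

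The delicate point is reconciling clean intervals in the presence of \emph{spontaneous ignitions} --- vertices that turn black without a black neighbor at the previous step: igniting them already at step $1$ makes the spliced trajectory overshoot later configurations, whereas deferring their ignition reintroduces dirty transitions, and controlling this trade-off is where the $\log n$ factor enters, through a covering argument over the ignited components. When $G$ excludes a fixed minor $H$ this can be sharpened, since such graphs have bounded degeneracy (and $\abs{E}=O(n)$): a direct charging scheme then replaces the covering argument, removing the $\log n$ and yielding $V=\Omega(\eps T^2)$ and query complexity $O(\abs{E}/\eps T)$.
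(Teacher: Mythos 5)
Your approach is genuinely different from the paper's, and unfortunately it contains a gap that I believe is fatal: the completeness claim that $\dist(\ENV,\oBP)\ge\eps$ forces $\Omega(\eps T^2/\log n)$ violated constraints is false.

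Here is a concrete counterexample. Let $G$ be a path $v_1,\dots,v_n$, take $T=\Theta(n)$ (which satisfies $T\ge 4/\eps$ for any constant $\eps$), and set $\ENV(v_j,t)=1$ for all $j\le n/4$ and all $t$, and $\ENV(v_j,t)=0$ otherwise. The only violating pairs are $(v_{n/4+1},t)$ for $t\ge 2$ (white node with a persistently black neighbor), so $|\viol(\ENV)|=T-1=\Theta(T)$. Yet $\dist(\ENV,\oBP)=\Theta(1)$: any valid trajectory with a black vertex at step $1$ spreads black monotonically along the path, so the cheapest repair is to whiten the entire prefix, costing $\Theta(nT)$ cells. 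For $T=\Theta(n)$ and $\eps$ constant, your claimed bound would require $\Omega(T^2/\log n)=\Omega(n^2/\log n)$ violations, whereas only $\Theta(n)$ exist. The fraction of violated constraints is $\Theta(1/n)$, so the $\Theta(\log n/\eps)$ samples your budget affords will, with overwhelming probability, see no violation, and the tester accepts an $\eps$-far instance. (This is also consistent with the paper's \cref{lem:gen-T-relation-viol-dist}, whose lower bound on $|\viol(\ENV)|$ in terms of $\dist$ has a $\Delta^{T-2}$ in the denominator and is essentially vacuous once $T$ exceeds $\log_\Delta n$.)

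The underlying problem is a dimensional mismatch that your repair sketch does not resolve: violations are localized in space (a single \enquote{frozen boundary} vertex) but persistent in time, while the distance a repair must pay accumulates over space and time together. Your clean-interval argument does not get off the ground here, since every step $t\ge 2$ is dirty, so there is a single trivial clean interval to stitch. The paper avoids this trap by not sampling constraints at all: it computes a $(d,\alpha)$-decomposition of $G$ with $d=\Theta(\eps T)$ (Bartal's $\alpha=O(\log n/d)$ in general, $O(1/d)$ for $H$-minor-free graphs), queries the $O(\alpha|E|)$ boundary vertices once at time $t_1=\lfloor\eps T/4\rfloor$, and exploits the fact that $\oBP$ converges within $\diam(V_i)\le t_1$ steps inside each component. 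This lets it predict all but an $O(\eps)$ fraction of $\ENV$ from the boundary snapshot, after which $O(1/\eps)$ uniformly random spot-checks suffice. In my counterexample, the snapshot at $t_1$ would reveal a black boundary vertex near $v_{n/4}$, forcing the prediction that the whole suffix becomes black shortly thereafter, and the spot-checks catch the all-white suffix immediately. So the boundary-and-predict mechanism is not just a different route to the same query budget; it is what makes the theorem true in the regime where violated constraints are scarce.
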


To better judge what this algorithm achieves, let us suppose that the underlying
graph is $\Delta$-regular, in which case $\abs{E} = n\Delta$.
Then this gives a non-trivial testing algorithm whenever $T =
\omega(\sqrt{(\Delta/\eps) \log n})$ (or $T = \omega(\sqrt{\Delta/\eps})$ if we
also assume $G$ is planar).
Hence, together with \cref{thm:alg-low-degree-gen-T}, we obtain non-trivial
testing algorithms in the regime where $\Delta = o(\log n)$ (or even $\Delta =
o(\log^2 n)$ in planar graphs) and \emph{for all values of $T$}.

The algorithm of \cref{thm:alg-low-diam} combines some ideas from the work of
\textcite{nakar21_back_icalp} with \emph{graph decompositions}.
A graph decomposition is a set $C$ of edges which cuts the graph into components
pairwise disjoint components $V_1,\dots,V_r$ of small diameter.
In our case the appropriate choice of diameter will be $d = O(\eps T)$.
The basic approach is to query the endpoints of $C$ after $d$ steps have
elapsed and then use this view to predict the colors of every node in the graph
in the subsequent steps.
As we show, the view actually suffices to predict all but at most an $O(\eps)$
fraction of $\ENV$ (and hence we need only query the predicted values using
$O(1/\eps)$ independent queries to check if $\ENV$ is following $\oBP$ or not).
We refer to \cref{sec:alg-low-diam} for a more in-depth description of the
strategy and the ideas involved.

\subsection{Open Problems}

Since this work is but a first step in an unexplored direction, several
questions remain open:
\begin{itemize}
  \item \emph{The case $T = 2$ and large $\Delta$.}
  The algorithm of \cref{thm:alg-low-degree} is essentially optimal up to
  $\Delta = O(\sqrt{n})$ (if we consider only one-sided error algorithms), but
  for larger values of $\Delta$ the best we have is the
  $\tilde{O}(\sqrt{n})$-query algorithm of \cref{thm:alg-large-degree}.
  Can we reduce this, say,  to $\tilde{O}(\Delta^{b-1}/\eps)$ for $\eps =
  \Omega(\Delta^b/n)$ so as to match the lower bound of \cref{thm:lb-one-sided}?
  Is it really necessary to give up time-conformity in order to do better than
  $O(\Delta/\eps)$ in this setting?
  In addition, improving our lower bounds in the case of (both one- and
  two-sided error) adaptive algorithms seems well within reach.
  \item \emph{The case $T > 2$.}
  Our results for the case of general $T$ show that we can get non-trivial
  algorithms for graphs of small degree (e.g., $\Delta = o(\log n)$).
  Given the difficulties in the case $T = 2$, it is not surprising that larger
  values of $\Delta$ pose additional challenges.
  In this sense a first step in this direction would be to port the lower bounds
  from the $T = 2$ case.
  Nevertheless, it is not immediately clear how to do so since $\eps$-farness
  there is even harder to achieve given the cascading effects that might occur
  over multiple time steps (see in particular
  \cref{lem:gen-T-relation-viol-dist}).
  \item \emph{Testing other rules.}
  Finally, from a broader perspective it would also be meaningful to consider
  other rules than $\oBP$.
  Of course, by inverting the roles of $0$ and $1$, all of our results also hold
  for AND rule (i.e., a node becomes a $1$ if and only if all its neighbors are
  $1$; otherwise it becomes a $0$).
  Some very natural and interesting rules to consider next are, for instance,
  $\tauBP$ or the majority rule.
  There has been extensive study of these rules in other contexts \cite{%
    gregorio09_bootstrap_ecss,zehmakan19_tight_lata,janson12_bootstrap_aap,%
    maldonado24_local_sofsem,zehmakan21_majority_aaai,balogh07_bootstrap_rsa,%
    gartner18_majority_latin, frischknecht13_convergence_disc,%
    kaaser16_voting_mfcs%
  }, and so there is solid ground to build on there.
\end{itemize}

\subsection{Paper Overview}

The rest of the paper is structured as follows:
In \cref{sec:preliminaries} we introduce basic notation, review some standard
graph-theoretic results that we need, and formally specify the model and problem
we study.
The three sections that follow each cover one part of the results.
On the case $T = 2$ in \cref{sec:alg-T-2} we address the two algorithms 
(\cref{thm:alg-low-degree,thm:alg-large-degree}) and in \cref{sec:lower-bounds}
the two lower bounds (\cref{thm:lb-one-sided,thm:lb-two-sided}).
Finally in \cref{sec:alg-gen-T} we discuss the two algorithms for the case $T >
2$ (\cref{thm:alg-low-degree-gen-T,thm:alg-low-diam}).


\section{Preliminaries}
\label{sec:preliminaries}

The set of non-negative integers is denoted by $\N_0$ and that of strictly
positive integers by $\N_+$.
For $n \in \N_+$, we write $[n] = \{ i \in \N_+ \mid i \le n \}$ for the set of
the first $n$ positive integers.
Without ambiguity, for a statement $S$, we write $[S]$ for the indicator
variable of $S$ (i.e., $[S] = 1$ if $S$ holds; otherwise, $[S] = 0$).

An event is said to occur with high probability if it occurs with probability $1
- o(1)$.
For a set $X$, we write $U_X$ to denote a random variable that takes on values
from $X$ following the uniform distribution on $X$.
We assume the reader is familiar with basic notions of discrete probability
theory (e.g., Markov's inequality and the union bound).
We will use the following version of the Chernoff bound (see, e.g.,
\cite{goldreich08_computational_book,vadhan12_pseudorandomness_book}):

\begin{theorem}[Chernoff bound] \label{thm_chernoff}%
  Let $n \in \N_+$ and $\eps > 0$, and let $X_1,\dots,X_n$ be independent and
  identically distributed random variables taking values in the interval
  $[0,1]$.
  Then, for $X = (\sum_{i=1}^n X_i)/n$ and $\mu = \E[X]$,
  \[
    \Pr\left[ \abs{X - \mu} > \eps \right] < 2 e^{-n \eps^2 / 3 \mu}.
  \]
\end{theorem}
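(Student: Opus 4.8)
The plan is to establish this by the standard moment-generating-function (Bernstein/Chernoff) argument, in the regime $\eps \le \mu$ in which the bound is meaningful, i.e., where the relative deviation $\delta := \eps/\mu$ satisfies $0 < \delta \le 1$. First I would split the two-sided estimate into the two one-sided tails $\Pr[X - \mu > \eps]$ and $\Pr[\mu - X > \eps]$, bound each of them by $e^{-n\eps^2/3\mu}$, and let the union bound supply the factor $2$. Throughout, write $S = \sum_{i=1}^n X_i$, so that $\E[S] = n\mu$ and the upper tail becomes $\Pr[S > (1+\delta)n\mu]$.

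For the upper tail, for any $t > 0$ Markov's inequality applied to $e^{tS}$, together with independence of the $X_i$, gives $\Pr[S \ge (1+\delta)n\mu] \le e^{-t(1+\delta)n\mu}\prod_{i=1}^n \E[e^{tX_i}]$. Since each $X_i$ lies in $[0,1]$, convexity of $x \mapsto e^{tx}$ yields the pointwise bound $e^{tX_i} \le 1 + X_i(e^t - 1)$, hence $\E[e^{tX_i}] \le 1 + \mu(e^t - 1) \le e^{\mu(e^t - 1)}$ and $\E[e^{tS}] \le e^{n\mu(e^t - 1)}$. Substituting and optimizing over $t$ (the optimal choice is $t = \ln(1+\delta)$) gives the familiar estimate $\Pr[S \ge (1+\delta)n\mu] \le \bigl(e^\delta/(1+\delta)^{1+\delta}\bigr)^{n\mu}$. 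The lower tail is handled symmetrically, applying Markov to $e^{-tS}$ with $t > 0$ and optimizing at $t = -\ln(1-\delta)$, which yields $\Pr[S \le (1-\delta)n\mu] \le \bigl(e^{-\delta}/(1-\delta)^{1-\delta}\bigr)^{n\mu}$.

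It then remains to convert these into the stated exponential form, for which I would invoke the elementary inequalities $(1+\delta)\ln(1+\delta) - \delta \ge \delta^2/3$ and $(1-\delta)\ln(1-\delta) + \delta \ge \delta^2/2 \ge \delta^2/3$, both valid for $0 \le \delta \le 1$; with $\delta = \eps/\mu$ this puts $n\mu \cdot \delta^2/3 = n\eps^2/3\mu$ into the exponent, as required, and summing the two tails gives the factor $2$. The only non-routine point — the main obstacle — is the calculus inequality $(1+\delta)\ln(1+\delta) - \delta \ge \delta^2/3$ on $[0,1]$ (its lower-tail counterpart being easier). I would prove it by noting both sides, together with their first derivatives, vanish at $\delta = 0$, so it reduces to comparing second derivatives: $\frac{1}{1+\delta} - \frac{2}{3} \ge 0$ settles $\delta \le 1/2$, and the interval $[1/2,1]$ is dispatched by a direct numerical estimate (or, more slickly, via the standard bound $(1+\delta)\ln(1+\delta) - \delta \ge \frac{\delta^2}{2 + 2\delta/3}$, which exceeds $\delta^2/3$ exactly when $\delta \le 3/2$). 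Since this is a textbook fact, the cleanest route in the paper itself is simply to cite \cite{goldreich08_computational_book,vadhan12_pseudorandomness_book}; the sketch above merely records how one would reconstruct it from scratch.
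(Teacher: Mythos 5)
The paper does not actually prove this statement: immediately before it, the text reads \enquote{We will use the following version of the Chernoff bound (see, e.g., \dots)}, and the theorem is then stated without proof as a citation to standard references. You correctly identified this at the end of your proposal. Your sketch is the standard and correct moment-generating-function argument: Markov applied to $e^{\pm tS}$, the bound $\E[e^{tX_i}]\le e^{\mu(e^t-1)}$ from convexity of $x\mapsto e^{tx}$ on $[0,1]$, optimization at $t=\pm\ln(1\pm\delta)$, and the elementary inequalities $(1+\delta)\ln(1+\delta)-\delta\ge\delta^2/3$ and $(1-\delta)\ln(1-\delta)+\delta\ge\delta^2/2$ for $0\le\delta\le1$. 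One thing worth making explicit, which you flagged only in passing: the inequality $(1+\delta)\ln(1+\delta)-\delta\ge\delta^2/3$ genuinely \emph{fails} once $\delta=\eps/\mu$ exceeds roughly $3/2$ (e.g.\ at $\delta=2$ the left side is $3\ln3-2\approx1.296$ while the right side is $4/3$), so the bound as stated is really a $\delta\le1$ statement; the paper leaves this restriction implicit, as do many textbooks, and it is harmless in the paper since $X\in[0,1]$ forces the lower tail to vanish when $\eps>\mu$ and the applications all use small relative deviations. Your calculus verification of the key inequality on $[0,1]$ via second derivatives (for $\delta\le1/2$) plus a boundary check (for $[1/2,1]$) is sound, as is the slicker route through the Bernstein-type bound $\delta^2/(2+2\delta/3)$.
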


\subsection{Graph Theory}
We consider only undirected graphs.
Except when explicitly written otherwise, we always write just \enquote{graph}
for a simple graph, though self-loops are allowed.

Let $G = (V,E)$ be a graph.
For $S \subseteq V$, $G[S]$ denotes the subgraph of $G$ induced by $S$.
For two nodes $u,v \in V$, $\dist_G(u,v)$ is the length of the shortest path
between $u$ and $v$; we drop the subscript if $G$ is clear from the context.
The \emph{diameter} $\diam(G)$ of $G$ is the maximum length among all shortest
paths between any pair of vertices $u,v \in V$, that is, $\diam(G) = \max_{u,v
\in V} \dist(u,v)$.
This notion extends to any $V' \subseteq V$ by considering only pairs of
vertices in $V'$, that is, $\diam(V') = \max_{u,v \in V'}
\dist(u,v)$.\footnote{%
  This is referred to as the \emph{weak} diameter.
  An alternative notion where we restrict not only the endpoints but also the
  inner vertices of the paths also exists and is called the \emph{strong}
  diameter.
  In this paper we work only with the weak diameter.
}
We write $\delta(G)$ for the minimum degree of $G$ and $\Delta(G)$ for the
maximum one.
If $G$ is clear from the context, we simply write $\delta$ and $\Delta$,
respectively.
If $\delta = \Delta$, then $G$ is \emph{$\Delta$-regular}.


For a node $v \in V$, $N(v) = \{ u \in V \mid uv \in E \}$ denotes the set of
\emph{neighbors} of $v$.
Generalizing this notation, for a set $S \subseteq V$ we write $N(S)$ for the
union $\bigcup_{v \in S} N(v)$.
A vertex $u \in V$ is said to be a \emph{unique neighbor} of $S$ if there is a
\emph{unique} $s \in S$ such that $us \in E$.
When $S$ is clear from the context, we also refer to a unique neighbor of $v \in
S$ as a node $u \in V$ for which $u \in N(v')$ if and only if $v' \notin S$ or
$v' = v$.

A graph $G = (V,E)$ is \emph{bipartite} if $V = L \cup R$ for disjoint sets $L$
and $R$ and any edge has exactly one endpoint in $L$ and one in $R$.
In this context, we refer to the nodes of $L$ as \emph{left-} and to those of
$R$ as \emph{right-vertices}.
Additionally, the graph is \emph{balanced} if $\abs{L} = \abs{R}$.

The following is a spin-off of a well-known result on the size of the dominating
set of a graph (see, e.g., \cite{alon08_probabilistic_book}):

\begin{lemma}[Cover from minimum degree]%
  \label{lem:dom-set}
  Let $G = (V,E)$ be a bipartite graph where each right-vertex has degree at
  least $\delta$.
  Then there is a set $D$ of $n \log(n)/\delta$ left-vertices such that every
  right-vertex has a neighbor in $D$.
\end{lemma}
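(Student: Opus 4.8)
The plan is to use the classic probabilistic argument for dominating sets---a random sample followed by a greedy repair step---adapted to this bipartite ``cover'' setting, where it suffices to dominate the right-vertices $R$ using only left-vertices from $L$. (Here $n = \abs{V} = \abs{L} + \abs{R}$ as usual, and I take $\log$ to be the natural logarithm; the base only affects constants.) If $R = \emptyset$ the claim is vacuous, so assume $R \neq \emptyset$; then $\abs{L} \ge \delta$, since the (distinct) neighbors of any right-vertex all lie in $L$. I would also dispose of the degenerate case $\delta \le \log n$ right away: there $n\log(n)/\delta \ge n \ge \abs{L}$, so $D = L$ works. Hence assume $\delta > \log n$, which makes $p := \log(n)/\delta$ a valid probability in $(0,1)$.

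Next, form a random set $D' \subseteq L$ by placing each left-vertex into $D'$ independently with probability $p$. Fix a right-vertex $v$. Since $N(v) \subseteq L$ with $\abs{N(v)} \ge \delta$, the probability that $D'$ misses all neighbors of $v$ is $(1-p)^{\abs{N(v)}} \le (1-p)^{\delta} \le e^{-p\delta} = 1/n$. Letting $B \subseteq R$ be the (random) set of right-vertices with no neighbor in $D'$, linearity of expectation gives $\E[\abs{B}] \le \abs{R}/n$.

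Then repair: for each $v \in B$ pick an arbitrary neighbor $w_v \in N(v)$ and set $D := D' \cup \{ w_v \mid v \in B \}$. By construction every right-vertex has a neighbor in $D$, and $\abs{D} \le \abs{D'} + \abs{B}$. Taking expectations, $\E[\abs{D}] \le p\abs{L} + \E[\abs{B}] \le \frac{\log n}{\delta}\abs{L} + \frac{\abs{R}}{n} \le \frac{\log n}{\delta}(\abs{L} + \abs{R}) = \frac{n\log n}{\delta}$, where the penultimate step uses $1/n \le \log(n)/\delta$ (which holds since $\delta \le \abs{L} \le n$ and $\log n \ge 1$). As $\E[\abs{D}] \le n\log(n)/\delta$, some realization of $D'$ yields a set $D$ of size at most $n\log(n)/\delta$ that covers every right-vertex, which proves the lemma.

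This is essentially a textbook argument, so I do not anticipate a genuine obstacle; the only points needing (minor) care are calibrating the sampling probability $p$ so that $\E[\abs{D'}]$ and $\E[\abs{B}]$ are simultaneously controlled, and retaining the term $\abs{R}/n$ from the repair step (rather than crudely bounding it by $1$) so that the final bound comes out exactly as $n\log(n)/\delta$ instead of carrying a spurious additive constant. An equivalent, fully deterministic phrasing---repeatedly adjoining a left-vertex that covers the largest number of still-uncovered right-vertices---would also work but is slightly more cumbersome to analyze.
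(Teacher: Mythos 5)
Your proof is correct but takes a genuinely different (though equally standard) probabilistic route from the paper. The paper draws a single random set $D$ of size exactly $m = n\log(n)/\delta$ left-vertices, shows that any fixed right-vertex is left uncovered with probability less than $1/n$, and finishes with a union bound over the at most $n$ right-vertices; there is no repair step, and existence follows because the failure probability is strictly below $1$. You instead use the classic ``alteration'' (delete-and-repair) argument from the probabilistic method: Bernoulli-sample each left-vertex independently with probability $p = \log(n)/\delta$, bound the expected number of uncovered right-vertices, adjoin one arbitrary neighbor per uncovered right-vertex, and show that the expected total size stays within the budget. The paper's argument is shorter and directly produces a set of the stated size, but glosses over corner cases (what if $\abs{L} < m$, making without-replacement sampling ill-defined, or if $\delta$ is so small that $m \ge n$?); your version buys explicit and clean handling of those degenerate regimes ($R = \emptyset$ and $\delta \le \log n$) at the cost of a slightly longer accounting step to absorb the repair term. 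Both are sound; you merely used the Alon--Spencer-style alteration rather than the union-bound-on-a-fixed-size-sample that the paper opted for.
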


\begin{proof}
  We prove the claim using the probabilistic method.
  Fix a right-vertex $v \in V$.
  If we pick a set $D$ of $m = n \log(n)/\delta$ left-vertices uniformly at
  random, then the probability that $N(v) \cap D$ is empty is at most
  $( 1 - \delta/n )^m < e^{-\log n} < 1/n$.
  Hence, by the union bound, there is a non-zero probability that $D$ is such
  that $N(v) \cap D$ is non-empty for every right-vertex $v$.
\end{proof}

\paragraph{Expander graphs.}
In general, a graph $G = (V,E)$ is said to be an \emph{expander graph} if we
have $\abs{N(S)} \ge (\Delta - r) \abs{S}$ for every set $S \subseteq V$ where
$\abs{S} \le K$, for particular values of $r$ and $K$.
(Ideally $K$ is as large and $r$ as small as possible.)
Expander graphs have found ample application in diverse areas of theoretical
computer science (see, e.g., \cite{hoory06_expander_bams} for a survey), and
property testing is no different \cite{bhattacharyya22_property_book}.
We will need a couple of consequences of this property.

\begin{lemma}[Unique neighbors from expansion]
  \label{lem_expanders_unique_neighbors}%
  Let $G = (V,E)$ be a graph and $S \subseteq V$ be a set with $m$ outgoing
  edges and $\abs{N(S)} \ge (1-\alpha) m$ for some $\alpha \ge 0$.
  Then $S$ has at least $(1-2\alpha) m$ unique neighbors.
\end{lemma}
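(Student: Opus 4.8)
The plan is to bound the number of non-unique neighbors of $S$ and show it is at most $2\alpha m$. Let me set up the counting. For each vertex $u \in N(S)$, let $d(u)$ denote the number of edges from $S$ to $u$ (its degree into $S$). Since $S$ has exactly $m$ outgoing edges, we have $\sum_{u \in N(S)} d(u) = m$. A vertex $u \in N(S)$ is a unique neighbor precisely when $d(u) = 1$; otherwise $d(u) \ge 2$. Write $N(S) = U \cup W$ where $U$ is the set of unique neighbors and $W = N(S) \setminus U$ the set of vertices hit at least twice. The goal is to show $|U| \ge (1-2\alpha)m$.

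First I would split the edge-count sum according to this partition: $m = \sum_{u \in U} d(u) + \sum_{u \in W} d(u) = |U| + \sum_{u \in W} d(u)$, using $d(u) = 1$ on $U$. Since every $u \in W$ has $d(u) \ge 2$, we get $\sum_{u \in W} d(u) \ge 2|W|$, hence $m \ge |U| + 2|W|$. On the other hand, $|N(S)| = |U| + |W| \ge (1-\alpha)m$ by hypothesis, so $|W| \ge (1-\alpha)m - |U|$. Substituting this lower bound on $|W|$ into $m \ge |U| + 2|W|$ gives $m \ge |U| + 2\bigl((1-\alpha)m - |U|\bigr) = 2(1-\alpha)m - |U|$, which rearranges to $|U| \ge 2(1-\alpha)m - m = (1-2\alpha)m$, as desired.

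There is essentially no hard part here — the argument is a one-line double-counting of the edges leaving $S$, combined with the expansion hypothesis. The only point requiring a small amount of care is the bookkeeping when $S$ contains self-loops or when $G$ is a multigraph, but since the paper works with simple graphs (self-loops allowed) and the notion of "outgoing edges" is counted with multiplicity in the natural way, the inequality $\sum_{u \in N(S)} d(u) = m$ and the bound $d(u) \ge 2$ for non-unique $u$ both hold verbatim, so the computation goes through unchanged. One could also phrase the final step as: the "excess" $m - |N(S)| \le \alpha m$ counts (with multiplicity at least one each) the repeated incidences, and each non-unique neighbor contributes at least one unit of excess, so $|W| \le \alpha m$ and therefore $|U| = |N(S)| - |W| \ge (1-\alpha)m - \alpha m = (1-2\alpha)m$; this is the same bound obtained more directly.
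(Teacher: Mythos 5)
Your proof is correct and follows essentially the same double-counting argument as the paper: partition $N(S)$ into unique neighbors $U$ and the rest $W$, observe $m \ge |U| + 2|W|$, combine with $|U| + |W| = |N(S)| \ge (1-\alpha)m$, and solve for $|U|$. The paper's version is just a more compact rendering of the identical computation.
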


\begin{proof}
  Let $U$ be the set of unique neighbors of $S$ and $B = N(S) \setminus U$.
  Since the nodes in $B$ have at least $2$ incident edges originating from $S$,
  we have
  \[
    m \ge \abs{U} + 2 \abs{B} = 2 \abs{N(S)} - \abs{U} \ge 2(1-\alpha)m -
    \abs{U}.
  \]
  Solving for $\abs{U}$ yields the statement.
\end{proof}

\begin{restatable}[Upper bounds on number of common neighbors]%
  {lemma}{restateLemExpandersIntersection}%
  \label{lem_expanders_intersection}%
  Let $G = (V,E)$ be a graph and $\alpha \le 1/2$ be such that $\abs{N(S)} \ge
  (1-\beta)\Delta\abs{S}$ holds for every $\abs{S} \le \alpha n$.
  In addition, let \emph{disjoint} subsets $S,S' \subseteq V$ be given.
  The following holds:
  \begin{enumerate}
    \item If $\abs{S} + \abs{S'} \le \alpha n$, then
    \[
      \abs{N(S) \cap N(S')} \le \beta \Delta (\abs{S} + \abs{S'}). 
    \]
    \item Let $\beta < 1/2$, $\abs{S} \le \alpha n / 2$, and $\abs{S'} \ge
    \abs{S}$.
    Then
    \[
      \Pr_{v \in U_{S'}}\left[ \abs*{N(v) \cap N(S)} > 2\beta\Delta \right]
      \le \frac{\abs{S}}{\abs{S'}};
    \]
    that is, for all but at most $\abs{S}$ nodes $v \in S'$, $\abs{N(v) \cap
    N(S)} \le 2\beta\Delta$.
  \end{enumerate}
\end{restatable}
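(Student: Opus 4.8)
The plan is to deduce both parts from the expansion hypothesis applied to the \emph{union} of the two sets in question, combined with the trivial bound $\abs{N(W)} \le \Delta\abs{W}$ and elementary inclusion--exclusion. For the first part, I would observe that, since $S$ and $S'$ are disjoint and $\abs{S}+\abs{S'} \le \alpha n$, the hypothesis gives $\abs{N(S)\cup N(S')} = \abs{N(S\cup S')} \ge (1-\beta)\Delta(\abs{S}+\abs{S'})$. On the other hand, inclusion--exclusion together with $\abs{N(S)} \le \Delta\abs{S}$ and $\abs{N(S')} \le \Delta\abs{S'}$ yields $\abs{N(S)\cup N(S')} = \abs{N(S)}+\abs{N(S')}-\abs{N(S)\cap N(S')} \le \Delta(\abs{S}+\abs{S'}) - \abs{N(S)\cap N(S')}$. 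Chaining the two inequalities and solving for $\abs{N(S)\cap N(S')}$ gives exactly $\abs{N(S)\cap N(S')} \le \beta\Delta(\abs{S}+\abs{S'})$. This part is purely routine.

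For the second part, set $T = \{\, v \in S' : \abs{N(v)\cap N(S)} > 2\beta\Delta \,\}$, so that the claimed probability bound is equivalent to $\abs{T} \le \abs{S}$. If $\abs{T} \le \abs{S}$ we are done, so suppose $\abs{T} \ge \abs{S}$ and fix a subset $T' \subseteq T$ with $\abs{T'} = \abs{S}$; since $\abs{S} \le \alpha n/2$ we get $\abs{S}+\abs{T'} = 2\abs{S} \le \alpha n$, and $S$ is disjoint from $T' \subseteq S'$, so expansion applies: $\abs{N(S\cup T')} \ge (1-\beta)\Delta(\abs{S}+\abs{T'})$. The key point is the \emph{upper} bound on $\abs{N(S\cup T')}$: each $v \in T'$ has more than $2\beta\Delta$ of its at most $\Delta$ neighbors inside $N(S)$, hence strictly fewer than $(1-2\beta)\Delta$ neighbors outside $N(S)$; summing over $v \in T'$ gives $\abs{N(T')\setminus N(S)} < (1-2\beta)\Delta\abs{T'}$, and since $N(S\cup T')$ is the disjoint union of $N(S)$ and $N(T')\setminus N(S)$ we obtain $\abs{N(S\cup T')} = \abs{N(S)} + \abs{N(T')\setminus N(S)} < \Delta\abs{S} + (1-2\beta)\Delta\abs{T'}$. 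Comparing with the lower bound, dividing through by $\Delta$, and rearranging, the $\abs{S}$- and $\abs{T'}$-terms collapse to $\beta\abs{T'} < \beta\abs{S}$; since $\beta > 0$ (the degenerate case $\beta = 0$ is trivial, as Part~1 already forces $N(S)\cap N(S') = \emptyset$ and hence $T = \emptyset$), this reads $\abs{T'} < \abs{S}$, contradicting $\abs{T'} = \abs{S}$. Therefore $\abs{T} \le \abs{S}$, which is the claim.

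I do not anticipate a genuine obstacle; the only step that needs a little care is the count of neighbors of $v$ lying outside $N(S)$ in the second part --- specifically, counting \emph{neighbors} rather than incident edges and using $\deg(v) \le \Delta$ rather than equality (the hypothesis $\beta < 1/2$ keeps this bound meaningful). One could instead prove Part~2 by a direct double-counting of the edges between $T$ and $N(S)$, but the contradiction argument above is cleaner since it reuses verbatim the two ingredients of Part~1: expansion of a union of two small sets, and $\abs{N(W)} \le \Delta\abs{W}$.
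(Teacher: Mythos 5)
Your proof is correct and follows essentially the same route as the paper's: Part~1 is verbatim the same (expansion on $S\cup S'$ plus inclusion--exclusion), and Part~2 uses the same key step of applying the expansion hypothesis to $S$ together with a size-$\abs{S}$ subset of $S'$, combined with $\abs{N(S)}\le\Delta\abs{S}$. The only difference is cosmetic: the paper phrases Part~2 as an averaging argument over an arbitrary $S''\subseteq S'$ followed by ``excluding one by one,'' whereas you set up the bad set $T$ explicitly and derive a direct contradiction, which is arguably the cleaner packaging of the identical idea.
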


\begin{proof}
  \begin{enumerate}
    \item By the expansion property, we have
    \[
      \abs{N(S) \cup N(S')} = \abs{N(S \cup S')}
      \ge (1-\beta) \Delta (\abs{S} + \abs{S'}).
    \]
    On the other hand, we can upper-bound the left-hand side as follows:
    \[
      \abs{N(S) \cup N(S')} \le \abs{N(S)} + \abs{N(S')} - \abs{N(S) \cap N(S')}
      \le \Delta (\abs{S} + \abs{S'}) - \abs{N(S) \cap N(S')}.
    \]
    Combining the two inequalities and solving for $\abs{N(S) \cap N(S')}$ yields
    the statement.
    \item Let $S'' \subseteq S'$ be arbitrary with $\abs{S''} = \abs{S}$.
    Again, by the expansion property,
    \[
      \abs{N(S) \cup N(S'')} = \abs{N(S \cup S'')}
      \ge 2(1-\beta)\Delta\abs{S}.
    \]
    Meanwhile we have $\abs{N(S)} \le \Delta\abs{S}$, which means there are at
    least $(1-2\beta)\Delta\abs{S}$ nodes in $N(S'') \setminus N(S)$.
    By averaging, there is at least one node $v \in S''$ such that $\abs{N(v)
    \cap N(S)} \le 2\beta\Delta$.
    Since the argument applies to an arbitrary subset $S''$ of $S'$, by
    excluding each such $v$ one by one it follows that all but at most $\abs{S}$
    nodes in $S'$ have this property.
    \qedhere
  \end{enumerate}
\end{proof}

\subsection{Model and Problem Definition}

We use the standard query model of property testing
\cite{bhattacharyya22_property_book}.
The testing algorithm has unlimited computational power and access to a source
of infinitely many random bits that are fully independent from one another.
In addition, we give the model full knowledge of the underlying topology of the
network, which is presented as a graph $G = (V,E)$ with $\abs{V} = n$ nodes.
We assume there are no singleton nodes (i.e., every node is such that there is
an edge incident to it).
The topology remains fixed during the evolution of the network, whose nodes take
on different states over a set of discrete time steps.
As in the previous works \cite{nakar21_back_icalp,goldreich17_learning_jacm},
the formal object we are testing is an \emph{environment} $\ENV\colon V \times
[T] \to Z$ where $T \ge 2$ and $Z$ is the set of states that each node may
assume.

The goal is to detect whether $\ENV$ is following a certain \emph{local rule}
$\rho$, which is defined as a function that maps every multiset $\mu$ over $Z$
to $\rho(\mu) \in Z$.
The environment $\ENV$ is said to \emph{follow} $\rho$ if, for every time step
$t \le T$ and every node $v \in V$, we have that $\ENV(v,t+1) =
\rho(\ENV(N(v),t))$ (where $\ENV(N(v),t)$ here is seen as a multiset, that is,
counting multiplicities of the occurrence of each element of $Z$).
Blurring the distinction between $\rho$ and the set of environments that follow
it, we write $\ENV \in \rho$ if $\ENV$ follows $\rho$.

The \emph{distance} between two environments $\ENV, \ENV'\colon V \times [T] \to
Z$ is the (normalized) number of pairs on which $\ENV$ and $\ENV'$ differ:
\[
  \dist(\ENV,\ENV') = \frac{1}{nT}
    \sum_{(v,t) \in V \times [T]} [\ENV(v,t) \neq \ENV'(v,t)].
\]
For a set of environments $X$ (all over the same domain $V \times [t]$), we
write
\[
  \dist(\ENV,X) = \min_{\ENV' \in X} \dist(\ENV,\ENV')
\]
for the minimum distance between $\ENV$ and $X$.
Being a bit sloppy, we write $\dist(\ENV,\rho)$ for the minimum distance between
$\ENV$ and the set of environments $\ENV'$ for which $\ENV' \in \rho$.
For $\eps \ge 0$, $\ENV$ is said to be \emph{$\eps$-far} from $\rho$ if
$\dist(\ENV,\rho) \ge \eps$; otherwise $\ENV$ is \emph{$\eps$-near} $\rho$.

In this work, we focus on $Z = \binalph$ and on testing the $\oBP$ rule of
bootstrap percolation.
The rule is defined by $\rho(\mu) = [1 \in \mu]$ (i.e., $\rho(\mu) = 1$ if $1
\in \mu$ and $\rho(\mu) = 0$ otherwise).
Seeing states as colors, we identify state $1$ with the color \emph{black} and
state $0$ with \emph{white}.
(Being pedantic, the $\oBP$ rule in the context of bootstrap percolation is such
that a black node always remains black.
This behavior can be enforced in the model we describe by adding self-loops to
all nodes.)

For $t \ge 2$, a pair $(v,t)$ is a \emph{successor} of $(u,t-1)$ if there is an
edge between $v$ and $u$; at the same time, $(u,t-1)$ is a \emph{predecessor} of
$(v,t)$.
If the respective time steps $t$ and $t-1$ are clear from the context, we might
also drop any mention of them and simply say that $v$ (as a node) is a successor
of $u$.
This is particularly convenient when analyzing the case $T=2$.

\paragraph{Testing algorithms.}
Fix $\eps > 0$.
A \emph{testing algorithm} $A$ for $\oBP$ accesses $\ENV\colon V \times [T] \to
Z$ by means of \emph{queries}, which are pairs $(v,t) \in V \times [T]$. 
Upon querying the pair $(v,t)$, $A$ receives $\ENV(v,t)$ as answer.
If the queries are performed in an order where, for every $t$ and $t'$ with $t'
> t$, $A$ never makes a $(\cdot,t)$ query after it has queried $(\cdot,t')$,
then $A$ is said to be \emph{time-conforming}.
As usual in property testing, our interest lies in the \emph{query complexity}
of $A$, that is, the maximum number of queries that $A$ makes, regardless of its
randomness.
The algorithm $A$ is a \emph{one-sided error tester} for $\ENV \in \oBP$ if the
following holds, where the probabilities are taken over the randomness of $A$:
\begin{itemize}
  \item If $\ENV \in \oBP$, then always $A(\ENV) = 1$.
  \item If $\ENV$ is $\eps$-far from $\oBP$, then $\Pr[A(\ENV) = 1] < 1/2$.
\end{itemize}
In contrast, $A$ is a \emph{two-sided error tester} if it may also err on $\ENV
\in \oBP$:
\begin{itemize}
  \item If $\ENV \in \oBP$, then $\Pr[A(\ENV) = 1] \ge 2/3$.
  \item If $\ENV$ is $\eps$-far from $\oBP$, then $\Pr[A(\ENV) = 1] < 1/3$.
\end{itemize}

\paragraph{Violations.}
Observe that our notion of distance is \emph{not} the same as counting the
number of failures of $\ENV$ in following $\oBP$.
There are two kinds of failures that may occur:

\begin{definition}[Violations]
  A pair $(v,t) \in V \times [T]$ is \emph{violating} if $t \ge 2$ and one of
  the following conditions hold:
  \begin{enumerate}[(I)]
    \item $\ENV(v,t) = 0$ and $\exists u \in N(v): \ENV(u,t-1) = 1$
    \item $\ENV(v,t) = 1$ and $\forall u \in N(v): \ENV(u,t-1) = 0$
  \end{enumerate}
  We refer to these violations as \emph{violations of type I and II},
  respectively.
  We write $\viol(\ENV)$ for the set of violating pairs in $\ENV$.
\end{definition}

\begin{figure}
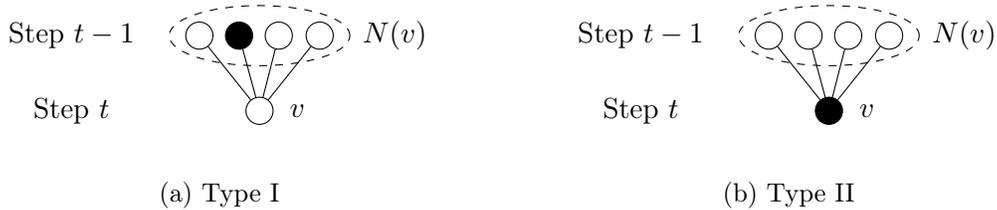

  \centering
  \hspace*{\fill}
  \subcaptionbox{Type I}{\includestandalone{figs/violation_type_I}}
  \hspace*{\fill}
  \subcaptionbox{Type II}{\includestandalone{figs/violation_type_II}}
  \hspace*{\fill}
  \caption{Violations can be of two different types.
    Here we see a node $v$ and its state in time step $t$ (as a color) as well
    as its neighbors $N(v)$ and their respective states in step $t-1$.}
\end{figure}

Although a larger distance to $\oBP$ implies a greater number of violations,
there is not an exact correspondence between the two.
For example, it might be the case that $\ENV$ exhibits a great number of
violations, but correcting them requires recoloring only a few nodes.
We will prove upper and lower bounds between the distance and the number of
violations further below
(\cref{lem:meq2_viol_vs_dist,lem:gen-T-relation-viol-dist}).


\section{Upper Bounds for the Case \texorpdfstring{$T=2$}{T = 2}}
\label{sec:alg-T-2}

In this section we present our two algorithms for the case where $T = 2$.
The first of these (\cref{sec:upper-bound-delta-eps}) is quite simple and has
query complexity $O(\Delta/\eps)$, which turns out to be optimal for the regimes
where $\Delta = o(\sqrt{n})$.
The second one (\cref{sec:upper-bound-sqrt-n-eps}) is much more intricate and
gives query complexity $\tilde{O}(\sqrt{n}/\eps)$, which makes it more suitable
for the regimes where $\Delta = \omega(\sqrt{n})$.
Although both are one-sided error algorithms, the first algorithm is
non-adaptive and thus time-conforming whereas the second has neither of these
properties (i.e., it is adaptive and also does not respect time-conformity).

\subsection{An Upper Bound that Scales with the Maximum Degree}
\label{sec:upper-bound-delta-eps}

In this section, we prove:

\restateThmAlgLowDegree*

The claim is that \cref{alg:m2} satisfies the requirements of
\cref{thm:alg-low-degree}.
As mentioned above, the strategy followed by \cref{alg:m2} is quite simple:
It chooses a certain subset of nodes $U$ uniformly at random and then queries
the states of $u \in U$ and all of $N(u)$ in both time steps.
The algorithm then rejects if and only if a violation of either type is
detected.

\begin{algorithm}
  Pick $U \subseteq V$ uniformly at random where $\abs{U} = \ceil{2/\eps}$\;
  Query $\ENV(v,1)$ and $\ENV(u,2)$ for every $u \in U$ and $v \in N(u)$ in a
  time-conforming manner\;
  \For{$u \in U$}{
    \lIf{$\ENV(u,2) = 0$ and $\exists v \in N(u): \ENV(v,1) = 1$}{
      \Reject
    }
    \lIf{$\ENV(u,2) = 1$ and $\forall v \in N(u): \ENV(v,1) = 0$}{
      \Reject
    }
  }
  \Accept\;
  \caption{Algorithm for the case $T = 2$ with query complexity
  $O(\Delta/\eps)$}
  \label{alg:m2}
\end{algorithm}

At the core of the correctness of \cref{alg:m2} is the relation between the
number of violations and the distance of $\ENV$ to $\oBP$.
With a bit of care, we can relate the two quantities as shown next.
(Actually for the correctness of \cref{alg:m2} we only need one of the two
bounds below; the other one comes as a \enquote{bonus}.)

\begin{lemma}
  \label{lem:meq2_viol_vs_dist}
  Let $T = 2$.
  Then
  \[
    \frac{\abs{\viol(\ENV)}}{2\Delta n}
    \le \dist(\ENV,\oBP)
    \le \frac{\abs{\viol(\ENV)}}{2n}.
  \]
\end{lemma}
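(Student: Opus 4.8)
The statement has two inequalities, and they go in opposite directions in difficulty.

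For the easy direction, the upper bound $\dist(\ENV,\oBP) \le \abs{\viol(\ENV)}/(2n)$: the plan is to exhibit a specific environment $\ENV'$ that follows $\oBP$ and differs from $\ENV$ in at most $\abs{\viol(\ENV)}/2$ pairs (then dividing by $nT = 2n$ gives the bound). The natural choice is to keep the first step of $\ENV$ entirely and redefine the second step according to the rule: set $\ENV'(v,1) = \ENV(v,1)$ for all $v$, and $\ENV'(v,2) = [1 \in \ENV(N(v),1)]$. By construction $\ENV' \in \oBP$. Now $\ENV'(v,2)$ disagrees with $\ENV(v,2)$ exactly when $(v,2)$ is a violating pair (either type I or type II — both types are precisely the condition that $\ENV(v,2) \ne [1 \in \ENV(N(v),1)]$), and no disagreement occurs in step $1$. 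So the number of differing pairs is exactly $\abs{\viol(\ENV)}$; dividing by $nT = 2n$ gives $\abs{\viol(\ENV)}/(2n)$. (In fact this shows one can always repair by only changing step-2 colors, which is worth a remark.)

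For the harder direction, the lower bound $\abs{\viol(\ENV)}/(2\Delta n) \le \dist(\ENV,\oBP)$: the plan is to show that \emph{any} environment $\ENV'$ following $\oBP$ must differ from $\ENV$ in at least $\abs{\viol(\ENV)}/(2\Delta)$ pairs, and then divide by $nT = 2n$. Fix such an $\ENV'$ and let $F = \{(v,t) : \ENV(v,t) \ne \ENV'(v,t)\}$ be the set of corrected pairs. The idea is a charging argument: assign to each violating pair $(v,2)$ at least one element of $F$, in such a way that each element of $F$ is charged at most $2\Delta$ times. Given a violation at $(v,2)$: if $(v,2) \in F$ we can try to charge it to itself; but a cleaner uniform scheme is to observe that to kill the violation at $(v,2)$, $\ENV'$ must differ from $\ENV$ at $(v,2)$ \emph{or} at some $(u,1)$ with $u \in N(v)$ — because if $\ENV'$ agrees with $\ENV$ on both $(v,2)$ and all of $N(v)\times\{1\}$, then $(v,2)$ is still a violation in $\ENV'$, contradicting $\ENV' \in \oBP$. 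So each violating $(v,2)$ can be charged to some element of $F \cap (\{(v,2)\} \cup (N(v)\times\{1\}))$. Now bound how often a given $f \in F$ receives charge: if $f = (u,1)$, it can only be charged by violating pairs $(v,2)$ with $u \in N(v)$, i.e. $v \in N(u)$, of which there are at most $\Delta$; if $f = (v,2)$, it can only be charged by $(v,2)$ itself, so at most once. Hence each element of $F$ is charged at most $\Delta$ times (the $2\Delta$ in the statement gives slack), so $\abs{\viol(\ENV)} \le \Delta \abs{F}$, i.e. $\abs{F} \ge \abs{\viol(\ENV)}/\Delta \ge \abs{\viol(\ENV)}/(2\Delta)$, and dividing by $2n$ finishes it.

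The main obstacle is getting the charging scheme for the lower bound exactly right — in particular making sure the "to fix $(v,2)$ you must change something in $\{(v,2)\}\cup N(v)\times\{1\}$" step is stated correctly for \emph{both} violation types, and that the reverse multiplicity count (how many violations can point at a fixed corrected pair) is genuinely bounded by $\Delta$ (or $2\Delta$). One subtlety to watch: a single corrected pair $(u,1)$ flipping from black could simultaneously repair a type-II violation at some neighbor and also help with type-I violations at other neighbors, but this is fine since we only need an \emph{upper} bound of $2\Delta$ on the multiplicity, not a tight count. No heavy machinery is needed; the whole argument is elementary once the bijective/charging bookkeeping is set up.
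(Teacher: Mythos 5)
Your proof is correct and takes essentially the same approach as the paper: the upper bound is obtained by leaving step~1 alone and rewriting step~2 according to the rule (so the differing pairs are exactly the violating pairs), and the lower bound is obtained by observing that each modified entry can account for at most $\Delta$ violations. The paper states the lower-bound step more tersely ("flipping the color of a node can only correct at most $\Delta$ violating pairs"), whereas you spell it out as an explicit charging scheme against a fixed optimal $\ENV'$; your version is a bit more rigorous but amounts to the same counting.
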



\begin{proof}
  Every violating pair $(u,t)$ can be corrected by flipping the value of
  $\ENV(u,t)$, which does not create a new violating pair since $t = T = 2$.
  In addition, if $\ENV$ does not have any violating pair, then $\ENV \in \oBP$.
  This implies $\dist(\ENV,\oBP) \le \abs{\viol(\ENV)} / 2n$.
  On the other hand, flipping the color of a node can only correct at most
  $\Delta$ violating pairs.
  Hence we also have $\dist(\ENV,\oBP) \ge \abs{\viol(\ENV)} / 2 \Delta n$.
\end{proof}

The lemma directly implies that, if $\dist(\ENV,\oBP) \ge \eps$, then
$\abs{\viol(\ENV)} \ge 2 \eps n$.
Hence the probability that \cref{alg:m2} errs in this case is 
\[
  \Pr[(U,2) \cap \viol(\ENV) = \varnothing]
  \le (1 - 2\eps)^{\abs{U}}
  < \frac{1}{e}
  < \frac{1}{2}.
\]
Since \cref{alg:m2} only rejects when a violation of either type is detected,
\cref{alg:m2} always accepts if $\ENV \in \oBP$.
The query complexity and other properties of \cref{alg:m2} are clear, and hence
\cref{thm:alg-low-degree} follows.

\subsection{An Upper Bound Independent of the Maximum Degree}
\label{sec:upper-bound-sqrt-n-eps}

Next we show our second algorithm, which is much more complex than
\cref{alg:m2}.
Since \cref{alg:m2} is already optimal for $\Delta = O(\sqrt{n})$, we focus on
the regime where $\Delta = \Omega(\sqrt{n})$ and present an algorithm with query
complexity that is independent of $\Delta$.
The algorithm requires adaptiveness and unfortunately is no longer
time-conforming; obtaining a time-conforming or even non-adaptive algorithm with
the same query complexity for these large values of $\Delta$ (or proving none
exists) remains an interesting open question.

\restateThmAlgLargeDegree*

We claim \cref{alg:large-degree} satisfies the requirements of the theorem.
Next we give a brief description of the strategy followed by
\cref{alg:large-degree}.

\begin{algorithm}
  Select $Q_1,Q_1',Q_2,Q_2' \subseteq V$ with $\abs{Q_i} = \abs{Q_i'} =
  (24/\eps)\sqrt{n}\log^{3/2} n$ uniformly at random\;
  Query $\ENV(Q_1,1)$ and $\ENV(Q_2',1)$\;
  $B_2 \gets \{ v \in V \mid \exists u \in N(v)\cap Q_1\colon \ENV(u,1) = 1\}$\;
  Query $\ENV(Q_1',2)$ and $\ENV(Q_2,2)$\;
  $W_1 \gets \{ v \in V \mid \exists u \in N(v)\cap Q_2\colon \ENV(u,2) = 0\}$\;
  \lIf{$\exists u \in Q_1' \cap B_2: \ENV(u,2) = 0$
      or $\exists u \in Q_2' \cap W_1: \ENV(u,1) = 1$}
    {\Reject}
  \label{line:alg-large-degree-first-reject}
  $F \gets \{ v \in V \mid \abs{N(v) \setminus W_1} \le 4\sqrt{n\log n} \}$\;
  Select $Q_3 \subseteq F$ with $\abs{Q_3} = (4/\eps)\log n$ uniformly at
  random\;
  Query $\ENV(v,2)$ and $\ENV(N(v) \setminus W_1,1)$ for every $v \in Q_3$\;
  \lIf{$\exists v \in Q_3: \ENV(v,2) = 0
      \land \exists u \in N(v) \setminus W_1: \ENV(u,1) = 1$
      or $\exists v \in Q_3: \ENV(v,2) = 1
      \land \nexists u \in N(v) \setminus W_1: \ENV(u,1) = 1$}
    {\Reject}
  \label{line:alg-large-degree-second-reject}
  \Accept\;
  \caption{Algorithm for the case $T = 2$ with query complexity
  $\tilde{O}(\sqrt{n}/\eps)$}
  \label{alg:large-degree}
\end{algorithm}

\paragraph{Approach.}

The operation of \cref{alg:large-degree} can be divided into two parts.
The first one is up to \cref{line:alg-large-degree-first-reject}.
Here we query nodes from the first and second time steps at random ($Q_1$ and
$Q_2$) and try to ascertain the color of as many nodes as possible using these
queries.
More specifically, if a node $v$ has a neighbor $u \in N(v)$ which is black in
the first step, then we know $v$ must be black in the second step.
We gather these nodes in the set $B_2$.
A similar observation holds for the nodes in the set $W_1$, which must be white
since they have a neighbor in the second step that is white.
At the same time we query another set of nodes from the first and second step
uniformly at random ($Q_1'$ and $Q_2'$) to verify that all but a very small
fraction of nodes in $W_1$ (resp., $B_2$) are indeed white (resp., black).

The second part of the algorithm starts after
\cref{line:alg-large-degree-first-reject}.
Here we will ignore nodes in $W_1$ (since we already know they are white) and
\enquote{filter} nodes that have not too large degree to nodes not in $W_1$.
These nodes are added to the set $F$.
Intuitively we can then test these nodes in the same fashion as \cref{alg:m2}:
We select a few nodes $v \in F$ uniformly at random ($Q_3$) and then query the
entire neighborhood of these nodes in the first step, so $\ENV(u,1)$ for $u \in
N(v) \setminus W_1$, as well as $\ENV(v,2)$.
If any violations are detected here, then we can safely reject.
What then remains are only nodes with high degree; as we argue in the analysis
below, any set of nodes not in $F$ that are black (which might occur when $\ENV
\notin \oBP$) and which have no white predecessor can actually be covered by
recoloring only a small set of nodes (and hence $\ENV$ must be close to $\oBP$).

\paragraph{Analysis.}

The query complexity of \cref{alg:large-degree} is clear, so we focus on the
analysis on its correctness.
First we show that \cref{alg:large-degree} is indeed a one-sided error
algorithm; that is:

\begin{claim}
  If $\ENV \in \oBP$, then \cref{alg:large-degree} always accepts.
\end{claim}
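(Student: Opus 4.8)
The plan is to verify that each of the three rejection points in \cref{alg:large-degree} is unreachable when $\ENV \in \oBP$; since the algorithm accepts whenever it does not reject, this suffices. I would proceed through the algorithm in order, tracking what each auxiliary set ($B_2$, $W_1$, $F$) contains and why the rejection conditions are vacuous under the promise.

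First, for \cref{line:alg-large-degree-first-reject}: by construction, every $v \in B_2$ has a neighbor $u \in N(v) \cap Q_1$ with $\ENV(u,1) = 1$. If $\ENV \in \oBP$, then rule (I) forces $\ENV(v,2) = 1$, so no $u \in Q_1' \cap B_2$ can have $\ENV(u,2) = 0$. Symmetrically, every $v \in W_1$ has a neighbor $u \in N(v) \cap Q_2$ with $\ENV(u,2) = 0$; by the $\oBP$ rule (contrapositive of (II), equivalently: a white successor forces all predecessors white), $\ENV(v,1) = 0$, so no $u \in Q_2' \cap W_1$ has $\ENV(u,1) = 1$. Hence neither disjunct in the first \textbf{if} can hold.

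Next, for \cref{line:alg-large-degree-second-reject}: here we must be slightly more careful because the neighborhood being inspected is $N(v) \setminus W_1$ rather than all of $N(v)$. The key observation is that every node in $W_1$ is white in the first step when $\ENV \in \oBP$ (as just argued). Therefore, for any $v$, the set $N(v)$ contains a black node in step $1$ if and only if $N(v) \setminus W_1$ does. Consequently, if $\ENV(v,2) = 0$ then by rule (I) no neighbor of $v$ is black in step $1$, so in particular no $u \in N(v) \setminus W_1$ has $\ENV(u,1) = 1$; and if $\ENV(v,2) = 1$ then by rule (II) some neighbor of $v$ is black in step $1$, and since all of $W_1$ is white, that black neighbor lies in $N(v) \setminus W_1$. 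So both disjuncts of the second \textbf{if} fail as well, and the algorithm accepts.

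The main (mild) obstacle is the second rejection point, and specifically the need to confirm that restricting attention to $N(v) \setminus W_1$ does not cause us to miss a black predecessor or to wrongly conclude one exists. This hinges entirely on the one-directional fact that membership in $W_1$ certifies whiteness in step $1$ under the promise — a fact that does \emph{not} hold in the converse direction (a white node need not be in $W_1$), which is exactly why one-sidedness is preserved here but the analysis must not over-claim. Everything else is a direct unwinding of the definitions of $B_2$, $W_1$, and the two $\oBP$ rules (I) and (II), with no probabilistic content, since we are in the completeness case.
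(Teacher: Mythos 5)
Your proof is correct and follows essentially the same route as the paper's: first observe that $\ENV \in \oBP$ forces $\ENV(W_1,1) = 0$ and $\ENV(B_2,2) = 1$ (so the first reject is unreachable), then use $\ENV(W_1,1) = 0$ to argue that restricting to $N(v) \setminus W_1$ in the second reject is harmless. The only nit is the phrase \enquote{three rejection points} at the outset (there are two reject lines, each a disjunction), but your substance handles both correctly.
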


Intuitively this is the case because we are only trying to detect violations
(and accept unconditionally if we do not manage to find any).

\begin{proof}
  Since $\ENV \in \oBP$, we have $\ENV(W_1,1) = 0$ and $\ENV(B_2,2) = 1$.
  As a result, \cref{alg:large-degree} never rejects in
  \cref{line:alg-large-degree-first-reject}.
  Consider the two possibilities for \cref{alg:large-degree} to reject in
  \cref{line:alg-large-degree-second-reject}.
  The first is that there is a node $v \in Q_3$ with $\ENV(v,2) = 0$ and some $u
  \in N(v)$ so that $\ENV(u,1) = 1$, which contradicts $\ENV \in \oBP$.
  The second is that $\ENV(v,2) = 1$ and $\ENV(u,1) = 0$ for every $u \in N(v)
  \setminus W_1$; however, since $\ENV(W_1,1) = 0$, this means $\ENV(u,1) = 0$
  for every $u \in N(v) \cap W_1$ as well and then $\ENV(u,1) = 0$ for every $u
  \in N(v)$, thus also contradicting $\ENV \in \oBP$.
\end{proof}

Now we turn to proving that \cref{alg:large-degree} does not have false
positives.
More specifically we show that \cref{alg:large-degree} can only accept with
constant probability if $\dist(\ENV,\oBP) < \eps$ is the case (and so,
conversely, \cref{alg:large-degree} rejects with high probability if 
$\dist(\ENV,\oBP) \ge \eps$).

For $v \in V$ and $t \in \{1,2\}$, we write $\bn_t(v)$ and $\wn_t(v)$ for the
number of black and white neighbors, respectively, of $v$ in step $t$; formally,
\begin{align*}
   \bn_t(v) &= \abs*{\left\{ u \in N(v) \mid \ENV(u,t) = 1 \right\}},
  &\wn_t(v) &= \abs*{\left\{ u \in N(v) \mid \ENV(u,t) = 0 \right\}}.
\end{align*}
Let $\theta = (\eps/4)\sqrt{n/\log n}$ and consider the following sets:
\begin{align*}
   X_1 &= \left\{ v \in V \mid \wn_2(v) \ge \theta \right\},
  &X_2 &= \left\{ v \in V \mid \bn_1(v) < \theta \land \ENV(v,2) = 0 \right\},
  \\
   Y_1 &= \left\{ v \in V \mid \wn_2(v) < \theta \land \ENV(v,1) = 1 \right\},
  &Y_2 &= \left\{ v \in V \mid \bn_1(v) \ge \theta \right\}, 
  \\
   Z_1 &= \left\{ v \in V \mid \wn_2(v) < \theta \land \ENV(v,1) = 0 \right\},
  &Z_2 &= \left\{ v \in V \mid \bn_1(v) < \theta \land \ENV(v,2) = 1 \right\}.
\end{align*}
The sets $X_1$ and $Y_2$ contain the nodes for which we can detect that they
must be white and black, respectively, by using the query sets $Q_1$ and $Q_2$.

\begin{figure}
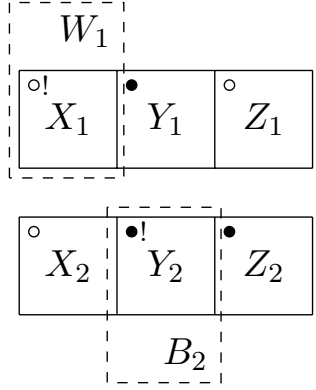

  \centering
  \includestandalone[scale=1.3]{figs/sets_alg_large_degree}
  \caption{Relation between the sets used in the analysis of
  \cref{alg:large-degree}.
  The sets form a partition of the nodes in the first and second time steps.
  The small circles indicate the color of the nodes in each set or, in the case
  of $X_1$ of $Y_2$, that the algorithm rejects unless (almost all) nodes in the
  set have the respective color (denoted with an exclamation mark).}
\end{figure}

\begin{claim}%
  \label{claim:x1-in-w1-and-y2-in-b2}
  With high probability over the choices made by \cref{alg:large-degree}, $X_1
  \subseteq W_1$ and $Y_2 \subseteq B_2$.
\end{claim}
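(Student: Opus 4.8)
The plan is to show that with high probability the random query sets $Q_1$ and $Q_2$ hit a neighbor of every node in $Y_2$ and $X_1$, respectively, which forces those nodes into $B_2$ and $W_1$ by definition. Recall $X_1 = \{v : \wn_2(v) \ge \theta\}$ and $Y_2 = \{v : \bn_1(v) \ge \theta\}$, where $\theta = (\eps/4)\sqrt{n/\log n}$, and that $Q_1, Q_2$ each have size $(24/\eps)\sqrt{n}\log^{3/2} n$. Fix a node $v \in Y_2$. It has at least $\theta$ black neighbors in step $1$, so the probability that a single uniformly random vertex misses all of them is at most $1 - \theta/n = 1 - (\eps/4)/\sqrt{n\log n}$. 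Since $Q_1$ consists of $|Q_1|$ independent (or near-independent; sampling without replacement only helps) such choices, the probability that $Q_1$ avoids $N(v) \cap \{\text{black in step }1\}$ entirely is at most
\[
  \left( 1 - \frac{\eps/4}{\sqrt{n\log n}} \right)^{|Q_1|}
  \le \exp\!\left( -\frac{\eps}{4\sqrt{n\log n}} \cdot \frac{24}{\eps}\sqrt{n}\log^{3/2} n \right)
  = \exp(-6\log n) = n^{-6}.
\]
If $Q_1$ does hit a black neighbor $u$ of $v$ in step $1$, then $\ENV(u,1) = 1$ with $u \in N(v) \cap Q_1$, so $v \in B_2$ by the definition of $B_2$ in \cref{alg:large-degree}. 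Taking a union bound over the at most $n$ nodes of $Y_2$, we get $Y_2 \subseteq B_2$ except with probability at most $n^{-5}$.

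The argument for $X_1 \subseteq W_1$ is symmetric: a node $v \in X_1$ has at least $\theta$ white neighbors in step $2$, so by the identical computation the set $Q_2$ (of the same size) fails to hit any of them with probability at most $n^{-6}$, and when it does hit some white $u \in N(v) \cap Q_2$ in step $2$, the definition of $W_1$ puts $v \in W_1$. A union bound over $X_1$ gives $X_1 \subseteq W_1$ except with probability $n^{-5}$. Combining the two failure events, both inclusions hold simultaneously with probability at least $1 - 2n^{-5} = 1 - o(1)$, which is the claimed statement.

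I do not anticipate a genuine obstacle here; this is a routine coupon-collector / union-bound estimate, and the constant $24$ in the query set sizes was evidently chosen precisely so that the exponent comes out to $6\log n$, leaving ample room after the union bound over $n$ nodes. The only points requiring a modicum of care are: (i) making sure the parameter $\theta$ is exactly the threshold in the definitions of $X_1$ and $Y_2$ so that the "at least $\theta$ relevant neighbors" bound is available; (ii) noting that $Q_1, Q_2$ are sampled uniformly (without replacement) from $V$ rather than with replacement, so one should either cite the standard fact that this only decreases the miss probability, or simply carry out the bound with replacement since the algorithm's guarantees are unaffected by using a multiset; and (iii) absorbing the lower-order terms so the final probability is cleanly $1 - o(1)$. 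None of these is substantive.
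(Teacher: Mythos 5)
Your proof is correct and follows essentially the same route as the paper's: fix a node in $X_1$ (resp.\ $Y_2$), bound the probability that $Q_2$ (resp.\ $Q_1$) misses all $\theta$ of its white (resp.\ black) neighbors, and union-bound over at most $n$ nodes. The only cosmetic difference is that the paper invokes the Chernoff bound (getting $2e^{-2\log n}$) whereas you compute the miss probability directly via $(1-p)^k \le e^{-pk}$ (getting $n^{-6}$); both yield the same $1-o(1)$ conclusion, and your handling of the without-replacement subtlety is fine.
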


\begin{proof}
  Fix a node $v \in X_1$.
  By the Chernoff bound, the probability that no $u \in N(v)$ with $\ENV(u,2) =
  0$ lands in $Q_2$ is at most $2e^{-2\log n} < 2/n^2$.
  Hence, by the union bound, the probability that there is a node in $X_1$ and
  not in $W_1$ is $O(1/n)$.
  The same argument applies to $Y_2$ and $B_2$.
\end{proof}

Next we observe that the queries from $Q_1'$ and $Q_2'$ significantly
\enquote{reduce} the number of black or white nodes in $X_1$ or $Y_2$,
respectively; that is, if there is a significant number of such nodes in these
sets, then \cref{alg:large-degree} will detect them anyway and reject (and thus
we can focus the analysis on instances where this is not the case).

\begin{claim}%
  \label{claim:x1-y2}
  If $\ENV$ is such that there are $\eps\sqrt{n}$ nodes $v \in X_1$ with
  $\ENV(v,1) = 1$ or $\eps\sqrt{n}$ nodes $v \in Y_2$ with $\ENV(v,2) = 0$, then
  \cref{alg:large-degree} rejects $\ENV$ with high probability.
\end{claim}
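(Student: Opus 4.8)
The plan is to show that in either of the two scenarios, the ``verification'' queries $Q_1'$ (resp.\ $Q_2'$) hit a violating node with high probability. Consider the first case: suppose there are at least $\eps\sqrt n$ nodes $v \in X_1$ with $\ENV(v,1) = 1$. By \cref{claim:x1-in-w1-and-y2-in-b2}, with high probability every such $v$ lies in $W_1$; condition on this event. Then each of these $\eps\sqrt n$ nodes is a node $v \in W_1$ with $\ENV(v,1) = 1$, which is exactly the kind of node that triggers the rejection in \cref{line:alg-large-degree-first-reject} via the clause ``$\exists u \in Q_2' \cap W_1 \colon \ENV(u,1) = 1$'' (taking $u = v$). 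So it suffices to show that $Q_2'$ intersects this set of $\eps\sqrt n$ nodes with high probability.

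The set $Q_2'$ is a uniformly random subset of $V$ of size $(24/\eps)\sqrt n \log^{3/2} n$. I would argue via the Chernoff bound (\cref{thm_chernoff}) applied to the indicator variables of whether each element of $Q_2'$ lands in the ``bad'' set $X_1 \cap \{\ENV(\cdot,1)=1\}$, which has density at least $\eps\sqrt n / n = \eps/\sqrt n$ in $V$. The expected number of hits is at least $(24/\eps)\sqrt n \log^{3/2} n \cdot (\eps/\sqrt n) = 24\log^{3/2} n$, which is $\omega(1)$, so by Chernoff the probability of zero hits is $n^{-\omega(1)} = o(1)$. (A cleaner alternative is a direct calculation: the probability that $Q_2'$ misses the bad set entirely is at most $(1 - \eps/\sqrt n)^{|Q_2'|} \le e^{-24\log^{3/2} n} = o(1)$.) The symmetric case --- $\eps\sqrt n$ nodes $v \in Y_2$ with $\ENV(v,2) = 0$ --- is handled identically: by \cref{claim:x1-in-w1-and-y2-in-b2} these nodes lie in $B_2$ with high probability, they are exactly the witnesses for the ``$\exists u \in Q_1' \cap B_2 \colon \ENV(u,2) = 0$'' clause, and $Q_1'$ (same size as $Q_2'$) hits the bad set with high probability by the same bound. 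Taking a union bound over the (constantly many) failure events --- the conclusion of \cref{claim:x1-in-w1-and-y2-in-b2} and the two ``$Q_i'$ misses'' events --- gives the claim.

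The only mild subtlety I anticipate is bookkeeping around whether $Q_2'$ is sampled with or without replacement and whether it is drawn independently of the event ``$v \in W_1$'' (which depends on $Q_2$, a separate random set): since $Q_2'$ is chosen independently of $Q_2$ and of $\ENV$, conditioning on the high-probability event of \cref{claim:x1-in-w1-and-y2-in-b2} does not disturb the distribution of $Q_2'$, so the hitting-probability computation goes through unchanged. This is routine, and I would not expect it to be the main obstacle; essentially the whole claim is a hitting-set argument stacked on top of the already-established \cref{claim:x1-in-w1-and-y2-in-b2}.
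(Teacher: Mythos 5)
Your proposal is correct and follows essentially the same route as the paper's proof: both apply \cref{claim:x1-in-w1-and-y2-in-b2} to get $X_1 \subseteq W_1$ and $Y_2 \subseteq B_2$ with high probability, then bound the probability that $Q_2'$ (resp.\ $Q_1'$) misses the bad set of size $\ge \eps\sqrt{n}$ by $(1-\eps/\sqrt{n})^{|Q_i'|} \le e^{-24\log^{3/2} n} = o(1/n)$, and conclude via a union bound. The independence remark you flag as a subtlety is indeed routine and implicit in the paper.
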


\begin{proof}
  Let $S \subseteq V$ be a subset of $\abs{S} \ge \eps\sqrt{n}$ vertices.
  Then the probability that $S \cap Q_i'$ is empty is at most
  \[
    \left( 1 - \frac{\eps}{\sqrt{n}} \right)^{(24/\eps)\sqrt{n}\log^{3/2}n}
    < e^{-24\log^{3/2}n}
    = o\left( \frac{1}{n} \right).
  \]
  Using \cref{claim:x1-in-w1-and-y2-in-b2}, we have $X_1 \subseteq W_1$ and $Y_2
  \subseteq B_2$ with high probability.
  In this case \cref{alg:large-degree} rejects if any node $v \in X_1 \subseteq
  W_1$ with $\ENV(v,1) = 1$ lands in $Q_2'$ or any $v \in Y_2 \subseteq B_2$
  with $\ENV(v,2) = 0$ lands in $Q_1'$.
  Therefore \cref{alg:large-degree} rejects with high probability if there are
  at least $\eps\sqrt{n}$ nodes of either type.
\end{proof}

Hence we may now safely assume that all but at most $O(\eps\sqrt{n})$ nodes in
$X_1$ are white in the first time step and that all but at most
$O(\eps\sqrt{n})$ nodes in $Y_2$ are black in the second one.
The next observation is that nodes in $X_2$ are highly connected to $X_1$.
This justifies filtering nodes based on their connections to $W_1 \supseteq
X_1$.

\begin{claim}%
  \label{claim:avg-y1-x2}
  On average, a node from $X_2$ has at most $\theta n/\abs{X_2}$ neighbors not
  in $X_1$.
\end{claim}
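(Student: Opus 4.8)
The plan is to establish the claim by a short double-counting argument applied to the ordered pairs $(v,u)$ with $v \in X_2$ and $u \in N(v) \setminus X_1$. The number of such pairs is exactly $\sum_{v \in X_2} \abs{N(v) \setminus X_1}$, so after dividing by $\abs{X_2}$ it is precisely the average we wish to bound; it therefore suffices to show this sum is at most $\theta n$.

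To do so I would count the same pairs grouped by their second coordinate: for each $u \in V \setminus X_1$ the number of pairs involving that $u$ is $\abs{N(u) \cap X_2}$. The key observation is that every node of $X_2$ is white in the second step, since $\ENV(v,2) = 0$ for all $v \in X_2$ by definition. Hence $N(u) \cap X_2$ is contained in the set of step-$2$ white neighbors of $u$, which gives $\abs{N(u) \cap X_2} \le \wn_2(u)$; and since $u \notin X_1$, the definition of $X_1$ forces $\wn_2(u) < \theta$, so $\abs{N(u) \cap X_2} < \theta$.

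Summing over all $u \in V \setminus X_1$ then gives
\[
  \sum_{v \in X_2} \abs{N(v) \setminus X_1}
  = \sum_{u \in V \setminus X_1} \abs{N(u) \cap X_2}
  < \theta \abs{V \setminus X_1}
  \le \theta n,
\]
and dividing by $\abs{X_2}$ yields the claimed bound. There is essentially no obstacle here; the only point that needs to be noticed is that belonging to $X_2$ forces a node to be white in step $2$, which is exactly what lets one charge its neighbors outside $X_1$ against the ``fewer than $\theta$ white step-$2$ neighbors'' budget enjoyed by every node outside $X_1$.
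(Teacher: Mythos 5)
Your proof is correct and uses essentially the same double-counting argument as the paper: charge each edge from $X_2$ to $V \setminus X_1$ against the ``fewer than $\theta$ white step-$2$ neighbors'' budget of its endpoint outside $X_1$ (the paper phrases $V \setminus X_1$ as $Y_1 \cup Z_1$, but that is the same set). The only difference is presentational — you spell out the pair-counting explicitly, while the paper states it in one compressed sentence.
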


\begin{proof}
  Every node in $Y_1$ or $Z_1$ has at most $\theta$ white neighbors by
  definition, so at most this many neighbors in $X_2$.
  Hence there are at most $\theta n$ edges in total between $X_2$ and nodes not
  in $X_1$.
\end{proof}

Finally we show that, if \cref{alg:large-degree} accepts $\ENV$ with at least
constant probability, then we can correct all violations of either type with at
most $\eps n/2$ modifications in total for each type.
In both cases we must be careful so that these modifications do not create new
violations of their own.

\begin{claim}%
  \label{claim:type-I-viol}
  If \cref{alg:large-degree} accepts $\ENV$ with at least constant probability,
  then there are at most $\eps n/2$ many type I violations in $\ENV$.
  These violations can be corrected (without creating any new ones) by
  recoloring $\ENV(v,2)$ black for every violation $(v,2)$.
\end{claim}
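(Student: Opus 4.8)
The plan is that the second sentence is the easy half. Since $T=2$, recoloring step-2 entries cannot change the violation status of any pair other than those recolored (there is no step $t=3$ for the change to propagate to), and a node recolored black that already has a black predecessor is non-violating; hence recoloring every type I violation $(v,2)$ to black removes all type I violations and creates no new violation of either type. For the count, I would distribute the violating pairs $(v,2)$ over the sets introduced just before the claim and show that each bucket contributes $o(\eps n)$, so that the total is below $\eps n/2$ for $n$ large.

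The type I violations with $\bn_1(v)\ge\theta$ lie in $Y_2$ with $\ENV(v,2)=0$ and so are counted by \cref{claim:x1-y2}: there are fewer than $\eps\sqrt n$ of them, or the algorithm rejects with high probability. For a type I violation with $1\le\bn_1(v)<\theta$ we have $v\in X_2$, and I split on membership in $F$. If $v\notin F$ then $|N(v)\setminus W_1|>4\sqrt{n\log n}$; on the high-probability event $X_1\subseteq W_1$ of \cref{claim:x1-in-w1-and-y2-in-b2} this forces $|N(v)\setminus X_1|>4\sqrt{n\log n}$, and since by \cref{claim:avg-y1-x2} there are at most $\theta n$ edges from $X_2$ to $V\setminus X_1$, at most $\theta n/(4\sqrt{n\log n})=O(\eps n/\log n)$ vertices qualify. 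If $v\in F$ and $v$ has a black predecessor outside $W_1$, then $v$ is drawn into $Q_3\subseteq F$ with probability about $|Q_3|/|F|$; as $|Q_3|=\Theta(\eps^{-1}\log n)$, more than $O(\eps n/\log n)$ such $v$ would make the second rejection fire with high probability.

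The last, and I expect hardest, bucket is the type I violations $v\in X_2\cap F$ all of whose black predecessors lie in $W_1$. Each such $v$ is a neighbor of $R:=\{u\in W_1:\ENV(u,1)=1\}$, and the $Q_2'$ test in the first rejection forces $|R|<\eps\sqrt n$; for $u\in R\setminus X_1$ we have $u\in Y_1$, which has at most $\theta$ neighbors in $X_2$, so these contribute only $|R|\theta=o(\eps n)$ violations. The obstacle is $R\cap X_1$, whose vertices may have degree up to $\Delta$, so the crude count $\eps\sqrt n\cdot\Delta$ is worthless once $\Delta=\Omega(\sqrt n)$; in fact the bare statement is false without a distance hypothesis (a single black star-center produces $\Theta(n)$ type I violations while the instance is $1/(2n)$-close to $\oBP$), so this is exactly where $\dist(\ENV,\oBP)\ge\eps$ must be used — under that standing hypothesis, a large such bucket would let one repair $\ENV$ with $o(\eps n)$ changes (whitening the few high-degree sources at step 1, or else surfacing type II violations that the remaining rejection conditions catch with high probability), a contradiction. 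Adding up the $o(\eps n)$ bounds from all buckets then yields fewer than $\eps n/2$ type I violations.
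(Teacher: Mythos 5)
Your bucket decomposition reveals a genuine gap, and in fact your star-graph observation shows that the claim \emph{as literally stated} is false; the paper's own proof glosses over this. Concretely, the paper concludes the proof with: \enquote{the probability that $R \cap Q_3$ is empty is at most $(1-\eps/4)^{4\log(n)/\eps} < e^{-\log n} = O(1/n)$, and so \cref{alg:large-degree} rejects with high probability.} But having $v \in R \cap F \cap Q_3$ does \emph{not} force a rejection in \cref{line:alg-large-degree-second-reject} unless $v$ has a black predecessor in $N(v) \setminus W_1$; the algorithm only queries $N(v) \setminus W_1$. Your star example makes this concrete: with center $v$ black in both steps and all $n-1$ leaves white in both steps, every leaf is a type I violation, yet w.h.p.\ the algorithm filters $v$ into $W_1$, every leaf lands in $F$ with $N(\text{leaf}) \setminus W_1 = \varnothing$, and line~\ref*{line:alg-large-degree-second-reject} can never fire. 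The algorithm accepts w.h.p., there are $n-1 > \eps n /2$ type I violations, and $\dist(\ENV,\oBP) = 1/n$, so \cref{thm:alg-large-degree} itself is unharmed (the algorithm is allowed to accept instances that are $\eps$-near) but this intermediate claim, whose whole purpose is to bound the \emph{number} of violations rather than the distance, is falsified.

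So where the paper silently conflates \enquote{$R$ intersects $Q_3$} with \enquote{a violation is witnessed}, you correctly isolate the obstruction bucket (type I violations all of whose black predecessors lie in $W_1$, with the hard subcase being predecessors in $R\cap X_1$ of possibly degree $\Theta(n)$). That said, your own proposal does not close the hole either: the final paragraph appeals to the distance hypothesis $\dist(\ENV,\oBP)\ge\eps$, but the claim has no such hypothesis, and inserting one changes what is being proved. It is not enough to say the instance \enquote{can be repaired cheaply} --- you would then have to rewrite the enclosing argument so that the two claims jointly establish $\dist(\ENV,\oBP) < \eps$ rather than individually bounding raw violation counts, and you'd need an actual argument (not a sketch) that a cheap repair of the $R\cap X_1$ sources either exists or is detected by the $Q_2'$/$Q_3$ checks. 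In short: you found a real flaw that the paper's proof misses entirely, but your repair strategy diverges from the paper's claim statement and remains a sketch, so neither your proof nor the paper's is complete as written.
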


\begin{proof}
  Let $R$ be the set of nodes corresponding to type I violations, that is,
  \[
    R = \{ v \in V \mid \ENV(v,2)=0 \land \exists u \in N(v): \ENV(u,1)=1 \}.
  \]
  We prove the claim by proving the contrapositive; that is, if $\abs{R} \ge
  \eps n/2$, then \cref{alg:large-degree} rejects $\ENV$ with high probability.

  The first observation is that we have $\abs{R \setminus X_2} = o(\eps n)$
  (with high probability) due to \cref{claim:x1-y2} and then, by the assumption
  on $R$, $\abs{X_2} \ge (1-o(1))\eps n/2$.
  Hence we focus our analysis on $R \cap X_2$.
  By \cref{claim:avg-y1-x2}, on average a node from $X_2$ has at most
  $2\theta/\eps = (1/2)\sqrt{n/\log n}$ many neighbors that are outside $X_1$.
  By Markov's inequality, this gives us that there are at most $O(n/\log n)$
  many nodes $v \in X_2$ for which $\abs{N(v) \setminus X_1} > 4\sqrt{n\log n}$.
  Using \cref{claim:x1-in-w1-and-y2-in-b2}, we have $X_1 \subseteq W_1$ and so
  altogether we have $\abs{R \cap F} \ge \eps n/4$ (with high probability).
  In this case the probability that $R \cap Q_3$ is empty is at most
  $(1-\eps/4)^{4\log(n)/\eps} < e^{-\log(n)} = O(1/n)$, and so
  \cref{alg:large-degree} rejects with high probability.
\end{proof}

\begin{claim}
  If \cref{alg:large-degree} accepts $\ENV$ with at least constant probability,
  then all type II violations in $\ENV$ can be corrected by recoloring at most
  $\eps n/2$ nodes.
  In particular, this recoloring is such that we color $\ENV(v,1)$ and
  $\ENV(N(v),2)$ black for a certain subset of nodes $v$ (and hence does create
  any new violations).
\end{claim}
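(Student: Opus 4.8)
The plan is to handle type II violations by mirroring the structure of the proof for type I violations, but now working on the "black side" rather than the "white side." A type II violation is a pair $(v,2)$ with $\ENV(v,2)=1$ yet $\ENV(u,1)=0$ for all $u\in N(v)$. Let $R'$ be the set of such nodes $v$. The recoloring strategy is the natural one: for every $v\in R'$, color $\ENV(v,1)$ black (which immediately kills the violation at $(v,2)$, since now $v$ has a black predecessor), but this forces us to also color $\ENV(N(v),2)$ black to avoid creating type I violations in step $2$. So the number of modifications is at most $\abs{R'} + \abs{N(R')} \le \abs{R'} + \Delta\abs{R'}$ in the worst case — which is \emph{not} good enough directly. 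The key point, and the main obstacle, is that we must argue $\abs{R'}$ and the induced cover $N(R')$ are both small, using the fact that the algorithm accepted with constant probability.

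\textbf{First} I would prove the contrapositive: assuming the total number of recolored nodes exceeds $\eps n/2$, I derive that \cref{alg:large-degree} rejects with high probability. I would split $R'$ according to whether the node lies in $Y_2$ or in $Z_2$ (these are exactly the nodes $v$ with $\ENV(v,1)=1$, partitioned by whether $\bn_1(v)\ge\theta$ or $<\theta$; note every node of $R'$ has $\ENV(v,1)$ which after recoloring is black, but \emph{before} recoloring the relevant classification is by $\bn_1$). Actually the cleaner split: nodes $v\in R'$ with $v\in Y_2$ versus $v\notin Y_2$. By \cref{claim:x1-y2}, all but $O(\eps\sqrt n)$ nodes of $Y_2$ are black in step $2$ already (else we reject), so for those nodes $v$, coloring $\ENV(v,1)$ black creates no \emph{new} step-$2$ obligation beyond nodes already black — more precisely, for $v\in Y_2$ we have $\bn_1(v)\ge\theta$, meaning many neighbors of $v$ are \emph{already} black in step $1$, so $v$ is already getting "covered" and these contribute cheaply. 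The expensive case is $v\in Z_2$: here $\bn_1(v)<\theta$, so $v$ is an isolated-ish black node in step $1$ whose neighborhood in step $2$ must be painted.

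\textbf{The main obstacle} is bounding the cost of the $Z_2$ nodes. Here I would invoke the filtering argument symmetric to \cref{claim:avg-y1-x2}–\cref{claim:type-I-viol}: a node in $Z_2$ is, by definition, black in step $1$ with few black neighbors in step $1$; but the obligation we incur is over its step-$2$ neighborhood, and a step-$2$ neighbor $w$ with $\wn_2(w)\ge\theta$ lies in $X_1\subseteq W_1$, which is problematic because making $w$ black contradicts $w\in W_1$ — but the resolution is that such $w$ would make $v$'s predecessor set contain a node we are \emph{not} allowed to keep white, forcing a rejection path through $Q_3$. Concretely, if $\abs{R'}$ (restricted to its "bad" part) is $\ge\eps n/4$, then a constant fraction of these $v$ land in $F$ (since their step-$2$ neighbors outside $W_1$ are few, by a Markov argument over the at-most-$\theta n$ edges between $Z_2$ and $X_2$-type nodes, paralleling \cref{claim:avg-y1-x2}), and then $Q_3$ hits one with probability $1-O(1/n)$, at which point the second rejection condition in \cref{line:alg-large-degree-second-reject} triggers: $\ENV(v,2)=1$ but every $u\in N(v)\setminus W_1$ has $\ENV(u,1)=0$ (because $v\in Z_2$ means... wait — $v\in Z_2$ means $\ENV(v,1)=1$, not $\ENV(v,2)$). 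I would therefore recast the claim in terms of the nodes $w$ in step $2$ that get recolored: these are exactly $N(R')$, and a recolored $w$ with $\ENV(w,2)=0$ originally that gets painted black must itself be checked — and if $\wn_2(w)<\theta$ then $w\in Z_1\cup\{$something$\}$ and filtering applies. The upshot is that after excluding the $O(\eps\sqrt n)$ exceptional nodes from \cref{claim:x1-y2} and the $O(n/\log n)$ non-filtered nodes from the Markov step, whatever remains of the recoloring set, if it had size $\ge\eps n/2$, would leave $\ge\eps n/4$ witnesses inside $F$, and $Q_3$ catches one whp. \textbf{Finally} I would note that this recoloring creates no new violations: type I violations in step $2$ are avoided because we painted all of $N(v)$ black along with $\ENV(v,1)$; type II violations cannot be newly created since we only ever \emph{add} black, and adding a black predecessor never turns a satisfied step-$2$ node into a type II violation — so combined with the previous claim (type I fixable with $\le\eps n/2$ flips, disjoint budget), we get $\dist(\ENV,\oBP)<\eps$, completing the proof of \cref{thm:alg-large-degree}.
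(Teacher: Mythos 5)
Your plan breaks down at precisely the step you flag as the ``main obstacle,'' and the gap is not patched by the argument that follows. You correctly notice that naively coloring every $v \in R'$ black in step~$1$ and then all of $N(R')$ black in step~$2$ costs up to $\Delta\abs{R'}$, which is useless when $R'$ consists of high-degree vertices. But your proposed resolution is to show $R'$ is small via the $F$/$Q_3$ rejection mechanism---and this cannot work, because the nodes of $R' = R \setminus F$ are \emph{by construction} exactly the ones $Q_3$ never samples (it samples only from $F$). Even if $\abs{R'}$ is tiny, $\abs{N(R')}$ can be $\Omega(n)$ when $R'$ is high-degree, so bounding $\abs{R'}$ alone does not bound the recoloring cost. (Your proposed split of $R'$ between $Y_2$ and $Z_2$ is also vacuous: a type-II violator has $\bn_1(v)=0<\theta$, so it is never in $Y_2$.)

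The missing idea is that for $R' = R \setminus F$ you should \emph{not} color each violator and its full neighborhood. Instead, the paper builds a bipartite graph with left part $V \setminus W_1$ and right part $R'$; every $v \in R'$ has degree $\geq 4\sqrt{n\log n}$ into $V\setminus W_1$ (the defining property of being outside $F$), so \cref{lem:dom-set} yields a cover $D \subseteq V \setminus W_1$ of $R'$ with $\abs{D} \le (1/4)\sqrt{n\log n}$. One then colors $\ENV(D,1)$ and $\ENV(N(D),2)$ black. The crucial accounting is that $D \cap X_1 = \varnothing$ (whp, via \cref{claim:x1-in-w1-and-y2-in-b2}), so every $v \in D$ has $\wn_2(v) < \theta$, meaning at most $\theta\abs{D} \le \eps n/16$ of the pairs in $(N(D),2)$ actually need to flip. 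This turns a potentially linear recoloring of $N(R')$ into a sublinear one via the cover, which is the whole point of the claim's phrasing ``for a certain subset of nodes $v$.'' Your proposal never reaches this idea and therefore does not establish the bound.

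The part of your argument showing the recoloring creates no new violations (only adding black to both steps, in particular to a node and all its step-$2$ successors) is fine and matches the paper's intent.
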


\begin{proof}
  Similar to the proof of \cref{claim:type-I-viol}, let
  \[
    R = \{ v \in V \mid \ENV(v,2)=1 \land \forall u \in N(v): \ENV(u,1)=0 \}.
  \]
  be the set of type II violations.
  We show that, if \cref{alg:large-degree} accepts with at least constant
  probability, then we can correct $R$ by recoloring at most $\eps n/2$ nodes
  black.
  (Note this does not necessarily mean that $\abs{R} < \eps n/2$ as in the proof
  of \cref{claim:type-I-viol}.
  Instead what we prove is an upper bound on the number of recolorings needed to
  correct $R$.)

  By \cref{claim:x1-y2}, $\abs{R \setminus Z_2} = o(\eps n)$ and thus $\abs{Z_2}
  \ge (1-o(1))\eps n/2$.
  Arguing as in the proof of \cref{claim:type-I-viol}, if $\abs{R \cap F} \ge
  \eps n/4$, then \cref{alg:large-degree} must reject with high probability.
  Hence let us focus on the nodes in $R' = R \setminus F$.
  Consider the bipartite graph where the set of left-vertices is $V \setminus
  W_1$, that of right-vertices is $R'$, and the edges are as in $G$.
  Then the minimum degree of $R'$ in this graph is $4\sqrt{n \log n}$, which
  means we can apply \cref{lem:dom-set} and obtain a cover $D \subseteq V
  \setminus W_1$ of $R'$ with $\abs{D} = (1/4)\sqrt{n\log n}$ nodes.
  By \cref{claim:x1-in-w1-and-y2-in-b2}, $D \cap X_1 = \varnothing$ and hence
  $\wn_2(v) < \theta$ for every $v \in D$ (with high probability).
  Therefore we can correct $R'$ by coloring $\ENV(D,1)$ and $\ENV(N(D),2)$ all
  black, which means coloring at most $(\theta/4)\sqrt{n\log n} \le \eps n/16$
  nodes black.
  Together with $\abs{R \cap F} < \eps n/4$, this means we must color at most
  $(1/4+1/16)\eps n < \eps n/2$ many nodes black in total in order to correct
  $R$.
\end{proof}

This concludes the proof of \cref{thm:alg-large-degree}.


\section{Lower Bounds for the Case \texorpdfstring{$T = 2$}{T = 2}}
\label{sec:lower-bounds}

In this section, we prove our two lower bounds for the case $T = 2$, one for
one-sided error (\cref{sec:lb-one-sided}) and the other for two-sided error
algorithms (\cref{sec:lb-two-sided}).

Both proofs are based on an appropriate construction of expander graphs.
It is well-known that random graphs are good expanders; one of the main
challenges here is choosing adequate parameters so we get the properties needed
for obtaining the lower bounds.
Since the construction is the same for both lower bounds, we will address it
first.

We will work with (balanced) bipartite expanders where we have different
expansion guarantees on the left and on the right.
This is needed because, in the hard instances we produce, we need one kind of
expansion to guarantee $\eps$-farness while expansion in the other direction
(possibly in conjunction with the former) drives up the query complexity of a
correct algorithm.
In particular, for the first of these we need rather larger sets to expand but
are satisfied with a smaller expansion rate (and vice-versa for the second one).

The core of our construction is given by the following lemma, which allows us to
obtain expansion in either direction from a \enquote{prototypical} construction.
The result is such that we can \enquote{fine-tune} the upper bound on the size
of the sets that expand (given by $\gamma$ below) and the expansion ratio (which
depends on $\rho$).
Since the construction is symmetrical and satisfies the required property with
$> 1/2$ probability, in particular there is a non-zero probability that it holds
simultaneously in both directions, even for different choices of parameters (as
long as we only ask for one expansion guarantee on either side).

\begin{lemma}\label{lem_expander_lower_bound_one_step_construction}%
  Let $n$ and $\Delta$ be fixed parameters with $\Delta = o(n)$, and also let
  $\rho > 1$ (that may depend on $n$ and $\Delta$) and
  \[
    \gamma \le \min\left\{
      \frac{1}{2},
      \left(
        \frac{1}{2e}
        \left( \frac{\rho}{2 e \Delta^2} \right)^\rho
      \right)^{1/(\rho-1)}
    \right\}
  \]
  be given with $\gamma \ge 2/n$.
  Suppose we sample a random balanced bipartite multigraph $\mathcal{M} = (V,E)$
  where $V = L \cup R$ has $n$ vertices on each side and $E$ is the union of
  $\Delta$ many perfect matchings $M_1,\dots,M_\Delta$ that are chosen
  independently and uniformly at random.
  Then with $> 1/2$ probability the following holds for every set $S \subseteq
  [n]$ of left-vertices (or, by symmetry, right-vertices) of $\mathcal{M}$ with
  $\abs{S} \le \gamma n$:
  \[
    \abs{N(S)} \ge (\Delta - \rho) \abs{S}.
  \]
  Furthermore, $\mathcal{M}$ can be made into a $\Delta$-regular (simple) graph
  $G$ by redistributing edges.
\end{lemma}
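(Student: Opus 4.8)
The plan is a first-moment (union-bound) argument. For each $s$ with $1 \le s \le \gamma n$ and each left-set $S$ with $\abs{S} = s$, I would bound the probability that $\abs{N(S)} < (\Delta-\rho)s$, and then union-bound over the $\binom{n}{s}$ choices of $S$, over $s$, and---since the construction is symmetric---over the left and right sides of $\mathcal{M}$.

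The crucial point, and the reason the bound on $\gamma$ has the form it does, is to \emph{not} union-bound over a candidate ``blocking set'' $T \supseteq N(S)$ of size $\approx (\Delta-\rho)s$: this would lose a factor of roughly $e^{\Delta s}$ from $\binom{n}{\abs{T}}$ and only give a bound on $\gamma$ that is exponentially small in $\Delta$. Instead I would control the \emph{deficiency} $\Delta s - \abs{N(S)}$ directly. The $\Delta s$ edges leaving $S$---one per vertex of $S$ per matching---have exactly $\abs{N(S)}$ distinct right-endpoints, so the deficiency equals the number of these edges whose right-endpoint coincides with that of an earlier edge, for any fixed ordering of the edges. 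Revealing the edges one matching at a time, the next edge's right-endpoint is, conditioned on the past, uniform over at least $n - s + 1 \ge n/2$ right-vertices (it must avoid the $\le s-1$ endpoints already used within its own matching), of which at most $\Delta s$ can cause a coincidence; hence each edge is a ``repeat'' with conditional probability at most $2\Delta s/n$. By the standard coupling for sums of conditionally dominated $\{0,1\}$ variables, the deficiency is then stochastically dominated by $\mathrm{Bin}(\Delta s,\, 2\Delta s/n)$.

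Writing $k = \lceil \rho s\rceil$, this yields
\[
  \Pr[\abs{N(S)} < (\Delta-\rho)s]
  \;\le\; \binom{\Delta s}{k}\left(\frac{2\Delta s}{n}\right)^{k}
  \;\le\; \left(\frac{2e\Delta^{2}s}{\rho n}\right)^{\rho s},
\]
and a union bound over the $\binom{n}{s} \le (en/s)^{s}$ sets $S$ of size $s$ gives
\[
  \Pr[\exists\, S,\ \abs{S}=s,\ \abs{N(S)}<(\Delta-\rho)s]
  \;\le\; \left[\frac{2^{\rho}e^{\rho+1}\Delta^{2\rho}}{\rho^{\rho}}\left(\frac{s}{n}\right)^{\rho-1}\right]^{s}.
\]
The assumed upper bound on $\gamma$ is exactly calibrated so that the bracketed quantity is at most $\tfrac12\bigl(s/(\gamma n)\bigr)^{\rho-1} \le \tfrac12$ for every $s \le \gamma n$. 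To sum over $s$ and obtain a $1-o(1)$ (not merely bounded-away-from-$1$) probability, I would use that $\gamma n = \omega(1)$: pick a threshold $s_0 \to \infty$ with $s_0 = o(\gamma n)$; for $s \ge s_0$ each term is at most $2^{-s}$ and these sum to $o(1)$, while for $s < s_0$ the base is at most $\tfrac12\bigl(s_0/(\gamma n)\bigr)^{\rho-1} = o(1)$ and those terms sum to $o(1)$ as well. The same bound applies to right-sets, so one more union bound finishes the expansion claim. For the last sentence: the hypotheses force $\Delta = o(\sqrt{n})$ (otherwise $\gamma = \omega(1/n)$ is incompatible with the stated bound on $\gamma$), so $\mathcal{M}$ has only $O(\Delta^{2}) = o(n)$ parallel edges; these are removed by the standard edge-switching argument, which (whp) touches only $O(1)$ edges at any given vertex and hence perturbs the neighbourhood of every small set negligibly, yielding the simple $\Delta$-regular graph $G$.

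The step I expect to be the main obstacle is the second paragraph: recognizing that the blocking-set union bound is hopeless for these parameters, and that the deficiency must instead be bounded via the ``reveal the edges one at a time, conditional-Bernoulli domination'' trick---the $\Delta^{2}$ appearing in the exponent of the bound on $\gamma$ being the fingerprint of \emph{pairs} of edges colliding. Everything after that---the Chernoff-type tail, the two union bounds, the two-part split over $s$ needed for a high-probability conclusion, and the cleanup to a simple graph---is routine.
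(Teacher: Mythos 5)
Your approach is fundamentally the same as the paper's: define the deficiency $\Delta s - \abs{N(S)}$ as a count of collisions, reveal matching edges one at a time, bound the conditional collision probability, apply a binomial tail bound, then union-bound over sets and sizes. But you make a small and genuinely important correction. The paper bounds the per-edge conditional collision probability uniformly by $2\gamma\Delta$ (using only $\abs{S} \le \gamma n$), and then writes $\left(\frac{2\gamma e\Delta^2}{\rho}\right)^{\rho\abs{S}} \le \left(\frac{2\gamma e\Delta^2}{\rho}\right)^{\rho\gamma n}$; since the $\gamma$ constraint forces the base strictly below $1$, that inequality is reversed. Patching it by keeping the exponent $\rho\abs{S}$ does not save the argument either: the $\abs{S}=1$ term of the union bound becomes $n\cdot(2\gamma e\Delta^2/\rho)^\rho$, which under the $\gamma$ constraint is of order $\gamma n$ and diverges precisely because $\gamma n = \omega(1)$. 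You sidestep this by using the size-dependent conditional collision probability $2\Delta s/n$ rather than the loose $2\gamma\Delta$, which makes the per-set tail bound scale correctly in $s$; the bracketed quantity is then at most $\tfrac12(s/(\gamma n))^{\rho-1}$, and the two-part split over $s$ (together with the geometric sum with ratio $o(1)$ for small $s$) yields the $1-o(1)$ guarantee. So this is not a different route from the paper's --- it is the same route carried out carefully, and in the process you have located and repaired a flaw in the paper's own derivation. Two tiny remarks: your phrase ``those terms sum to $o(1)$'' for $s < s_0$ deserves one more line (the geometric sum with ratio $\tfrac12(s_0/(\gamma n))^{\rho-1}$), and the conclusion $(\gamma n)^{\rho-1}\to\infty$ implicitly uses $\rho - 1$ bounded away from $0$, which holds in all of the paper's instantiations but is not literally asserted in the lemma statement. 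The cleanup to a simple $\Delta$-regular graph is asserted without proof in the paper as well, so your sketch is as complete as the original.
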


The proof is by a probabilistic argument that is reminiscent of the standard
argument showing the existence of bipartite expanders (as in, e.g.,
\cite{hoory06_expander_bams} or also \cite{vadhan12_pseudorandomness_book}).
In the standard setting, however, it is typical to view $\Delta$ as a constant
(or as a parameter to be minimized), whereas here we are interested in $\Delta$
as an arbitrary parameter that scales along with $n$.
Accounting for this requires some additional considerations.

\begin{proof}
  Let us first fix some set $S = \{ s_1,\dots,s_{\abs{S}} \}$ of left-vertices
  with $\abs{S} \le \gamma n$ and consider the probability with which the
  property is satisfied conditioned on the choice of $E$.
  Let 
  \[
    Z_{i,j} = \left[ M_i(s_j) \in \bigcup_{k=1}^{i-1} M_k(S) \right],
  \]
  that is, $Z_{i,j} = 1$ if and only if there is a collision between $M_i(s_j)$
  and some $M_k(S)$ for $k < i-1$ (and otherwise $Z_{i,j} = 0$).
  Since the $M_i$ are chosen uniformly at random, we can view the $M_i$ as being
  obtained by picking right-vertices $M_i(s_1),\dots,M_i(s_{\abs{S}})$ in order
  and without replacement.
  Using $\gamma \le 1/2$, this gives us that
  \[
    \Pr[Z_{i,j} = 1] \le \frac{\abs{\bigcup_{k=1}^{i-1} M_k(S)}}{n-\abs{S}}
    \le \frac{\Delta \abs{S}}{n - \abs{S}}
    \le \frac{\gamma \Delta}{1 - \gamma}
    \le 2 \gamma \Delta
  \]
  for every $i \in [\Delta]$ and $j \in [\abs{S}]$.
  The $Z_{i,j}$ are independent for different values of $i$, but for $j < j'$ we
  do \emph{not} have that $Z_{i,j}$ and $Z_{i,j'}$ are independent (since
  $Z_{i,j} = 1$ means $M_i(s_j) \in \bigcup_{k=1}^{i-1} M_k(S)$, which decreases
  the probability of $M_i(s_{j'})$ being picked in the union
  $\bigcup_{k=1}^{i-1} M_k(S)$ since we allow no repeats).
  Nevertheless, observe that $\Pr[Z_{i,j'} = 1 \mid Z_{i,j} = 1] \le
  \Pr[Z_{i,j'} = 1]$ holds, which by induction means we can treat the $Z_{i,j}$
  as independent for the purposes of an upper bound on the probability of their
  \enquote{intersection}; that is, formally,
  \[
    \Pr\left[ \forall k \in [\ell]: Z_{i,j_k} = \ell \right]
    \le \prod_{k=1}^\ell \Pr[ Z_{i,j_k} = 1 ]
  \]
  for any $j_1 < \cdots < j_\ell$.
  Thus using the well-known inequality $\binom{a}{b} \le (ae/b)^b$ for integers
  $a$ and $b$, we get that
  \begin{align*}
    \Pr[\abs{N(S)} < (\Delta - \rho) \abs{S}]
    &\le \Pr\left[\sum_{i,j} Z_{i,j} \ge \rho \abs{S}\right] \\
    &\le \binom{\Delta \abs{S}}{\rho \abs{S}}
      (\max_{i,j} \Pr[Z_{i,j} = 1])^{\rho \abs{S}} \\
    &\le \left( \frac{e\Delta}{\rho} \right)^{\rho\abs{S}}
      \left( 2\gamma\Delta \right)^{\rho\abs{S}} \\
    &\le \left( \frac{2 \gamma e \Delta^2}{\rho} \right)^{\rho \gamma n},
  \end{align*}
  where the last inequality is due to $\abs{S} \le \gamma n$.
  Finally, using a union bound and $\gamma \le 1/2$ once more, the probability
  that $\abs{N(S)} < (\Delta - \rho) \abs{S}$ holds for \emph{any} set of
  left-vertices $S$ (with $\abs{S} \le \gamma n$) is at most
  \begin{align*}
    \sum_{i=1}^{\gamma n} \binom{n}{i} 
      \left( \frac{2 \gamma e \Delta^2}{\rho} \right)^{\rho \gamma n}
    &\le \gamma n \binom{n}{\gamma n}
      \left( \frac{2 \gamma e \Delta^2}{\rho} \right)^{\rho \gamma n} \\
    &\le \gamma n
      \left( \frac{e}{\gamma} \right)^{\gamma n}
      \left( \frac{2 \gamma e \Delta^2}{\rho} \right)^{\rho \gamma n} \\
    &= \gamma n \left(
      e \gamma^{\rho-1} \left( \frac{2e\Delta^2}{\rho} \right)^\rho
    \right)^{\gamma n} \\
    &\le \frac{\gamma n}{2^{\gamma n}} \\
    &< \frac{1}{2}.
    \qedhere
  \end{align*}
\end{proof}

\subsection{Lower Bound for One-sided Error Algorithms}
\label{sec:lb-one-sided}

In this section we restate and prove:

\restateThmLBOneSided*

As mentioned above, the proof requires an appropriate expander construction.
The following lemma shows the two properties we need and how to choose the
parameters of \cref{lem_expander_lower_bound_one_step_construction} in order to
obtain it.
Notice how the expansion guarantee is stronger for right-vertices, which is
offset by the fact that the respective sets that must expand are much smaller.

\begin{lemma}\label{lem_expander_lower_bound_one_step_one_sided_error}
  Let $\Delta = \Delta(n) = \omega(1)$ and $\eps > 0$ be given as functions of
  $n$, and let $\beta = 1/24$.
  Then there is a family $G_n$ of balanced bipartite graphs (with $n$ nodes on
  either side) such that, when $n$ is large enough, $G_n$ has the following
  properties:
  \begin{enumerate}
    \item \emph{Moderate expansion on both sides.}
    For every set $S$ of left- or right-vertices of $G_n$ with $\abs{S} \le
    n/96e\Delta$,
    \[
      \abs{N(S)} \ge (1 - \beta) \Delta \abs{S}.
    \]
    In addition, by \cref{lem_expanders_unique_neighbors}, $S$ has at least
    $(1-2\beta)\Delta\abs{S}$ unique neighbors.
    \item \emph{Small intersections of neighbor sets on the right.}
    Let $\eps = \Omega(\Delta^b / n)$ for some (constant) $b \ge 1$.
    Then for every set $S$ of right-vertices with $\abs{S} \le 2/\eps$ we have
    \[
      \abs{N(S)} \ge (\Delta - r)\abs{S}
    \]
    where $r$ is chosen depending on $b$ as follows:
    \begin{enumerate}
      \item If $b > 2$, then $r$ is constant.
      \item If $b = 2$, then $r = \Theta(\log \Delta)$.
      \item If $1 \le b < 2$, then $r = \Theta(\Delta^{2-b})$.
    \end{enumerate}
    By \cref{lem_expanders_intersection}, in each of these cases we have that,
    for every two disjoint sets of right-vertices $S$ and $S'$ with $\abs{S} \le
    1/\eps$ and $\abs{S'} \ge \abs{S}$,
    \[
      \Pr_{v \in U_{S'}}\left[ \abs*{N(v) \cap N(S)}>2r \right]
        \le \frac{\abs{S}}{\abs{S'}}.
    \]
  \end{enumerate}
\end{lemma}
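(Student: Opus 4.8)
The plan is to derive both properties from a single run of the random construction in \cref{lem_expander_lower_bound_one_step_construction}, invoking its conclusion twice with two different choices of the parameters $(\rho,\gamma)$, and then to read off the two ``In addition'' clauses from \cref{lem_expanders_unique_neighbors} and \cref{lem_expanders_intersection}. Concretely: sample the multigraph $\mathcal{M}$ (a union of $\Delta$ independent uniform perfect matchings) once, pass to the $\Delta$-regular simple graph $G_n$ that \cref{lem_expander_lower_bound_one_step_construction} guarantees, and observe that with probability $1-o(1)$ it simultaneously enjoys the expansion guarantee for parameters $(\rho_1,\gamma_1)$ tuned to Property~1 and the one for $(\rho_2,\gamma_2)$ tuned to Property~2; since there are only two such events and each fails with probability $o(1)$, a union bound produces a graph meeting both for every sufficiently large $n$. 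Throughout I would assume $\Delta=o(n)$ and $\eps=o(1)$ (otherwise the lower bound of \cref{thm:lb-one-sided} is already vacuous), which is exactly what makes the side conditions $\Delta=o(n)$, $\gamma=\omega(1/n)$ and $\gamma\le 1/2$ of \cref{lem_expander_lower_bound_one_step_construction} hold for both parameter settings.

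For Property~1 I would take $\rho_1:=\beta\Delta$, so that $\Delta-\rho_1=(1-\beta)\Delta$, and $\gamma_1:=1/(96e\Delta)$, which is precisely the set-size threshold $n/(96e\Delta)$ in the statement. One then checks directly that this $\gamma_1$ satisfies the hypothesis $\gamma\le\bigl(\tfrac{1}{2e}(\tfrac{\rho}{2e\Delta^2})^{\rho}\bigr)^{1/(\rho-1)}$ of \cref{lem_expander_lower_bound_one_step_construction}: substituting $\rho=\beta\Delta$ makes both sides of order $\Delta^{-1}$, and the constants work out for $\beta$ in the admissible range, so the inequality holds once $n$ is large. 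Since $G_n$ is $\Delta$-regular, $\Delta\abs{S}$ is the number of edges leaving $S$, so $\abs{N(S)}\ge(1-\beta)\Delta\abs{S}$ combined with \cref{lem_expanders_unique_neighbors} (applied with $\alpha=\beta$) gives the asserted $(1-2\beta)\Delta\abs{S}$ unique neighbors; this holds for both left- and right-sets because the construction---and hence \cref{lem_expander_lower_bound_one_step_construction}---is symmetric.

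For Property~2 I would set $\gamma_2:=2/(\eps n)$, matching the threshold $\abs{S}\le 2/\eps$, and $\rho_2:=r$ with $r$ exactly the quantity named in the three cases: a constant strictly larger than $b/(b-2)$ when $b>2$; $r=\Theta(\log\Delta)$ when $b=2$; and $r=\Theta(\Delta^{2-b})$ when $1\le b<2$ (with the hidden constant large enough for the estimate below, yet small enough that $r<\Delta/2$, which only needs attention when $b=1$). In each case I would verify $\gamma_2\le\bigl(\tfrac{1}{2e}(\tfrac{\rho_2}{2e\Delta^2})^{\rho_2}\bigr)^{1/(\rho_2-1)}$: from $\eps=\Omega(\Delta^b/n)$ one has $\gamma_2=O(\Delta^{-b})$, whereas a short computation shows the right-hand side is of order $\Delta^{-2\rho_2/(\rho_2-1)}$ up to a factor polynomial in $\rho_2$; the three choices of $r$ are calibrated precisely so that this quantity is $\Omega(\Delta^{-b})$, hence at least $\gamma_2$, for $n$ large. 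This yields $\abs{N(S)}\ge(\Delta-r)\abs{S}$ for every set $S$ of right-vertices with $\abs{S}\le 2/\eps$, and \cref{lem_expanders_intersection}(2)---applied with $\beta':=r/\Delta<1/2$ and $\alpha:=2/(\eps n)$---delivers the stated tail bound $\Pr_{v\in U_{S'}}[\abs{N(v)\cap N(S)}>2r]\le\abs{S}/\abs{S'}$ for disjoint right-sets with $\abs{S}\le 1/\eps$ and $\abs{S'}\ge\abs{S}$.

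The routine parts are the union bound, checking the side conditions of \cref{lem_expander_lower_bound_one_step_construction}, and the two reductions to \cref{lem_expanders_unique_neighbors} and \cref{lem_expanders_intersection}. The step that genuinely requires care---and the main obstacle---is the parameter calibration for Property~2 in the boundary regimes $b=2$ and $b$ near $1$: there the target $\gamma_2=\Theta(\Delta^{-b})$ and the admissible bound $\Theta(\Delta^{-2\rho_2/(\rho_2-1)})$ have the same polynomial order in $\Delta$, so one cannot be cavalier about constants and must track the sub-polynomial corrections. This is exactly why $r$ must acquire a $\Theta(\log\Delta)$ factor when $b=2$ and grow polynomially when $b<2$, rather than remaining a constant as in the case $b>2$.
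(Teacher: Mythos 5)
Your proposal is correct and follows essentially the same route as the paper's proof: invoke \cref{lem_expander_lower_bound_one_step_construction} on one shared random construction with $(\rho,\gamma)=(\beta\Delta,\,1/96e\Delta)$ for Property~1 and with $(\rho,\gamma)=(r,\,2/\eps n)$ for Property~2, verify the admissibility inequality in each case, take a union bound, and finish via \cref{lem_expanders_unique_neighbors,lem_expanders_intersection}. The only visible divergence is cosmetic: for $b>2$ you give the threshold $\rho>b/(b-2)$ (the exact boundary where $2\rho/(\rho-1)<b$), whereas the paper simply fixes $r=1+4/(b-2)$, a particular value above that threshold; your remark about keeping $r<\Delta/2$ near $b=1$ so that \cref{lem_expanders_intersection}(2) applies is a real subtlety that the paper's explicit constant $r=1+16\alpha e^3\Delta^{2-b}$ does not visibly respect, though the lower bound obtained from the lemma is vacuous in that corner anyway.
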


\begin{proof}
  \begin{enumerate}
    \item We obtain the statement directly by plugging in $\gamma = 1/96e\Delta$
    and $\rho = \beta\Delta$ in
    \cref{lem_expander_lower_bound_one_step_construction}.
    All we need to do is verify that the conditions required in
    \cref{lem_expander_lower_bound_one_step_construction} hold.
    We have $\rho > 1$, $\gamma \ge 2/n$, and $\gamma \le 1/2$ automatically, so
    we only need to show the second upper bound on $\gamma$.
    To see why it holds, let $\Delta$ be large enough so that $2^{\beta\Delta} >
    8e^2\Delta/\beta$.
    Then we have $\beta/2e\Delta > 2e/2^{\beta\Delta-1}$ and in turn
    \[
      \left(
        \frac{1}{2e}
        \left( \frac{\rho}{2 e \Delta^2} \right)^\rho
      \right)^{1/(\rho-1)}
      = \left(
        \frac{1}{2e}
        \left( \frac{\beta}{2 e \Delta} \right)^{\beta \Delta}
      \right)^{1/(\beta\Delta-1)}
      = \left(
        \frac{1}{2e} \cdot \frac{\beta}{2e\Delta}
      \right)^{1/(\beta\Delta-1)} \frac{\beta}{2e\Delta}
      > \frac{\beta}{4e\Delta} 
      = \gamma.
    \]
    \item 
    \begin{enumerate}
      \item Set $r = 1 + 4/(b-2)$ and plug in $\gamma = 2/\eps n$ and $\rho = r$
      in \cref{lem_expander_lower_bound_one_step_construction}.
      As before, we only need to verify that the conditions of
      \cref{lem_expander_lower_bound_one_step_construction} hold.
      Since $b > 2$, we have
      \[
        \Delta^{b/2 + 1}
        = \Delta \cdot \frac{\Delta^b}{\Delta^{b/2}}
        < \Delta \cdot \frac{\Delta^b}{n}
        = o(\eps n).
      \]
      In addition, $r/(r-1) = b/4 + 1/2$, which in turn means that
      \[
        \left( \frac{1}{2e}
          \left( \frac{r}{2e\Delta^2} \right)^r \right)^{1/(r-1)}
        = \left( \frac{1}{2e} \left( \frac{r}{2e} \right)^r \right)^{1/(r-1)}
          \cdot \frac{1}{\Delta^{2r/(r-1)}}
        = \Omega(1) \cdot \frac{1}{\Delta^{b/2+1}}
        = \omega\left( \frac{1}{\eps n} \right).
      \]
      \item Set $r = 1 + 4e^2\log\Delta$.
      Again we plug in $\gamma = 2/\eps n$ and $\rho = r$ in
      \cref{lem_expander_lower_bound_one_step_construction}.
      This is justified since $r/(r-1) = 1+1/4e^2\log\Delta < 1+1/\log\Delta$
      as well as $1/\Delta^2 = \Omega(1/\eps n)$ and then
      \begin{align*}
        \left( \frac{1}{2e}
          \left( \frac{r}{2e\Delta^2} \right)^r \right)^{1/(r-1)}
        &> \left( \frac{1}{2e}
          \left( \frac{2e\log\Delta}{\Delta^2} \right)^r \right)^{1/(r-1)} \\
        &> \left( \frac{\log\Delta}{\Delta^2} \right)^{1+1/\log\Delta} \\
        &> \frac{\log\Delta}{\Delta^2}
          \left( \frac{1}{\Delta^2} \right)^{1/\log\Delta} \\
        &= \frac{\log \Delta}{4\Delta^2} \\
        &= \omega\left( \frac{1}{\eps n} \right).
      \end{align*}
      \item Let $a > 0$ be such that $\eps \ge a \Delta^b / n$.
      Set $r = 1 + 8e^2\alpha\Delta^{2-b}$ where $\alpha = 2\max\{1,1/a\}$.
      Since $\Delta^{2-b} = \omega(\log\Delta)$, we have $\Delta^{2-b} >
      b\log\Delta$ for large enough $\Delta$, and thus $r/(r-1) =
      1+1/8e^2\alpha\Delta^{2-b} < 1+1/\Delta^{2-b} < 1+1/b\log\Delta$.
      Hence this time plugging in $\gamma = 2/\eps n$ and $\rho = r$ in
      \cref{lem_expander_lower_bound_one_step_construction} is justified since
      \[
        \left( \frac{1}{2e}
          \left( \frac{r}{2e\Delta^2} \right)^r \right)^{1/(r-1)}
        > \left( \frac{2\alpha}{\Delta^b} \right)^{1+1/b\log\Delta}
        > \frac{2\alpha}{\Delta^b}
          \left( \frac{1}{\Delta^b} \right)^{1/b\log\Delta}
        = \frac{\alpha}{\Delta^b}
        \ge \frac{2}{a\Delta^b}
        \ge \frac{2}{\eps n}
        = \gamma.
        \qedhere
      \]
    \end{enumerate}
  \end{enumerate}
\end{proof}

With this lemma, we now prove \cref{thm:lb-one-sided}.

\begin{proof}[Proof of \cref{thm:lb-one-sided}.]
  Let $A$ be a non-adaptive, one-sided error tester for $\oBP$ that makes $q$
  queries.
  We use the graph family from
  \cref{lem_expander_lower_bound_one_step_one_sided_error}.

  \paragraph{The distribution.}
  Consider the distribution $D$ of inputs obtained by the following process:
  \begin{enumerate}
    \item Pick a subset $B \subseteq V$ at random by independently adding each
    right-vertex in $V$ to $B$ with probability $6\eps$.
    \item Set $\ENV(v,1) = 0$ for every $v \in V$.
    \item Set $\ENV(v,2) = 1$ if and only if $v \in B$.
  \end{enumerate}
  As we show next, an instance produced by $D$ is $\eps$-far from $\oBP$ with
  high probability.
  (Hence without restriction we may assume $D$ always produces an $\eps$-far
  instance.)
  Intuitively this should be the case since most nodes only have about a $6\eps$
  fraction of their neighbors colored black.
  However, this alone is not enough because it does exclude the possibility of a
  group of nodes $C$ sharing the same white neighbors.
  Indeed, if this is the case, then $N(C)$ can be predominantly black even
  though $N(C)$ contains almost only white nodes.
  This is where the first of the expander guarantees of
  \cref{lem_expander_lower_bound_one_step_one_sided_error} kicks in, forcing $C$
  to have many unique neighbors (and thus $N(C)$ to be predominantly white).

  \paragraph{$\eps$-farness.}
  Since each node is added to $B$ independently with probability $6\eps$, by the
  Chernoff bound (\cref{thm_chernoff}) we have $11\eps n/2 \le \abs{B} \le
  13\eps n/2$ with high probability.
  (Recall we have $n$ left- and $n$ right-vertices.)
  In addition, since $\Delta = \omega(1)$ is the interesting case, we may also
  assume $\Delta/18\eps > \ln(2\Delta/\eps)$.
  Again using that we color each node independently with probability $6\eps$, by
  the Chernoff bound we have that, for any fixed $v \in V$,
  \[
    \Pr\left[ \frac{\abs{N(v) \cap B}}{\abs{N(v)}} \ge \frac{1}{3} \right]
    \le 2e^{-\Delta/18\eps}
    < \frac{\eps}{\Delta}.
  \]
  Let $\ENV'$ be the instance where, for every $v \in V$ that satisfies the
  above, we recolor all of $N(v)$ white; that is, for every $v$ with $\abs{N(v)
  \cap B} \ge \abs{N(v)}/3$, we set $\ENV'(u,2) = 0$ for $u \in N(v)$,
  $\ENV'(u,2) = 0$ if $u \notin N(v)$ for any such $v$, and $\ENV'(u,1) = 0$ for
  every $u \in V$.
  Then certainly $\dist(\ENV,\ENV') < (\eps n/\Delta)\cdot\Delta/4n = \eps/4$.
  By the triangle inequality, it suffices to show $\dist(\ENV',\oBP) \ge
  5\eps/4$.

  Let $\ENV_0 \in \oBP$ be an instance that is at minimum distance to $\ENV'$.
  In addition, let $C = \{ v \in V \mid \ENV_0(v,1) = 1 \}$ and $B_0 = \{ v \in
  V \mid \ENV_0(v,2) = 1 \}$.
  Clearly 
  \[
    \dist(\ENV',\ENV_0) \ge \frac{\abs{B_0 \setminus B}}{4n}
    \ge \frac{\abs{B_0} - \abs{B}}{4n}
    \ge \frac{\abs{B_0} - 13\eps n/2}{4n},
  \]
  so we may assume $\abs{B_0} \le 23\eps n/2$ as otherwise the
  claim is trivial.
  Set $\eps_0 = 1/(13 \cdot 96e) = 1/1248e$, which implies $12\eps n/\Delta <
  n/96e\Delta$.
  Hence we may apply the first item of
  \cref{lem_expander_lower_bound_one_step_one_sided_error} and obtain that, if
  $\abs{C} > 12\eps n/\Delta$, then 
  \[
    \abs{N(C)} > (1-\beta)\Delta \cdot \frac{12\eps n}{\Delta}
    = \frac{23\eps n}{2}
    \ge \abs{B_0}.
  \]
  Thus we must have $\abs{C} \le 12\eps n/\Delta$.
  We argue next that we actually have $C = \varnothing$ and thus also $B_0 =
  \varnothing$.
  Hence $\ENV_0$ must be the instance where $\ENV_0(v,i) = 0$ for every pair
  $(v,i)$, which in particular implies
  \[
    \dist(\ENV,\ENV_0) \ge \frac{\abs{B}}{4n}
    \ge \frac{11\eps}{8}
    > \frac{5\eps}{4}.
  \]

  \cref{lem_expanders_unique_neighbors} tells us that $C$ has at least
  $(1-2\beta)\Delta\abs{C}$ unique neighbors.
  By averaging, there is a node $v \in C$ for which the set $U_v$ of unique
  neighbors of $v$ (i.e., the set of neighbors $u \in N(v)$ for which the only
  node that $u$ is incident to in $C$ is $v$) is such that $\abs{U_v} \ge
  (1-2\beta)\abs{N(v)} = 11\Delta/12$.
  In turn, by the pigeonhole principle (and using $\abs{N(v) \cap B} <
  \abs{N(v)}/3$), this implies that at least $7\Delta/12$ neighbors of $v$ are
  in $U_v \setminus B$.

  We now claim that we can contradict the minimality of $\ENV_0$ by constructing
  $\ENV_v \in \oBP$ with $\dist(\ENV',\ENV_v) < \dist(\ENV',\ENV_0)$.
  Namely we obtain $\ENV_v$ from $\ENV_0$ by coloring $v$ and all its unique
  neighbors white; formally,
  \[
    \ENV_v(u,t) = \begin{cases}
      0, & \text{($u = v$ and $t = 1$) or ($u \in U_v$ and $t = 2$)} \\
      \ENV_0(u,t), & \text{otherwise.}
    \end{cases}
  \]
  Since $\ENV'(v,1) = \ENV_v(v,1) = 0 \neq 1 = \ENV_0(v,1)$ and also
  $\ENV_0(u,2) = 1$ and $\ENV_v(u,2) = 0$ for every $u \in U_v$, we have
  \begin{align*}
    &\dist(\ENV',\ENV_0) - \dist(\ENV',\ENV_v) \\
    &\qquad= \frac{1}{4n} \left(
        \sum_{u,t} \left([\ENV'(u,t) \neq \ENV_0(u,t)] 
          - [\ENV'(u,t) \neq \ENV_v(u,t)]\right)
      \right) \\
    &\qquad= \frac{1}{4n} \Bigg(
        [\ENV'(v,1) \neq \ENV_0(v,1)] - [\ENV'(v,1) \neq \ENV_v(v,1)] \\
    &\qquad\qquad + \sum_{u \in U_v}\left(
          [\ENV'(u,2) \neq \ENV_0(u,2)] - [\ENV'(u,2) \neq \ENV_v(u,2)]\right)
      \Bigg) \\
    &\qquad\ge \frac{1}{4n}
      \left( 1 + \abs{U_v \setminus B} - \abs{U_v \cap B} \right) \\
    &\qquad\ge \frac{1}{4n}
      \left( 1 + \frac{7\Delta}{12} - \frac{\Delta}{3} \right) \\
    &\qquad> 0.
  \end{align*}
  The contradiction arises from the existence of $v$.
  It follows that $C = \varnothing$, as desired.

  \paragraph{Lower bound on $q$ for non-adaptive testers.}
  We first state the argument for non-adaptive testers.
  Once this is in place, it is then simple to obtain the lower bound in the
  adaptive case with a few additional observations.

  Applying Yao's minimax principle, we fix a set $Q \subseteq V$ of $q =
  \abs{Q}$ queried nodes.
  Without restriction, the algorithm $A$ queries all of $\ENV(Q,1)$ and
  $\ENV(Q,2)$ (and so its query complexity is $2q$).
  For $v \in V$ and a set $X \subseteq V$, let
  \[
    \mu_v(X) = \frac{\abs*{N(v) \cap X}}{\abs*{N(v)}}.
  \]
  In addition, for $b \in \binalph$ and $t \in \{1,2\}$, let $Q_{b,t} = \{ v \in
  Q \mid \ENV(v,t) = b \}$.
  Consider the sets
  \[
    C_1 = \left\{ v \in R \mid \mu_v(Q_{0,1}) \ge \frac{1}{2} \right\}
  \]
  and
  \[
    C_2 = \left\{ v \in R \mid \mu_v(N(Q_{0,2})) \ge \frac{1}{2} \right\}.
  \]
  (Recall $R$ is the set of right-vertices.)
  These sets correspond to the two possible strategies for verifying that a node
  $v \in Q_{1,2}$ is incorrectly colored black:
  With $C_1$ one tries to detect if the predecessors of $v$ are white (since, if
  every predecessor of $v$ is white, then $v$ is a type II violation).
  Meanwhile with $C_2$ we are ascertaining the color of the predecessors of $v$
  indirectly by querying their successors; the rationale is that, if we know
  every predecessor of $v$ has a white successor, then we know that (if the
  $\oBP$ rule is being followed correctly) every predecessor of $v$ has to be
  white, and hence $v$ cannot be black.

  In fact it turns out that, in order to reject $\ENV$, $A$ \emph{must} follow a
  mix of these strategies.
  More precisely, $A$ can only reject $\ENV$ if there is at least one $v \in
  Q_{1,2}$ that is also in $C_1 \cup C_2$.
  (This might not be a sufficient condition for $A$ to reject, but it is
  certainly necessary.)
  The reason for this is that otherwise, for every $v \in Q_{1,2}$,
  $\mu_v(Q_{0,1}) + \mu_v(N(Q_{0,2})) < 1$ and thus we always have some
  predecessor $p_v \in N(v) \setminus (Q_{0,1} \cup N(Q_{0,2}))$.
  We claim that, in this case, we can obtain $\ENV' \in \oBP$ with $\ENV(Q,1) =
  \ENV'(Q,1)$ and $\ENV(Q,2) = \ENV'(Q,2)$ as follows:
  \[
    \ENV'(u,1) = \begin{cases}
      1, & \exists v \in Q_{1,2}: u = p_v \\
      0, & \text{otherwise}
    \end{cases}
  \]
  and
  \[
    \ENV'(u,2) = \begin{cases}
      1, & \exists v \in Q_{1,2}: u \in N(p_v) \\
      0, & \text{otherwise.}
    \end{cases}
  \]
  Since $\ENV' \in \oBP$ is evident (as $v \in N(p_v)$), we need only check that
  $\ENV(u,t) = \ENV'(u,t)$ for every $u \in Q$:
  \begin{itemize}
    \item If $t = 1$, then $\ENV(u,t) = 0$.
    The only case where $\ENV'(u,t) \neq 0$ is when $u = p_v$ for some $v \in
    Q_{1,2}$, and thus by definition $u \notin Q$.
    \item If $\ENV(u,2) = 0$, then $u \in Q_{0,2}$ and so there is no $v \in
    Q_{1,2}$ for which $p_v \in N(u)$.
    Hence we also have $\ENV'(u,2) = 0$.
    \item Finally, if $\ENV(u,2) = 1$, then $u \in Q_{1,2}$ and so $u \in
    N(p_u)$ and consequently $\ENV'(u,2) = 1$.
  \end{itemize}

  Moreover, observe that, for $A$ to succeed, at least one of $\abs{C_1 \cap
  Q_{1,2}}$ and $\abs{C_2 \cap Q_{1,2}}$ must be at least $1/26\eps$.
  Otherwise, by the union bound, the probability that $B \cap (C_1 \cup C_2)$ is
  non-empty is at most $6/13 < 1/2$ (and, when $B \cap (C_1 \cup C_2)$ is empty,
  certainly there is no $v \in Q_{1,2}$ that would cause $A$ to reject).
  We now derive the lower bound from either of the two possibilities $\abs{C_1
  \cap Q_{1,2}} \ge 1/26\eps$ and $\abs{C_2 \cap Q_{1,2}} \ge 1/26\eps$ as
  follows:
  \begin{description}
    \item[$\abs{C_1 \cap Q_{1,2}} \ge 1/26\eps$.] 
    Consider an arbitrary set $C_1' \subseteq C_1 \cap Q_{1,2}$ with $\abs{C_1'}
    = \min\{ 1/26\eps, n/96e\Delta \}$.
    Since $\eps = \Omega(\Delta/n)$, we have that $\abs{C_1'} = \Omega(1/\eps)$.
    For $v \in C_1'$, let $U_v$ denote the set of unique neighbors of $v$.
    By the first item of
    \cref{lem_expander_lower_bound_one_step_one_sided_error}, $C_1'$ has at
    least $(1-2\beta)\Delta\abs{C_1'}$ unique neighbors.
    Hence, if we draw $v$ uniformly at random from $C_1'$, then $\abs{N(v)
    \setminus U_v} \le 2\beta\Delta$ in expectation.
    By Markov's inequality, this implies at least $\abs{C_1'}/2$ many nodes $v
    \in C_1'$ are such that $\abs{N(v) \setminus U_v} \le 4\beta\Delta$. 
    For each such node $v$, since we have $\mu_v(Q_{0,1}) \ge 1/2$, in
    particular $A$ must query at least half the nodes in $N(v)$, so at least
    $(1/2-4\beta)\Delta = \Delta/3$ neighbors that are unique to $v$.
    Hence $q \ge (\Delta/3) \cdot \abs{C_1'}/2 = \Omega(\Delta/\eps)$. 

    \item[$\abs{C_2 \cap Q_{1,2}} \ge 1/26\eps$.] 
    Let $C_2' \subseteq C_2 \cap Q_{1,2}$ be such that $\abs{C_2'} = \min\{
    1/26\eps, n/96e\Delta \}$.
    Without restriction, we assume that $\abs{Q_{0,2}} \ge \abs{C_2'}$.
    Arguing as before, at least $\abs{C_2'}/2$ many nodes $v \in C_2'$ are such
    that $\abs{U_v} \ge (1-4\beta)\Delta$. 
    Since each $v \in C_2'$ must be such that at least half of its neighbors are
    covered by at least one node in $Q_{0,2}$, the set $Q_{0,2}$ must thus cover
    at least $(1/2-4\beta)\Delta\abs{C_2'}/2 = \Delta\abs{C_2'}/6$ nodes of
    $\bigcup_{v \in C_2'} U_v$.
    As $C_2' \subseteq Q_{1,2}$ and $Q_{0,2}$ are disjoint by definition, we
    plug in $S = C_2'$ and $S' = Q_{0,2}$ in the second item of
    \cref{lem_expander_lower_bound_one_step_one_sided_error} to obtain that
    there is some set $Z \subseteq Q_{0,2}$ with $\abs{Z} \le \abs{C_2'}$ such
    that, for every $v \in Q_{0,2} \setminus Z$,
    \[
      \abs{N(v) \cap N(C_2')} \le 2r,
    \]
    where $r$ is as in \cref{lem_expander_lower_bound_one_step_one_sided_error}.
    Furthermore, using the first item of \cref{lem_expanders_intersection}
    together with the expansion guarantee of
    \cref{lem_expander_lower_bound_one_step_one_sided_error}, we have that
    $\abs{N(Z) \cap N(C_2')} \le 2r\abs{C_2'}$.
    Hence, excluding the nodes in $N(Z)$, we are left with
    $(\Delta/6-2r)\abs{C_2'} = \Omega(\Delta/\eps)$ nodes in $N(C_2')$ to be
    covered by $Q_{0,2} \setminus Z$, of which each node can contribute with at
    most $2r$ covered nodes.
    It follows that $q = \Omega(\Delta/\eps r)$.
    A case-by-case analysis based on the values of $b$ and $r$ as given by the
    second item in \cref{lem_expander_lower_bound_one_step_one_sided_error}
    yields the bounds in the theorem statement.
  \end{description}

  \paragraph{Lower bound on $q$ for adaptive testers.}
  We now use the above to derive the bounds for the case where $A$ is adaptive.
  (Recall that we are still relying on Yao's minimax principle, so $A$ is
  deterministic.)
  Notice first that our distribution $D$ generates instances that are all-white
  in the first step and have $O(\eps n)$ black nodes in the second one.
  Hence $A$ only sees white nodes (with, say, at least $4/5$ probability) unless
  it makes $\Omega(1/\eps)$ second-step queries.
  This gives the first term in the lower bounds.

  Continuing our adaptation of the argument, we define $Q$, $Q_{b,t}$, and $C_i$
  as before, which are now random variables conditioned on $D$ and the value of
  which is determined by the logic of $A$.
  (That is, to sample any one of the mentioned sets, first produce an
  environment $\ENV$ according to $D$; then run $A$ on $\ENV$ to obtain the set
  of queries $Q$, which in turn uniquely defines the other sets $Q_{b,t}$ and
  $C_i$.)
  Despite these now being random variables, we still have that $A$ can only
  reject the input if $Q_{1,2} \cap (C_1 \cup C_2)$ is non-empty.
  Otherwise we can construct $\ENV' \in \oBP$ as before and, since $A$ is a
  one-sided error algorithm and its view of $\ENV$ and $\ENV'$ is the same, it
  cannot reject $\ENV$ without also rejecting $\ENV'$.
  The lower bound is now weaker since we can no longer derive that $\abs{C_1
  \cap Q_{1,2}}$ or $\abs{C_2 \cap Q_{1,2}}$ must be large for $A$ to accept;
  rather, all that we can say is that these sets are non-empty.
  Nevertheless, the argument still goes through if we assume only $\abs{C_1'} =
  1$ or $\abs{C_2'} = 1$ and thus we obtain a lower bound of
  $\Omega(\min\{\Delta,\Delta/r\}) = \Omega(\Delta/r)$ on the number of queries.
\end{proof}

\subsection{Lower Bound for Two-sided Error Algorithms}
\label{sec:lb-two-sided}

In this section we prove the lower bound for two-sided error algorithms:

\restateThmLBTwoSided*

As before, we use \cref{lem_expander_lower_bound_one_step_construction} to
construct the graphs that we need for the proof of \cref{thm:lb-two-sided}.

\begin{lemma}\label{lem_expander_lower_bound_one_step}
  Let $\Delta = \Delta(n) = \omega(1)$ and $\eps = \Omega(\Delta^b/n)$ be given
  as a function of $n$, where $b \ge 1$ is some constant.
  In addition, let $\beta = 1/48^2$ and $\eps \le \beta/192e$.
  Then there is a family $G_n$ of balanced bipartite graphs with $n$ nodes on
  either side such that, when $n$ is large enough, $G_n$ has the following
  properties:
  \begin{enumerate}
    \item \emph{Moderate expansion on the left.}
    For every set $S$ of left-vertices of $G_n$ with $\abs{S} \le 2 \alpha n /
    \Delta$ where $\alpha = 24\eps$,
    \[
      \abs{N(S)} \ge (1 - \beta) \Delta \abs{S}.
    \]
    \item \emph{Unique-neighbor expansion on the right.}
    For every set $S$ of right-vertices of $G_n$ with $\abs{S} \le \min\{
    \Delta/\eps, \sqrt{n/\eps\Delta} \}$, we have
    \[
      \abs{N(S)} \ge (\Delta - r) \abs{S}
    \]
    where the value of $r$ depends on $b$ as follows:
    \begin{enumerate}
      \item If $b > 3$, then $r$ is constant.
      \item If $b = 3$, then $r = \Theta(\log\Delta)$.
      \item If $1 \le b < 3$, then $r = \Theta(\Delta^{(3-b)/2})$.
    \end{enumerate}
    By \cref{lem_expanders_unique_neighbors}, this implies $S$ has at least
    $(\Delta-2r)\abs{S}$ unique neighbors.
  \end{enumerate}
\end{lemma}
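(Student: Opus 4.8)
The plan is to obtain both properties from a single draw of the random graph furnished by \cref{lem_expander_lower_bound_one_step_construction}, instantiated with two different choices of its parameters $(\rho,\gamma)$. For each fixed choice the conclusion holds with high probability (and, by the symmetry of the construction, simultaneously for all small sets on either side), so a union bound over the two choices shows that one random graph satisfies both requirements with high probability; such a graph then witnesses the family $G_n$. This mirrors the strategy of \cref{lem_expander_lower_bound_one_step_one_sided_error}, and the two parameter computations below run parallel to the two there.

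For item 1, I would plug in $\rho = \beta\Delta$ and $\gamma = 2\alpha/\Delta = 48\eps/\Delta$, applying the expansion guarantee to left-vertices. The constraint on $\gamma$ demanded by \cref{lem_expander_lower_bound_one_step_construction} is checked exactly as in item 1 of \cref{lem_expander_lower_bound_one_step_one_sided_error}: since $\bigl(\frac{1}{2e}(\frac{\beta}{2e\Delta})^{\beta\Delta}\bigr)^{1/(\beta\Delta-1)} \ge \frac{\beta}{4e\Delta}$, it suffices that $48\eps/\Delta \le \beta/(4e\Delta)$, which is precisely the hypothesis $\eps \le \beta/192e$ (and also ensures $\gamma \le 1/2$). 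The lemma then yields $\abs{N(S)} \ge (\Delta-\beta\Delta)\abs{S} = (1-\beta)\Delta\abs{S}$ for all such $S$, as claimed.

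For item 2, set $\gamma = \frac{1}{n}\min\{\Delta/\eps,\ \sqrt{n/(\eps\Delta)}\}$ and take $\rho = r$ with $r$ chosen according to $b$, applying the guarantee to right-vertices. The quantity that governs the admissible choices of $r$ is $\gamma\Delta^2$: a direct computation gives $\gamma\Delta^2 = \Delta^3/(\eps n)$ when the minimum is $\Delta/\eps$ (which, for large $n$, occurs exactly when $\eps \ge \Delta^3/n$, i.e.\ when $b \ge 3$) and $\gamma\Delta^2 = \Delta^{3/2}/\sqrt{\eps n}$ when the minimum is $\sqrt{n/(\eps\Delta)}$ (i.e.\ when $b \le 3$); using $\eps n = \Theta(\Delta^b)$ these equal $\Theta(\Delta^{3-b})$ and $\Theta(\Delta^{(3-b)/2})$, respectively, and both equal $\Theta(1)$ at $b=3$. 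The three sub-cases are then handled in direct analogy with cases (a), (b), (c) of item 2 of \cref{lem_expander_lower_bound_one_step_one_sided_error}, the only change being the arithmetic dictated by $\gamma\Delta^2$: for $b > 3$ one has $\gamma\Delta^2 = o(1)$, so a constant $r$ depending on $b$ already makes the right-hand side of the constraint of \cref{lem_expander_lower_bound_one_step_construction} exceed $\gamma$; for $b = 3$, $\gamma\Delta^2 = \Theta(1)$ and one takes $r = \Theta(\log\Delta)$ (to absorb the $\frac{1}{2e}$ and $(e\Delta/\rho)$ factors, as in case (b)); and for $1 \le b < 3$, $\gamma\Delta^2 = \Theta(\Delta^{(3-b)/2})$ and one takes $r = \Theta(\Delta^{(3-b)/2})$ (as in case (c)). One also checks $\gamma = \omega(1/n)$, which holds since $\gamma n = \Omega(\Delta)$ or $\gamma n = \Omega(\sqrt{n/\Delta})$ in the two regimes. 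Finally, since a set $S$ of right-vertices has exactly $\Delta\abs{S}$ incident edges in the $\Delta$-regular bipartite graph, \cref{lem_expanders_unique_neighbors} applied with $\alpha = r/\Delta$ gives the stated $(\Delta-2r)\abs{S}$ unique neighbors.

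The main obstacle, as in the one-sided case, is the bookkeeping in item 2: one must fix the constants hidden in the three choices of $r$ so that the inequality $\gamma \le \bigl(\frac{1}{2e}(\frac{r}{2e\Delta^2})^r\bigr)^{1/(r-1)}$ actually holds, which amounts to a short but fiddly optimization of $(r/(2e\Delta^2))^{r/(r-1)}$ against $\gamma$ in each regime — entirely analogous to, and no harder than, the corresponding step in \cref{lem_expander_lower_bound_one_step_one_sided_error}. A minor additional point is to confirm that for large $n$ the minimum in the size bound $\min\{\Delta/\eps,\ \sqrt{n/(\eps\Delta)}\}$ is attained where claimed over the stated range of $b$ (the switch occurring at $\eps = \Delta^3/n$) and to double-check the few extreme corner cases in the $\gamma = \omega(1/n)$ requirement.
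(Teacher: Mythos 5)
Your proposal is correct and follows essentially the same route as the paper: instantiate \cref{lem_expander_lower_bound_one_step_construction} twice with $(\rho,\gamma)=(\beta\Delta,\,2\alpha/\Delta)$ for the left side and $(\rho,\gamma)=(r,\,\min\{\Delta/\eps,\sqrt{n/\eps\Delta}\}/n)$ for the right side, verify the admissibility constraint on $\gamma$ in each case, and union-bound the two high-probability events. Your observation that $\gamma\Delta^2$ governs the choice of $r$, and the resulting case split with the crossover at $b=3$, matches the paper's case-by-case parameter choices (which it carries out with explicit constants); the final deduction of the unique-neighbor count via \cref{lem_expanders_unique_neighbors} with $m=\Delta\abs{S}$ and $\alpha=r/\Delta$ is also as in the paper.
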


\begin{proof}
  \begin{enumerate}
    \item Proceeding as in the proof of
    \cref{lem_expander_lower_bound_one_step_one_sided_error}, we plug in $\gamma
    = 2\alpha / \Delta$ and $\rho = \beta \Delta$ in
    \cref{lem_expander_lower_bound_one_step_construction}.
    Since $\Delta = \omega(1)$, for sufficiently large $n$ we have
    $2^{\beta\Delta} > 8e^2\Delta/\beta$, thus implying $\beta/2e\Delta >
    2e/2^{\beta\Delta-1}$.
    In turn this means that
    \[
      \left(
        \frac{1}{2e}
        \left( \frac{\rho}{2 e \Delta^2} \right)^\rho
      \right)^{1/(\rho-1)}
      = \left(
        \frac{1}{2e}
        \left( \frac{\beta}{2 e \Delta} \right)^{\beta \Delta}
      \right)^{1/(\beta\Delta-1)}
      = \left(
        \frac{1}{2e} \cdot \frac{\beta}{2e\Delta}
      \right)^{1/(\beta\Delta-1)} \frac{\beta}{2e\Delta}
      > \frac{\beta}{4e\Delta} 
      \ge \gamma.
    \]
    \item We plug in $\gamma = \abs{S} / n$ and $\rho = r$ in
    \cref{lem_expander_lower_bound_one_step_construction}. 
    Let us consider the three cases separately:
    \begin{enumerate}
      \item If $b > 3$, we set $r = 1 + 4/(b-3)$.
      Then we have $2r/(r-1) = (b+1)/2$ as well as $(b+1)/2 < b-1$ and $\gamma
      \le \Delta/\eps n = O(1/\Delta^{b-1})$.
      Putting all of this together, we get
      \[
        \left( \frac{1}{2e}
          \left( \frac{r}{2e\Delta^2} \right)^r \right)^{1/(r-1)}
        = \left( \frac{1}{2e} \left( \frac{r}{2e} \right)^r \right)^{1/(r-1)}
          \cdot \frac{1}{\Delta^{2r/(r-1)}}
        = \Omega(1) \cdot \frac{1}{\Delta^{(b+1)/2}}
        = \omega(\gamma).
      \]
      \item Let $b = 3$, in which case $\gamma = O(1/\Delta^2)$.
      Setting $r = 1 + 4e^2\log\Delta$, we have $r/(r-1) = 1 + 1/4e^2\log\Delta
      < 1 + 1/\log \Delta$ and then
      \begin{align*}
        \left( \frac{1}{2e}
          \left( \frac{r}{2e\Delta^2} \right)^r \right)^{1/(r-1)}
        &> \left( \frac{1}{2e}
          \left( \frac{2e\log\Delta}{\Delta^2} \right)^r \right)^{1/(r-1)} \\
        &> \left( \frac{\log\Delta}{\Delta^2} \right)^{1+1/\log\Delta} \\
        &> \frac{\log\Delta}{\Delta^2}
          \left( \frac{1}{\Delta^2} \right)^{1/\log\Delta} \\
        &= \frac{\log \Delta}{4\Delta^2} \\
        &= \omega(\gamma).
      \end{align*}
      \item Let $b < 3$ and let $a > 0$ be such that $\eps \ge a\Delta^b/n$.
      Note this implies $\gamma \le 1/\sqrt{a \Delta^{b+1}}$ as well as
      $\Delta^{(3-b)/2} = \omega(\log\Delta)$.
      Set $r = 1 + 8e^2\alpha\Delta^{(3-b)/2}$ where $\alpha =
      \max\{1,1/\sqrt{a}\}$.
      Then for large values of $\Delta$ we have $r/(r-1) = 1 +
      1/8e^2\alpha\Delta^{(3-b)/2} < 1 + 2/(b+1)\log\Delta$.
      With these observations, we get that
      \begin{align*}
        \left( \frac{1}{2e}
          \left( \frac{r}{2e\Delta^2} \right)^r \right)^{1/(r-1)}
        &> \left(
          \frac{2\alpha}{\sqrt{\Delta^{b+1}}}
        \right)^{1+2/(b+1)\log\Delta} \\
        &> \frac{2\alpha}{\sqrt{\Delta^{b+1}}}
          \left( \frac{1}{\sqrt{\Delta^{b+1}}} \right)^{2/(b+1)\log\Delta} \\
        &= \frac{\alpha}{\sqrt{\Delta^{b+1}}} \\
        &\ge \frac{1}{\sqrt{a\Delta^{b+1}}} \\
        &\ge \gamma.
        \qedhere
      \end{align*}
    \end{enumerate}
  \end{enumerate}
\end{proof}

We are now in position to prove \cref{thm:lb-two-sided}.

\begin{proof}[Proof of \cref{thm:lb-two-sided}]
  Let $G = (V,E)$ and $\alpha = 24\eps$ be as in
  \cref{lem_expander_lower_bound_one_step}.
  We assume $\Delta = o(n)$ since that is the interesting case.
  In addition, let $c > 0$ be a constant so that $r \le c$, $r \le c\log\Delta$,
  or $r \le c\Delta^{(3-b)/2}$, depending on which case of $b$ we are
  considering, as per \cref{lem_expander_lower_bound_one_step}.
  Using Yao's minimax principle, we define two distributions $D_Y$ and $D_N$
  such that $D_Y \in \oBP$ and $D_N$ is $\eps$-far from $\oBP$ with high
  probability, and then show that the two are indistinguishable if we can
  observe a single fixed set of vertices $Q$ of $G$ where:
  \begin{itemize}
    \item If $b > 3$, then $\abs{Q} \le \Delta/576c\eps$.
    \item If $b = 3$, then $\abs{Q} \le \Delta/576c\eps\log\Delta$.
    \item If $1 \le b < 3$, then $\abs{Q} \le \Delta^{(b-1)/2}/576c\eps$.
  \end{itemize}

  \paragraph{The distributions.}
  We define $D_Y$ and $D_N$ as follows:
  \begin{itemize}
    \item[$D_Y$:] Pick a set $S \subseteq V$ of left-vertices by randomly and
    independently adding each vertex to $S$ with probability $\alpha /
    3\Delta$.
    Color $S$ black in both steps by setting $\ENV(S,1) = \ENV(S,2) = 1$.
    In addition, color all of $N(S)$ black in the second step; that is,
    $\ENV(N(S),2) = 1$.
    Set $\ENV(x,y) = 0$ for every other pair $(x,y)$ that was not assigned.
    \item[$D_N$:] Pick a set $S \subseteq V$ of left-vertices by randomly and
    independently adding each vertex to $S$ with probability $\alpha /
    \Delta$.
    (We leave $S$ white.)
    For each node $v \in S$, select each right-vertex in $N(v)$ with
    probability $1/3$ uniformly at random and add it to a set $R_v$.
    Letting $B = \bigcup_{v \in S} R_v$ be the union of these right-vertices,
    color them all black in the second step by setting $\ENV(B,2) = 1$.
    Set $\ENV(x,y) = 0$ for every other pair $(x,y)$ that was not assigned.
  \end{itemize}
  Clearly $D_Y$ is always in $\oBP$.
  We next show that $\ENV$ as produced by $D_N$ is $\eps$-far from $\oBP$ with
  high constant probability.
  Intuitively this should hold because the vertices in $S$ have each roughly
  only a $1/3$ fraction of their neighbors colored black.
  As in the proof of \cref{thm:lb-one-sided}, however, this alone does not
  suffice and we need the expansion guarantees to ensure the argument goes
  through.
  
  \paragraph{$\eps$-farness of $D_N$.}
  Let $U$ be the set of unique neighbors of $S$. 
  We first show that the following facts hold with arbitrarily high constant
  probability (e.g., at least $0.99$):
  \begin{enumerate}
    \item $(\alpha-\eps)n/\Delta \le \abs{S} \le (\alpha+\eps)n/\Delta$
    \item $(1/3-\beta)\abs{U} < \abs{B \cap U} < (1/3+\beta)\abs{U}$
    \item $7 \eps n < \abs{B} < 9 \eps n$
  \end{enumerate}

  The first fact is simple to obtain:
  By the Chernoff bound (\cref{thm_chernoff}), when picking each left-vertex
  with probability $\alpha / \Delta = 24\eps/\Delta$, the probability that
  $\abs{S} / n$ deviates from $\alpha / \Delta$ by a factor larger than
  $\eps/\Delta$ is at most $2e^{-n\eps/72\Delta}$. 
  This probability can be made arbitrarily small by using $\eps \ge
  \zeta\Delta/n$ and choosing $\zeta$ appropriately. 

  To show the second fact, we start with
  \cref{lem_expander_lower_bound_one_step,lem_expanders_unique_neighbors},
  which show that (conditioned on our first fact) we have $\abs{U} \ge
  (1-2\beta)\Delta\abs{S}$.
  Recall we sample $R_v$ independently for each $v \in S$; hence the
  probability that we put any fixed $u \in U$ in $B$ is exactly $1/3$.
  Then, by the Chernoff bound, 
  \[
    \Pr\left[
      \abs*{\abs{B \cap U} - \frac{\abs{U}}{3}} > \beta\abs{U}
    \right]
    < 2e^{-\beta^2\abs{U}}
    = o(1)
  \]
  since $\abs{U} = \Omega(\abs{S})$, $\abs{S} = \Omega(n/\Delta)$, and $\Delta
  = o(1)$.
  
  Finally, for the last fact, first notice that the observations above already
  imply
  \[
    \abs{B} \ge \abs{B \cap U}
    \ge \left( \frac{1}{3} - \beta \right)\left(1 - 2\beta\right)\Delta\abs{S}
    > 7\eps n.
  \]
  To obtain the upper bound on $\abs{B}$, recall we have $\abs{N(S) \setminus
  U} \le 2\beta\Delta\abs{S}$.
  Thus with high probability we have
  \[
    \abs{B} = \abs{B \cap U} + \abs{B \setminus U}
    \le \left( \frac{1}{3} + \beta \right) \abs{U}
      + \abs{N(S) \setminus U}
    \le \left( \frac{1}{3} + 3\beta \right) \Delta \abs{S}
    < 9 \eps n.
  \]

  We now show that, assuming the facts above, $\dist(\ENV,\ENV') \ge 4\eps$
  holds for every $\ENV' \in \oBP$.
  Notice that $\ENV' \in \oBP$ is uniquely determined by a set $S'$ of nodes
  for which we set $\ENV'(v,1) = \ENV'(N(v),2) = 1$ for every $v \in S'$ and
  $\ENV'(x,y) = 0$ for every other pair $(x,y)$.
  First notice that we may assume $\abs{S'} \le 15 \eps n/\Delta$ since
  otherwise, by \cref{lem_expander_lower_bound_one_step}, $\abs{N(S')} \ge
  15(1-\beta)\eps n > \abs{B} + 4\eps n$ (and thus $\dist(\ENV,\ENV') \ge
  \abs{N(S') \setminus B} \ge 4\eps n$).

  We handle $S_1' = S' \cap S$ and $S_2' = S' \setminus S$ separately.
  By the first item of \cref{lem_expanders_intersection}, there are at most
  $\beta\Delta(\abs{S} + \abs{S'}) \le 38\beta\eps n < \eps n$ nodes in
  $N(S_2') \cap B$.
  Hence we are done unless $\abs{N(S_1') \cap B} \ge \abs{B} - 5\eps n > 2\eps
  n$, meaning $\abs{S_1'} > 2\eps n/\Delta$.
  Let $U'$ be the set of unique neighbors of $S_1'$.
  Using the same kind of analysis as before, we obtain $\abs{U'} \ge
  (1-2\beta)\Delta\abs{S_1'}$ and $\abs{N(S_1') \setminus U'} \le
  2\beta\Delta\abs{S_1'}$.
  In addition we have $\abs{B \cap U'} \le (1/3 + \beta)\abs{U'}$ with
  arbitrarily high constant probability (since $\abs{S_1'} =
  \Omega(\eps n/\Delta)$).
  Thus with arbitrarily high constant probability
  \[
    \abs{N(S_1') \cap B} \le \abs{B \cap U'} + \abs{N(S_1') \setminus U'}
    < \left( \frac{1}{3} + 3\beta \right) \Delta\abs{S_1'}
    < \frac{2}{5} \Delta\abs{S_1'}.
  \]
  It follows that the nodes in $S_1'$ are actually \emph{increasing}
  $\dist(\ENV,\ENV')$.
  Hence $\dist(\ENV,\ENV')$ is at least as large as in the case $\abs{S_1'}
  \le 2\eps n/\Delta$, in which we already had $\dist(\ENV,\ENV') \ge 4\eps
  n$.
  
  \paragraph{Indistinguishability of $D_Y$ and $D_N$.}
  Fix a set $Q = \{ u_1,\dots,u_q \}$ of vertices of $G$.
  We show that the total variation distance between $D_Y$ and $D_N$ is
  (strictly) less than $1/6$ if we restrict our view to $Q$.

  First notice that $\Pr_{D_Y}[S \cap Q = \varnothing] \ge \Pr_{D_N}[S \cap Q
  = \varnothing]$ and also
  \[
    \Pr_{D_N}[S \cap Q = \varnothing]
    \ge \left( 1 - \frac{\alpha}{\Delta} \right)^q
    \ge 1 - \frac{\alpha q}{\Delta}
    > \frac{11}{12}
  \]
  since we independently put each node in $S$ with probability $\alpha/\Delta$
  and $q < \Delta/576\eps$.
  Hence we may safely assume that $Q$ contains only \emph{right-vertices}; if
  $Q$ contains any left-vertices, then with probability at least $11/12$ these
  vertices are all white in both distributions (which gives the tester no
  advantage in distinguishing $D_Y$ from $D_N$).

  Next we argue that, with high probability over both $D_Y$ and $D_N$, no two
  distinct nodes in $Q$ share a common neighbor in $S$; that is, there are no
  two nodes $u_i \neq u_j$ for which $N(u_i) \cap N(u_j) \cap S$ is non-empty.
  Observe that $q \le \min\{ \Delta/\eps, \sqrt{n/\eps\Delta} \}$ and $2\alpha
  rq/\Delta = 48\eps rq/\Delta \le 1/12$ holds in all three cases:
  \begin{itemize}
    \item If $b > 3$, then $q \le \Delta/576c\eps$.
    Hence certainly we have not only $q \le \Delta/\eps$ but also $\Delta/\eps
    = O(\sqrt{n/\eps\Delta^{b-2}}) = o(\sqrt{n/\eps\Delta})$.
    Since $r \le c$, we also have $48\eps rq/\Delta \le r/12c \le 1/12$.
    \item If $b = 3$, then $\abs{Q} \le \Delta/576c\eps\log\Delta$.
    Therefore obviously $q = o(\Delta/\eps)$ and also $q =
    o(\sqrt{n/\eps\Delta})$ since $\Delta/\eps = O(\sqrt{n/\eps\Delta^{b-2}})
    = O(\sqrt{n/\eps\Delta})$.
    Since $r \le c\log\Delta$, we also have $48\eps rq/\Delta \le
    r/12c\log\Delta \le 1/12$.
    \item If $1 \le b < 3$, then $\abs{Q} \le \Delta^{(b-1)/2}/576c\eps$.
    Thus $q = o(\Delta/\eps)$ and also
    \[
      q < \frac{\Delta^{(b-1)/2}}{\eps}
      \le \Delta^{(b-1)/2} \sqrt{\frac{n}{\eps\Delta^b}}
      = \sqrt{\frac{n}{\eps\Delta}}.
    \]
    Furthermore, $r \le c\Delta^{(3-b)/2}$ implies $48\eps rq/\Delta \le
    r/12c\Delta^{(3-b)/2} \le 1/12$.
  \end{itemize}
  We apply \cref{lem_expander_lower_bound_one_step} and obtain that $Q$ has at
  most $2rq$ many neighbors that are not unique.
  Letting $E$ be the set of such neighbors, we have $\Pr_{D_Y}[S \cap E =
  \varnothing] \ge \Pr_{D_N}[S \cap E = \varnothing]$ and also
  \[
    \Pr_{D_N}[S \cap E = \varnothing]
    = \left( 1 - \frac{\alpha}{\Delta} \right)^{2rq}
    \ge 1 - \frac{2\alpha rq}{\Delta}
    \ge \frac{11}{12}.
  \]
  Hence we may assume that this also holds.

  Now let $Z_i^Y$ and $Z_i^N$ be the indicator function of $q_i$ being colored
  black in $D_Y$ and $D_N$, respectively.
  Since any two nodes $q_i \neq q_j$ of $Q$ do not share a common neighbor in
  $S$ and a node being in $S$ or not is independent from any other node being
  in $S$, $Z_i^X$ and $Z_j^X$ are independent for each of $X \in \{ Y,N \}$.
  In addition, $Z_i^Y$ and $Z_i^N$ are identically distributed because the 
  probability that $q_i$ is colored black on account of some $v \in N(q_i)$ in
  $D_N$ is
  \[
    \Pr[v \in S \land q_i \in R_v] = \frac{\alpha}{3\Delta},
  \]
  which is the same probability that $v \in S$ in $D_Y$ (and thus $q_i$ is
  colored black on account of $v$ in $D_Y$).
  Therefore $D_Y$ and $D_N$ are indistinguishable if we look only at the
  vertices in $Q$ provided the two assumptions we made before hold, which is
  the case with probability at least $5/6$.
\end{proof}


\section{Upper Bounds for the Case \texorpdfstring{$T > 2$}{T > 2}}
\label{sec:alg-gen-T}

In the previous sections we focused on regimes where $T = 2$.
In this section we consider two different strategies for the case where $T > 2$.

An immediate observation to make is that the diameter $\diam(G)$ plays a much
more significant role in this setting.
For instance, the case where $T \ge (1+2/\eps) \diam(G)$ is more or less trivial
since then after $\diam(G)$ steps every connected component must be either
all-black or all-white and the first $\diam(G)$ steps constitute at most an
$\eps/2$ fraction of $\ENV$.

\subsection{Structure-independent Upper Bound}

First we give a generalization of \cref{thm:alg-low-degree}, which is simple to
obtain and is still adequate for settings where $\Delta$ and $T$ are not too
large.
For constant $\Delta$, for instance, it still gives a sublinear query algorithm
whenever $T = O(\log n)$.
As the algorithm of \cref{thm:alg-low-degree}, the algorithm does not use the
graph structure in any way except for knowing what is the neighborhood of each
node.

\restateThmAlgLowDegreeGeneralT*

We adapt \cref{alg:m2} to obtain \cref{alg:m2_gen}.

The analysis does not carry over automatically since we need to consider what
happens if we are correcting violations in a time step $t < T$.
Unlike in \cref{lem:meq2_viol_vs_dist}, this kind of correction may now
propagate to time steps after $t$.
In addition, we have to assume $\Delta \ge 2$; however, the case $\Delta = 1$ is
trivial since then $\diam(G) = 1$ and we need only follow the strategy described
at the beginning of this section.

\begin{algorithm}
  Pick $Q \subseteq V \times \{ t \mid 2 \le t \le T \}$ uniformly at random
  where $\abs{Q} = \ceil{2\Delta^{T-2} / \eps T}$\;
  Query $\ENV(v,t-1)$ and $\ENV(u,t)$ for every $(u,t) \in Q$ and $v \in N(u)$
  in a time-conforming manner\;
  \For{$(u,t) \in Q$}{
    \lIf{$\ENV(u,t) = 0$ and $\exists v \in N(u): \ENV(v,t-1) = 1$}{
      \Reject
    }
    \lIf{$\ENV(u,t) = 1$ and $\forall v \in N(u): \ENV(v,t-1) = 0$}{
      \Reject
    }
  }
  \Accept\;
  \caption{Structure-independent algorithm for the case of general $T$ with
  query complexity $O(\Delta^{T-1}/\eps)$}
  \label{alg:m2_gen}
\end{algorithm}

\begin{lemma}%
  \label{lem:gen-T-relation-viol-dist}
  Let $\Delta \ge 2$.
  Then
  \[
    \frac{\abs{\viol(\ENV)}}{(\Delta+1) nT}
    \le \dist(\ENV, \oBP)
    \le \frac{\Delta^{T-1} - 1}{(\Delta - 1)nT} \abs{\viol(\ENV)}.
  \]
\end{lemma}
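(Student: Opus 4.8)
The plan is to establish the two inequalities separately, in each case mimicking the strategy of \cref{lem:meq2_viol_vs_dist} but tracking how corrections cascade forward in time. For the lower bound $\dist(\ENV,\oBP) \ge \abs{\viol(\ENV)}/((\Delta+1)nT)$, I would fix an optimal $\ENV' \in \oBP$ at distance $\dist(\ENV,\oBP)$ from $\ENV$ and argue that a single pair $(v,t)$ on which $\ENV$ and $\ENV'$ differ can be ``responsible'' for only a bounded number of violating pairs in $\ENV$. Concretely: if $(u,s) \in \viol(\ENV)$, then because $\ENV'$ follows $\oBP$ at $(u,s)$, either $\ENV$ and $\ENV'$ disagree at $(u,s)$ itself, or they agree at $(u,s)$ but disagree at some predecessor of $(u,s)$ (i.e.\ at $(w,s-1)$ for some $w \in N(u)$), since the violation at $(u,s)$ must be ``witnessed'' by a predecessor whose color differs from what $\ENV'$ prescribes. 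Charging each violating pair to such a disagreement, each disagreement $(v,t)$ gets charged at most once as ``itself'' and at most $\abs{N(v)} \le \Delta$ times as a predecessor of a successor, giving $\abs{\viol(\ENV)} \le (\Delta+1) \cdot nT \cdot \dist(\ENV,\oBP)$. I should double-check the edge cases (loops, $t=1$ having no predecessors) but the counting is routine.

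For the upper bound $\dist(\ENV,\oBP) \le \frac{\Delta^{T-1}-1}{(\Delta-1)nT}\,\abs{\viol(\ENV)}$, the approach is to give an explicit correction procedure whose cost is controlled by $\abs{\viol(\ENV)}$. Process time steps in increasing order $t = 2, \dots, T$. At step $t$, having already made $\ENV$ consistent at all steps $\le t$, look at each pair $(v,t)$: if it is now violating (either because it was originally violating, or because a correction at step $t-1$ just turned a predecessor black/white), flip $\ENV(v,t)$. The key accounting observation is that flipping $\ENV(v,t)$ can create new violations only among the successors of $(v,t)$, i.e.\ at most $\Delta$ new violating pairs at step $t+1$. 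Thus if $\nu_t$ denotes the number of corrections performed at step $t$, we have $\nu_{t+1} \le (\text{original violations at step } t+1) + \Delta\,\nu_t$; unrolling this recursion and summing the geometric series yields a total correction count of at most $(1 + \Delta + \Delta^2 + \cdots + \Delta^{T-2})\,\abs{\viol(\ENV)} = \frac{\Delta^{T-1}-1}{\Delta-1}\,\abs{\viol(\ENV)}$, and dividing by $nT$ gives the stated bound. A subtlety to verify is that a correction at step $t$ never introduces violations at steps $< t+1$ (it cannot affect earlier steps, and step $t$ itself is made locally consistent by the flip), so the induction is clean.

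The main obstacle I anticipate is the bookkeeping in the upper bound: one must be careful that the greedy single-flip correction at $(v,t)$ genuinely restores consistency at $(v,t)$ without being undone later, and that the recursion $\nu_{t+1} \le (\text{orig. viol.}) + \Delta\nu_t$ does not double-count — in particular, a pair $(v,t+1)$ could be both an original violation and a descendant of a correction, and we must ensure it is corrected at most once (which it is, since we flip it at most once when we process step $t+1$). Getting the direction of the inequality right when a violation is ``of type I'' versus ``type II'' (flipping to black vs.\ to white, and which successors are thereby affected) also requires a small case analysis, but since in both cases a flip changes only one color and affects only the $\le \Delta$ successors, the bound $\Delta$ per correction holds uniformly. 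The assumption $\Delta \ge 2$ is exactly what makes the geometric-series formula $\frac{\Delta^{T-1}-1}{\Delta-1}$ valid; the $\Delta = 1$ case is handled separately as noted before the lemma.
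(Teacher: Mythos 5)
Your proof is correct and takes essentially the same approach as the paper: the lower bound counts at most $\Delta+1$ violations attributable to each pair where $\ENV$ disagrees with an optimal $\ENV'\in\oBP$ (one as itself plus $\Delta$ as a predecessor), and the upper bound bounds the cascade of forward corrections by a geometric series summing to $(\Delta^{T-1}-1)/(\Delta-1)$. The paper's phrasing of the upper bound is a single $\Delta$-ary cascade tree per violation rather than your per-time-step recursion $\nu_{t+1}\le V_{t+1}+\Delta\nu_t$, but the two computations are equivalent.
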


\begin{proof}
  The lower bound is as in \cref{lem:meq2_viol_vs_dist} except that every color
  flip may now correct at most $\Delta + 1$ violating pairs (instead of just
  $\Delta$ many pairs).
  The $+1$ is due to the fact that flipping the color of a pair $(u,t)$ for $2
  \le t < T$ may not only correct pairs $(v,t+1)$ where $v \in N(u)$ but also
  $(u,t)$ itself, which was impossible in the setting of
  \cref{lem:meq2_viol_vs_dist}.

  As for the upper bound, the point is that, if we wish to correct $(u,t) \in
  \viol(\ENV)$ by flipping its color, then in the worst case we must flip every
  $(v,t')$ where $t' > t$ and $\dist(u,v) = t' - t$.
  The number of such pairs is at most the number of nodes in a complete
  $\Delta$-ary tree of height $T - 2$, which is $(\Delta^{T-1} - 1)/(\Delta -
  1)$.
\end{proof}

Now as before with \cref{thm:alg-low-degree} we have that $\dist(\ENV,\oBP) \ge
\eps$ implies
\[
  \viol(\ENV) \ge \frac{\eps(\Delta-1)nT}{\Delta^{T-1}-1}
  > \frac{\eps nT}{2\Delta^{T-2}}.
\]
Hence the probability that \cref{alg:m2_gen} errs is
\[
  \Pr[Q \cap \viol(\ENV) = \varnothing]
  \le \left( 1-\frac{\eps T}{2\Delta^{T-2}} \right)^{\abs{Q}}
  < \frac{1}{e} < \frac{1}{2}.
\]
As was the case with \cref{alg:m2}, the query complexity and other properties
required in \cref{thm:alg-low-degree-gen-T} are clear, and hence
\cref{thm:alg-low-degree-gen-T} follows.

\subsection{Upper Bound Based on Graph Decompositions}
\label{sec:alg-low-diam}

The second algorithm we present is suited for not too small values of $T$ and
graphs that are not too dense.

\restateThmAlgLowDiam*

The strategy followed by the algorithm relies on graph decompositions.
These are partitions induced by sets of edges that cut the graph into components
of bounded diameter.

\begin{definition}%
  \label{def:decomposition}
  Let $d \in \N_+$ and $\alpha > 0$.
  A \emph{$(d,\alpha)$-decomposition} of a graph $G = (V,E)$ is a set of edges
  $C \subseteq E$ with $\abs{C} \le \alpha\abs{E}$ and such that there is a
  partition $V = V_1 + \cdots + V_r$ satisfying the following:
  \begin{enumerate}
    \item For $u,v \in V$, $uv \in C$ if and only if there are $i$ and $j$ with
    $i \neq j$ such that $u \in V_i$ and $v \in V_j$.
    \item For every $i$, $\diam(V_i) \le d$.
  \end{enumerate}
\end{definition}

The following is a renowned result in graph decompositions:

\begin{theorem}[\cite{bartal96_probabilistic_focs}]%
  \label{thm:decomposition-gen}
  For any $d \in \N_+$, every graph $G$ admits a
  $(d,O(\log(n)/d))$-decomposition.
\end{theorem}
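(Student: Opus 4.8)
The plan is to prove \cref{thm:decomposition-gen} by the classical deterministic \emph{ball-growing} argument. I would fix a ``cut rate'' $\lambda = c\log(n)/d$ for a large enough absolute constant $c$ (to be pinned down at the end); if $\lambda \ge 1$ the statement is vacuous, since then $C = E$ with singleton clusters already works, so I may assume $d > c\log n$ and hence $\lambda < 1$. I may also assume $G$ has no isolated vertices, as each such vertex can be made its own (diameter-$0$) cluster at no cost to $C$; in particular then $\abs{E} \ge n/2$.

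The procedure maintains a shrinking ``remaining'' graph, starting from $G$ itself. At each step I pick an arbitrary remaining vertex $v$ and grow BFS balls $B(v,0) \subseteq B(v,1) \subseteq \cdots$ around it inside the current graph. Writing $\mathrm{int}_\rho$ for the number of edges with \emph{both} endpoints in $B(v,\rho)$ and $\mathrm{bd}_\rho$ for the number with \emph{exactly one} endpoint there, I would select the first radius $\rho^\ast$ with $\mathrm{bd}_{\rho^\ast} < \lambda(\mathrm{int}_{\rho^\ast} + 1)$, peel off the cluster $B(v,\rho^\ast)$, move its boundary edges into $C$, delete the cluster (and all its incident edges) from the graph, and recurse until the graph is empty. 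Since a shortest path from any $u \in B(v,\rho^\ast)$ back to $v$ stays inside the ball, each cluster produced this way has (even strong) diameter at most $2\rho^\ast$, so it suffices to guarantee $\rho^\ast \le d/2$ in every step.

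The crux of the argument --- and the only place with a real calculation --- is the geometric-growth lemma ensuring such a small $\rho^\ast$ always exists. The point is that a boundary edge at level $\rho$ becomes an internal edge at level $\rho+1$ (its far endpoint lies in $B(v,\rho+1)$), so $\mathrm{int}_{\rho+1} \ge \mathrm{int}_\rho + \mathrm{bd}_\rho$. Hence if the desired inequality failed for every $\rho \le d/2$, the quantity $\mathrm{int}_\rho + 1$ would grow by a factor $(1+\lambda)$ at each of roughly $d/2$ consecutive radii, forcing $\mathrm{int}_{\lfloor d/2 \rfloor} + 1 \ge (1+\lambda)^{\lfloor d/2 \rfloor}$, which already exceeds $\abs{E}+1 \le n^2$ once $c$ is a large enough constant (using $\ln(1+\lambda) \ge \lambda/2$ for $\lambda < 1$) --- a contradiction with $\mathrm{int}_0 = 0$. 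I expect no genuine obstacle here, only mild care with the floor and with seeding the recursion at $\mathrm{int}_0 = 0$, which is exactly why I carry the additive ``$+1$''.

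Finally I would do the two accounting steps. For the edge budget: deleting the cluster $B(v,\rho^\ast)$ removes exactly $\mathrm{int}_{\rho^\ast} + \mathrm{bd}_{\rho^\ast}$ edges from the graph, so across all iterations these counts telescope to $\abs{E}$; as there are at most $n$ iterations, $\abs{C} = \sum \mathrm{bd}_{\rho^\ast} < \lambda\bigl(\sum \mathrm{int}_{\rho^\ast} + n\bigr) \le \lambda(\abs{E} + n) \le 3\lambda\abs{E} = O(\log(n)/d)\,\abs{E}$. For the structural condition of \cref{def:decomposition}: the clusters partition $V$ and each has weak diameter at most $2\rho^\ast \le d$ in $G$ (weak diameter is at most strong diameter in the remaining subgraph), and an edge $uv$ is placed into $C$ precisely when, at the first iteration peeling off one of its endpoints, the other endpoint survives --- which is exactly the statement that $u$ and $v$ end up in different parts of the partition. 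Collecting these observations gives the theorem.
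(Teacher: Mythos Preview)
Your argument is correct: this is the standard deterministic ball-growing proof (Awerbuch--Peleg style), and the growth lemma, diameter bound, and edge accounting all go through as you describe. Note, however, that the paper does not actually prove this theorem; it is simply quoted as a known result with a citation to Bartal, so there is no ``paper's own proof'' to compare against. Your write-up would serve perfectly well as a self-contained justification if one were desired.
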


This trade-off is optimal for graphs in general.
For the special case of graphs excluding a fixed minor (which includes most
notably planar graphs or also graphs of bounded genus), we have the following
small improvement:

\begin{theorem}[\cite{klein93_excluded_stoc}]
  \label{thm:decomposition-planar}
  Let $H$ be a fixed graph.
  For any $d \in \N_+$, every graph $G$ excluding $H$ as a minor admits a
  $(d,O(1/d))$-decomposition.
\end{theorem}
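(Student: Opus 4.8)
The plan is to prove this by the iterated ball-carving argument of Klein, Plotkin and Rao. First I would reduce to forbidding a complete minor: since $H$ is a subgraph of $K_h$ for $h = \abs{V(H)}$, any graph with no $H$-minor also has no $K_h$-minor, and $h$ is a constant (depending only on $H$). I would also assume $G$ is connected, since parts arising from distinct connected components of $G$ contribute no edges to the cut set $C$; so it suffices to decompose each component separately and take the union.

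Fix a band width $\delta \in \N_+$, to be pinned down at the very end (it will be $\Theta(d)$, with a constant depending on $h$). The decomposition is produced by a randomized recursive procedure running for $h - 1$ rounds. In round $i$ I first draw one random integer offset $\theta_i$, uniform in $\{0, \dots, \delta - 1\}$. Then, for each connected cluster $S$ currently present (initially $S = V$), I pick a root $v_0 \in S$ by a fixed tie-breaking rule and partition $S$ into the BFS bands $B_j = \{ u \in S : \lfloor (\dist_{G[S]}(v_0,u) + \theta_i)/\delta \rfloor = j \}$; I cut every edge whose endpoints lie in different bands (these necessarily lie in consecutive bands) and recurse separately on each connected component of each $B_j$. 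After $h - 1$ rounds every surviving cluster is declared final, $C$ is the set of all edges cut over all rounds, and the final clusters form the partition $V = V_1 + \cdots + V_r$; note that $C$ is then exactly the set of edges crossing this partition, as \cref{def:decomposition} demands (empty clusters are discarded).

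Two facts drive the analysis. \emph{Cut probability.} An edge $uv$ belongs to at most one cluster in any given round, and within that cluster $\abs{\dist_{G[S]}(v_0,u) - \dist_{G[S]}(v_0,v)} \le 1$, so $uv$ is cut in that round only if $\theta_i$ falls in a window of at most one value out of the $\delta$ equally likely ones; hence $uv$ is cut in a given round with probability at most $1/\delta$, and, by a union bound over the $h - 1$ rounds, with probability at most $(h-1)/\delta$ overall. \emph{Diameter.} This is the heart of the matter and the step I expect to be the main obstacle. One must show that, because $G$ is $K_h$-minor-free, every final cluster has strong diameter at most $c_H \delta$ for some constant $c_H$ depending only on $h$; in particular its weak diameter, which is the notion used in \cref{def:decomposition}, is also at most $c_H\delta$. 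The KPR argument here is a minor-extraction lemma: a final cluster arises from a nested chain $V = S_0 \supseteq S_1 \supseteq \cdots \supseteq S_{h-1}$ in which each $S_i$ is a connected component of a width-$\delta$ BFS band around a root $v_{i-1}$ of $S_{i-1}$; if the innermost cluster $S_{h-1}$ contained two vertices too far apart in $G$, one uses shortest paths within the $S_i$ together with the roots $v_0, \dots, v_{h-1}$ as the branch sets of a $K_h$-minor in $G$ --- the point being that each of these paths can be ``pulled inward'' to stay within distance $O(\delta)$ of its corresponding root, which makes the $h$ branch sets pairwise adjacent. I would cite this lemma from \cite{klein93_excluded_stoc}; reconstructing it in full detail is the only substantial work in the proof.

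Finally, I set $\delta = \lceil d / c_H \rceil$ so that every final cluster has diameter at most $d$. Then each edge is cut with probability at most $(h-1)/\delta = O(1/d)$, where the hidden constant depends only on $h$ and hence only on $H$; so by linearity of expectation $\E[\abs{C}] = O(\abs{E}/d)$, and fixing any outcome of the offsets $\theta_1, \dots, \theta_{h-1}$ that achieves at most this expectation yields a partition of $V$ into parts of diameter at most $d$ with $\abs{C} = O(\abs{E}/d)$ crossing edges --- that is, a $(d, O(1/d))$-decomposition. (For comparison, the weaker \cref{thm:decomposition-gen} for arbitrary graphs follows from essentially the same idea but with a single round of ball-carving and a randomly chosen geometric radius, at the cost of an extra $\log n$ factor in the cut probability.)
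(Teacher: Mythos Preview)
The paper does not prove this theorem; it is quoted as a known result from \cite{klein93_excluded_stoc} and used as a black box in the analysis of \cref{alg:low-diam}. Your proposal therefore goes well beyond what the paper does: you sketch the actual Klein--Plotkin--Rao argument. The sketch is a faithful high-level outline of that argument---the reduction to a forbidden $K_h$, the $(h-1)$-round iterated BFS-band carving with random offsets, the $1/\delta$ per-round edge-cut probability, and the minor-extraction lemma bounding the diameter of the surviving clusters (which you correctly flag as the only substantive step and propose to cite). Since the paper offers no proof of its own here, there is nothing to compare your approach against; what you have written is essentially the argument the paper defers to its citation.
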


The claim is that \cref{alg:low-diam} satisfies the requirements of
\cref{thm:alg-low-diam}.
As mentioned in the introduction, the strategy followed by the algorithm is
loosely based on a similar testing routine from the paper by
\textcite{nakar21_back_icalp}.
In a nutshell, the idea is to split the environment into more manageable
components and then use the properties of the local rule to predict how each
component must behave.

\begin{algorithm}
  $t_1 \gets \floor{\eps T/4}$\;
  Compute a $(t_1,\alpha)$-decomposition of $G$ according to
  \cref{thm:decomposition-gen} or \cref{thm:decomposition-planar} and obtain a
  set of edges $C$ that cuts $G$ into components $V_1,\dots,V_r$ as in
  \cref{def:decomposition}\;
  $B \gets \{ v \mid \text{$v$ is incident to an edge in $C$} \}$\;
  Pick $Q \subseteq \{ (v,t) \mid \text{$v \in V_i$ and $t \ge t_1$} \}$
  uniformly at random where $\abs{Q} = \ceil{3/\eps}$\;
  $Q' \gets \{ v \in V \mid \exists t: (v,t) \in Q \}$\;
  Query $\ENV(B,t_1)$, $\ENV(Q)$, and $\ENV(Q',t_1)$ in a time-conforming
  fashion\;
  \lIf{$\ENV(B,t_1)$ is not feasible}{\Reject}
  \For{$i \in [r]$}{
    $B_i \gets B \cap V_i$\;
    $B_i' \gets \{ u \in B_i \mid \ENV(u,t_1) = 1 \}$\;
  }
  \For{$v \in V$}{
    \For{$i \in [r]$}{
      \eIf{$B_i' \neq \varnothing$}{
        $\alpha_i(v) \gets 
          \min_{u \in B_i' \cup (V_i \setminus B_i)} \dist(u,v)$\;
        $\beta_i(v) \gets \min_{u \in B_i'} \dist(u,v)$\;
      }{
        $\alpha_i(v) \gets \infty$\;
        $\beta_i(v) \gets \infty$\;
      }
    }
    $\alpha(v) \gets \min_i \alpha_i(v)$\;
    $\beta(v) \gets \min_i \beta_i(v)$\;
  }
  \For{$(v,t) \in Q$}{
    Let $i$ be such that $v \in V_i$\;
    \eIf{$\ENV(v,t_1) = 1$}{
      \lIf{$\ENV(v,t) \neq 1$}{\Reject}
    }{
      \lIf{$t_1 \le t < t_1 + \alpha(v)$ and $\ENV(v,t) \neq 0$}{\Reject}
      \lIf{$t \ge t_1 + \beta(v)$ and $\ENV(v,t) \neq 1$}{\Reject}
    }
  }
  \Accept\;
  \caption{Algorithm for the case of general $T$ based on network
  decompositions}
  \label{alg:low-diam}
\end{algorithm}

\paragraph{Approach.}
Let us recall the relevant details of the strategy of
\textcite{nakar21_back_icalp}.
In their paper, \citeauthor{nakar21_back_icalp} studied local rules resembling
the majority rule in the restricted setting where $G$ is a path.
Their idea involved splitting the path into intervals that intersect at periodic
control points.
The first queries made obtain the state of these control points at a certain
time step $t_1$.
If there is no initial configuration leading to what we are observing at $t_1$
(i.e., the configuration we are observing is not \emph{feasible}), then we can
immediately reject.
Otherwise we can use the states of the nodes at the control points (plus some
additional queries) to fully predict almost the entirety of $\ENV$ after $t_1$.
Hence we only need to test a certain number of times if $\ENV(v,t)$ matches our
prediction where $(v,t) \in V \times \{ t \in \N_+ \mid t \ge t_1 \}$ is chosen
uniformly at random.

Our approach is more or less the same, though we need to cater for a couple
differences between our setting and theirs.
We are not in a path, and so in general we cannot split our graph into intervals
of the same size; rather we must work with a graph decomposition, which does
give us the adequate control points (the vertices incident to the edges of the
cut $C$, which form the set $B$ in \cref{alg:low-diam}) but only an upper bound
on the diameter of each component (which correspond to the intervals in the
setting of \citeauthor{nakar21_back_icalp}).
Fortunately the $\oBP$ rule is much simpler than majority or the like, and hence
the prediction in each component is easier to make.
The relevant observation is that the $\oBP$ rule converges fast to an
(all-black) fixed point in graphs of small diameter. 
(Indeed, the $\oBP$ rule converges in at most $\diam(G)$ steps.)
More specifically, components that started in an all-zero configuration must
stay zero until they enter in contact with a black node; meanwhile a component
$V_i$ that had at least one black node in it will converge to an all-black
configuration in at most $\diam(V_i) \le t_1$ steps.

Let us now give a more detailed overview of the steps performed by
\cref{alg:low-diam}.
For a set $S \subseteq V$ and $t \in [T]$, we say that $\ENV(S,t)$ is
\emph{feasible} if there is $\ENV' \in \oBP$ such that $\ENV'(v,t) = \ENV(v,t)$
for every $v \in S$.
\cref{thm:alg-low-diam} first sets $t_1$ appropriately and determines a graph
decomposition of $G$ where the components $V_1,\dots,V_r$ have diameter at most
$t_1$.
We wait for $t_1$ steps to elapse and then query the states of $B$, which are
the nodes incident to the edge cut $C$ of the graph decomposition, and can
immediately reject if what we see is not feasible.
At the same time we query a uniformly sampled set $Q$ of pairs corresponding to
the states of nodes in time steps after $t_1$, whose values we shall use later.
We then set $B_i = B \cap V_i$ and $B_i'$ to the nodes that are black in $B_i$
in time step $t_1$.
With these we can then compute estimates $\alpha_i(v)$ and $\beta_i(v)$ for each
node $v$ and each component $V_i$.
These are only intended to be useful if $v$ is white in time step $t_1$ and are
determined as follows:
\begin{itemize}
  \item $\alpha_i(v)$ is a \emph{lower bound} on the number of time steps that
  elapse after $t_1$ until $v$ turns from white to black.
  To compute $\alpha_i(v)$, we consider both nodes in $B_i'$ (whose state in
  $t_1$ is known to us) and nodes in the inside of $V_i$ (whose state is unknown
  and which means we must assume that they are black).
  If there are no nodes in $B_i'$, then we know that $V_i$ was all white at the
  beginning and we just set $\alpha_i(v) = \infty$.
  \item $\beta_i(v)$ is an \emph{upper bound} on the number of time steps after
  $t_1$ until $v$ turns black at the latest.
  To compute $\beta_i(v)$ we take into account only nodes which we are sure that
  are black in $t_1$, that is, nodes in $B_i'$.
  Again, if $B_i'$ is empty, then $V_i$ must have been all white in the first
  time step; in that case we set $\beta_i(v) = \infty$.
\end{itemize}
See \cref{fig:alg-low-diam} for an example.
Based on these estimates, we can then use the values of $Q$ to make random tests
on the state of nodes after $t_1$ based on what we know from $B_i'$ and
$\alpha(v) = \min_i \alpha_i(v)$ and $\beta(v) = \min_i \beta_i(v)$.
More precisely, for a pair $(v,t) \in Q$:
\begin{itemize}
  \item If $v$ was already black in time step $t_1$, then certainly it must
  still be black in time step $t \ge t_1$.
  \item Otherwise $v$ was white in time step $t_1$ and we can use our estimates
  $\alpha(v)$ and $\beta(v)$ to verify the predicted state of $v$ in step $t$,
  if possible.
\end{itemize}
The algorithm accepts by default if $\ENV$ passes the tests.

The query complexity of \cref{thm:alg-low-diam} is evident, so our main focus
now is on its correctness.

\begin{figure}
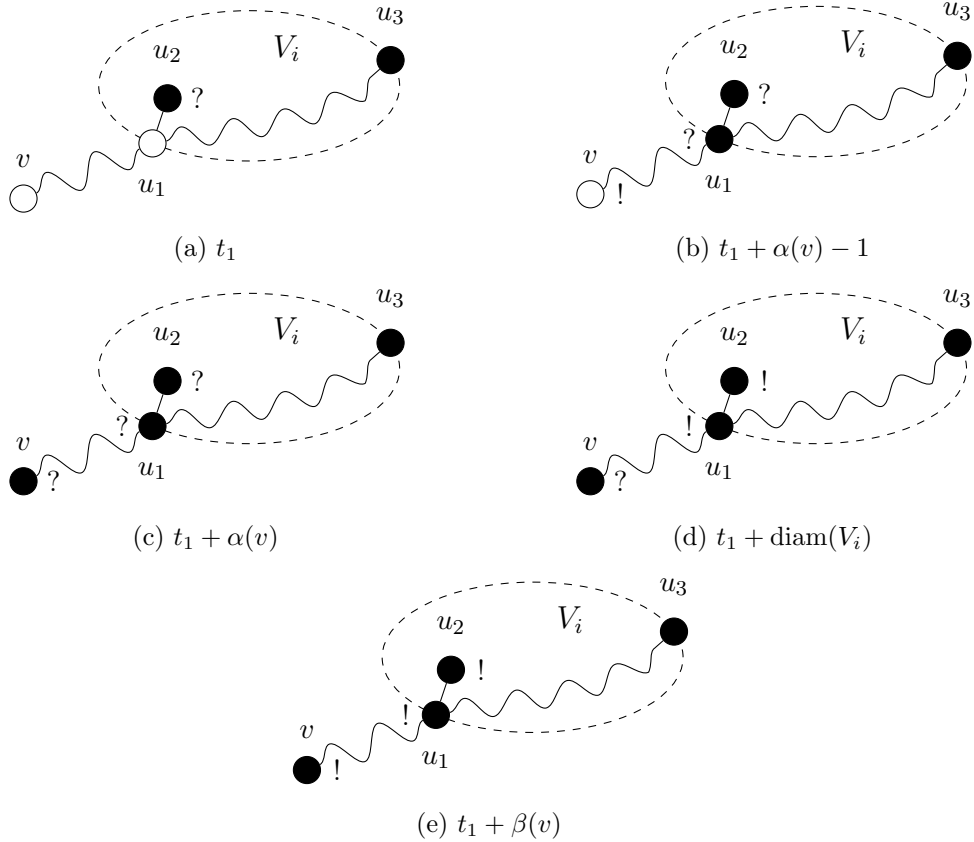

  \centering
  \hspace*{\fill}
  \subcaptionbox{$t_1$\label{fig:alg-low-diam-1}}{
    \includestandalone{figs/alg_low_diam_1}
  }
  \hspace*{\fill}
  \subcaptionbox{$t_1 + \alpha(v) - 1$\label{fig:alg-low-diam-2}}{
    \includestandalone{figs/alg_low_diam_2}
  }
  \hspace*{\fill}
  \\
  \hspace*{\fill}
  \subcaptionbox{$t_1 + \alpha(v)$\label{fig:alg-low-diam-3}}{
    \includestandalone{figs/alg_low_diam_3}
  }
  \hspace*{\fill}
  \subcaptionbox{$t_1 + \diam(V_i)$\label{fig:alg-low-diam-4}}{
    \includestandalone{figs/alg_low_diam_4}
  }
  \hspace*{\fill}
  \\
  \subcaptionbox{$t_1 + \beta(v)$\label{fig:alg-low-diam-5}}{
    \includestandalone{figs/alg_low_diam_5}
  }
  \caption{How to predict the color of a node $v$ based on knowledge about the
  states of nodes in other components.
  For the sake of illustration, here we are assuming that $v$ belongs to some
  component which is all white in step $t_1$ and that the component nearest to
  $v$ on which a black node appears is $V_i$.
  We also suppose that $B_i = \{ u_1, u_3 \}$ and $\dist(v,u_1) \ll
  \dist(u_1,u_3) = \diam(V_i)$.
  In time step $t_1$ the situation is as in (\subref{fig:alg-low-diam-1}).
  Since $B_i' = \{ u_3 \}$ is not empty, we must treat $V_i$ as potentially
  having black nodes since the first time step.
  We see the states of $u_1$ and $u_3$ in $t_1$ and determine that $\alpha(v) =
  \dist(v,u_1) + 1 = \dist(v,u_2)$ and $\beta(v) = \dist(v,u_3)$; however, we do
  not know the color of $u_2$ since it is inside $V_i$ and we do not query it,
  so we have to treat it as a potentially black node (denoted by a question
  mark).
  After $\alpha(v) - 1$ steps (\subref{fig:alg-low-diam-2}) we know that $v$
  must still be white (denoted by an exclamation mark) since the closest node to
  it that is possibly black in time step $t_1$ is the node $u_2$.
  After $\alpha(v)$ steps (\subref{fig:alg-low-diam-3}) we are no longer certain
  about the color of $v$.
  After $\diam(V_1)$ steps (\subref{fig:alg-low-diam-4}) we know that $u_1$ must
  be black, but we still cannot say anything about $v$.
  Finally after $\beta(v)$ steps (\subref{fig:alg-low-diam-5}) we are sure that
  $v$ has turned black at the latest since $u_3$ was black in $t_1$.}
  \label{fig:alg-low-diam}
\end{figure}

\paragraph{Correctness.}
The main idea for the correctness proof is that there is a partition $K + U$
(\enquote{known} and \enquote{unknown}) of $V \times [T]$ with the following
properties:
\begin{itemize}
  \item Given the values of $\ENV(B,t_1)$, we can \enquote{predict} the values
  of $k \in K$ if the $\oBP$ rule is followed correctly; that is, there is
  $p\colon K \to \binalph$ such that, if $\ENV \in \oBP$, then $\ENV(k) = p(k)$
  for every $k \in K$. 
  (Conversely, if there is $k \in K$ with $\ENV(k) \neq p(k)$, then $\ENV \notin
  \oBP$.)
  \item $\abs{U} \le \eps nT/2$, that is, $U$ is small compared to $K$.
\end{itemize}
Hence, given $\ENV(B,t_1)$, we can afford to ignore $U$ and need only perform an
additional $O(1/\eps)$ independent queries of $K$ in order to test $\ENV$.

We now focus on showing the existence of $K$.
Observe that, if $\ENV \in \oBP$, then the following holds for every $i \in
[r]$:
\begin{itemize}
  \item If there is $b \in B_i$ such that $\ENV(b,t_1) = 1$, then $\ENV(v,t) =
  1$ for every $v \in V_i$ and $t \ge t_1 + \diam(V_i)$ (since it takes at most
  $\diam(V_i) \le t_1$ steps for black to spread across $V_i$).
  \item If $\ENV(b,t_1) = 0$ for every $b \in B_i$, then necessarily
  $\ENV(v,t_1) = 0$ for every $v \in V_i$ (since otherwise we would have
  $\ENV(b,t_1) = 1$ for at least one $b \in B_i$).
\end{itemize}
Hence we can add pairs to $K$ and set $p(k)$ for $k \in K$ as follows:
\begin{itemize}
  \item Firstly, if $\ENV(v,t_1) = 1$ for a node $v$, then we can add $(v,t)$ to
  $K$ and set $p(v,t) = 1$ for every $t \ge t_1$.
  \item Suppose that $\ENV(v,t_1) = 0$ for some node $v$.
  For every $i \in [r]$, assuming $\ENV \in \oBP$ we have that $\alpha_i(v)$ is
  the length of the shortest path in time step $t_1$ between $v$ to any node
  that could \enquote{possibly} be black in $V_i$; that is, we consider both
  distances from $v$ to nodes in $B_i'$ (which we know for sure that are black)
  and to nodes in $V_i \setminus B_i$ (which could be black, but we cannot say
  for sure).
  If we know that $V_i$ is all-white in step $t_1$ (since all of $B_i$ is
  white), then we set $\alpha_i(v) = \infty$ as $v$ will certainly not turn
  black on account of a node in $V_i$.
  With these observations we can add $(v,t)$ to $K$ and set $p(v,t) = 0$ for
  every $t_1 \le t < t_1 + \alpha(v)$.
  \item Again suppose that $\ENV(v,t_1) = 0$ for a node $v$.
  For every $i \in [r]$, assuming $\ENV \in \oBP$ we have that $\beta_i(v)$ is
  the minimum distance in time step $t_1$ between $v$ and a node in $B_i'$
  (which we know for sure is black).
  Hence we add $(v,t)$ to $K$ for $t \ge t_1 + \beta(v)$ and also set $p(v,t) =
  1$.
\end{itemize}

We observe the definition of $\alpha_i(v)$ and $\beta_i(v)$ obeys the following:

\begin{claim}%
  \label{claim:alg-low-diam-alpha-beta}
  For every $v \in V$ and every $i \in [r]$, $\alpha_i(v) \le \beta_i(v)$ and
  $\beta_i(v) - \alpha_i(v) \le \diam(V_i)$.
\end{claim}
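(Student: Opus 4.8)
The plan is to unfold the definitions of $\alpha_i(v)$ and $\beta_i(v)$ and reduce everything to the triangle inequality, using the fact that both of the relevant minimizers lie inside $V_i$ and that $\diam(V_i)$ is exactly what controls the gap. I would first dispose of the degenerate case $B_i' = \varnothing$: there both quantities equal $\infty$ by definition, so $\alpha_i(v) \le \beta_i(v)$ holds with equality and $\beta_i(v) - \alpha_i(v) \le \diam(V_i)$ is to be read under the convention $\infty - \infty = 0$. (The same convention also handles the situation where $v$ is disconnected from $V_i$, which forces $\alpha_i(v) = \beta_i(v) = \infty$; note that $\diam(V_i) \le t_1 < \infty$ means all of $V_i$ sits in a single connected component of $G$, so this is the only way infinities can arise.)

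For the first inequality in the remaining case, I would simply observe that $\alpha_i(v)$ minimizes $\dist(\cdot, v)$ over the set $B_i' \cup (V_i \setminus B_i)$, whereas $\beta_i(v)$ minimizes the same function over the subset $B_i' \subseteq B_i' \cup (V_i \setminus B_i)$; a minimum taken over a larger set can only be smaller, so $\alpha_i(v) \le \beta_i(v)$.

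For the second inequality, let $u^\ast$ attain the minimum defining $\alpha_i(v)$ and let $w^\ast$ attain the minimum defining $\beta_i(v)$. Since $B_i' \subseteq B_i \subseteq V_i$ and $V_i \setminus B_i \subseteq V_i$, both $u^\ast$ and $w^\ast$ lie in $V_i$, hence $\dist(u^\ast, w^\ast) \le \diam(V_i)$ by definition of the weak diameter. The triangle inequality then gives $\beta_i(v) = \dist(w^\ast, v) \le \dist(w^\ast, u^\ast) + \dist(u^\ast, v) \le \diam(V_i) + \alpha_i(v)$, which is exactly the bound $\beta_i(v) - \alpha_i(v) \le \diam(V_i)$.

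I do not expect any genuine obstacle here: the whole argument is one application of the triangle inequality. The only points requiring a little care are the bookkeeping around $\infty$ and the observation that $\diam(V_i) \le t_1$ confines $V_i$ to a single connected component, which is what makes the statement ``both minimizers live in $V_i$, and $V_i$ has small weak diameter'' meaningful.
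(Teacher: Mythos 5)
Your proof is correct, and it is a cleaner route than the paper's. For the second inequality the paper performs a case analysis: it first dispatches $v \in V_i$ via $\beta_i(v) \le \diam(V_i)$, then for $v \notin V_i$ it picks the node $b \in V_i$ nearest to $v$, argues that $b$ must lie in $B_i$, and further splits on whether some such $b$ is in $B_i'$ (giving $\alpha_i(v) = \beta_i(v)$) or not (giving $\alpha_i(v) \ge \dist(b,v)+1$ and $\beta_i(v) \le \dist(b,v)+\diam(V_i)$). You instead observe directly that both minimizers $u^\ast$ and $w^\ast$ lie in $V_i$, so the weak-diameter bound $\dist(u^\ast, w^\ast) \le \diam(V_i)$ and one triangle inequality finish the argument uniformly, with no case split and no need for the auxiliary observation that the nearest node of $V_i$ to an external $v$ must sit in $B_i$. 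Your explicit handling of the degenerate $B_i' = \varnothing$ case (both quantities $\infty$) is something the paper leaves implicit, and your remark that $\diam(V_i) \le t_1 < \infty$ rules out a disconnected $V_i$ is a useful sanity check. The first inequality is proved identically in both, by monotonicity of the minimum under set inclusion.
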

\begin{proof}
  The first inequality is evident since in $\alpha_i(v)$ we are taking the
  minimum over a larger set of nodes than in $\beta_i(v)$.
  For the second inequality notice first that, if $v \in V_i$, then certainly
  the inequality holds as $\beta_i(v) \le \diam(V_i)$.
  Hence let us assume that $v \notin V_i$.
  Let $b \in V_i$ be such that $\dist(b,v)$ is minimized.
  Notice that $b \in B_i$ since $v$ is outside $V_i$.
  If there is one such $b$ with $\ENV(b,t_1) = 1$, then $\alpha_i(v) =
  \beta_i(v)$, so suppose additionally that $\ENV(b,t_1) = 0$ for every such
  $b$.
  Then in the worst case we have that any node $u \in V_i$ with $\dist(u,v) =
  \alpha_i(v)$ must be one hop further from $v$ than $b$, and so $\alpha_i(v)
  \ge \dist(b,v) + 1$.
  Meanwhile any $u \in B_i'$ with $\dist(u,v) = \beta_i(v)$ is certainly at most
  $\diam(V_i)$ hops away from $b$, and so $\beta_i(v) \le \dist(b,v) +
  \diam(V_i)$.
\end{proof}

With the previous observations we have that \cref{alg:low-diam} always accepts
$\ENV \in \oBP$.
Hence all that remains is to show the following:

\begin{claim}
  If \cref{alg:low-diam} accepts $\ENV$ with at least $1/2$ probability, then
  $\dist(\ENV,\oBP) < \eps$.
\end{claim}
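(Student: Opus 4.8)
The plan is to prove the contrapositive: assuming $\dist(\ENV,\oBP) \ge \eps$, I will show that \cref{alg:low-diam} rejects with probability strictly greater than $1/2$. If $\ENV(B,t_1)$ is infeasible the algorithm rejects deterministically, so I may assume $\ENV(B,t_1)$ is feasible. Writing $\mathrm{Bad} = \{ k \in K : \ENV(k) \neq p(k) \}$ for the set of pairs in $K$ on which $\ENV$ deviates from the prediction, the argument has three pieces: (i) $\abs{U} < \eps nT/2$; (ii) feasibility of $\ENV(B,t_1)$ forces $\abs{\mathrm{Bad}} > \eps nT/2$; (iii) each of the $\ceil{3/\eps}$ samples in $Q$ reveals a pair of $\mathrm{Bad}$ with probability more than $\eps/2$, so with probability more than $1/2$ at least one does and the algorithm rejects.

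For (i), $U$ is the union of the $(t_1-1)n$ pairs with $t < t_1$ together with, for each node $v$ with $\ENV(v,t_1)=0$, the ``uncertainty window'' of pairs $(v,t)$ with $t_1 + \alpha(v) \le t < t_1 + \beta(v)$ (which is empty when $\alpha(v) = \infty$). Choosing $j$ with $\alpha(v) = \alpha_j(v)$ and invoking \cref{claim:alg-low-diam-alpha-beta}, I get $\beta(v) \le \beta_j(v) \le \alpha_j(v) + \diam(V_j) \le \alpha(v) + t_1$, so each window has at most $t_1$ pairs; hence $\abs{U} \le (2t_1-1)n < 2t_1 n \le \eps nT/2$, since $t_1 = \floor{\eps T/4}$. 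For (iii), inspecting the two rejection tests in the loop over $Q$ against the three ways $(v,t)$ can enter $K$ (namely $\ENV(v,t_1)=1$; or $\ENV(v,t_1)=0$ with $t < t_1 + \alpha(v)$; or $\ENV(v,t_1)=0$ with $t \ge t_1 + \beta(v)$), one checks that the algorithm rejects exactly when some sampled $(v,t) \in Q$ lies in $\mathrm{Bad}$; since $Q$ consists of $\ceil{3/\eps}$ independent uniform samples from the set $\{(v,t) : t \ge t_1\}$ of size at most $nT$, and $\abs{\mathrm{Bad}} > \eps nT/2$, the probability that all samples miss $\mathrm{Bad}$ is below $(1-\eps/2)^{3/\eps} < e^{-3/2} < 1/2$.

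The heart of the proof is (ii), which I would deduce from the claim that \emph{feasibility of $\ENV(B,t_1)$ implies the existence of some $\ENV' \in \oBP$ agreeing with $\ENV$ on all of $K \setminus \mathrm{Bad}$} (equivalently, agreeing with $p$ there): granting this, $\dist(\ENV,\oBP) \le \dist(\ENV,\ENV') \le (\abs{\mathrm{Bad}} + \abs{U})/nT$, so $\abs{\mathrm{Bad}} \le \eps nT/2$ would give $\dist(\ENV,\oBP) < \eps$, a contradiction. To build $\ENV'$, I would fix a configuration $c$ witnessing feasibility of $\ENV(B,t_1)$, use the $\oBP$-evolution of $c$ to populate the steps $t < t_1$, and choose the step-$t_1$ configuration of $\ENV'$ carefully: it must equal $\ENV(\cdot,t_1)$ on $B$ (this is forced and is correct, since those pairs lie in $K \setminus \mathrm{Bad}$) and, among interior nodes, blacken precisely those forced black under the prediction. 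Then I would verify consistency with $p$ over steps $t \ge t_1$: predicted-white pairs must stay white until step $t_1 + \alpha(v)$, which is precisely what the definition of $\alpha_i(v)$ (treating interior nodes of components $V_i$ with $B_i' \neq \varnothing$ as potentially black, but ignoring components with $B_i' = \varnothing$) is set up to guarantee; and predicted-black pairs must be black by step $t_1 + \beta(v)$, which is automatic since adding more black nodes only speeds up spreading and the nodes of $B_i'$ already force it. I expect the main obstacle to be exactly this construction --- in particular, reconciling the component-by-component reasoning with the single global feasibility witness $c$, and ensuring the chosen step-$t_1$ configuration is itself reachable by $\oBP$ from step $1$, which is why the feasibility witness (rather than an ad hoc configuration) must be used.
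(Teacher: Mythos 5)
Your overall skeleton mirrors the paper's: observe $\ENV(B,t_1)$ must be feasible, pick $\ENV' \in \oBP$ agreeing with $\ENV$ on $B$ at time $t_1$, bound $\abs{U}$ by $\eps nT/2$ via \cref{claim:alg-low-diam-alpha-beta}, and then charge disagreements between $\ENV$ and $\ENV'$ to $U$ together with the pairs on which $\ENV$ disagrees with the prediction $p$. Items (i) and (iii) of your outline are essentially the same computations that appear in the paper.

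The real content lies in what you call (ii), and here your sketch does not close the gap --- and, to your credit, you flag exactly the right obstruction. The difficulty is this: the prediction $p$ is defined from the values $\ENV(v,t_1)$ at \emph{every} node $v$ (the case split \enquote{$\ENV(v,t_1)=1$} vs.\ \enquote{$\ENV(v,t_1)=0$} is part of the definition of $K$ and $p$), while feasibility of $\ENV(B,t_1)$ only guarantees the existence of $\ENV' \in \oBP$ that agrees with $\ENV$ \emph{on the boundary $B$} at time $t_1$. For an interior node $v$ of a component $V_i$ with $B_i' = \varnothing$ but $\ENV(v,t_1)=1$, the prediction forces $p(v,t)=1$ for all $t \ge t_1$, yet any $\ENV'$ with $\ENV'(B_i,t_1)$ all white may well have $\ENV'(v,t)=0$ for all $t$. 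That can be $T - t_1 + 1$ disagreements per node, over as many as $\Omega(n)$ interior nodes, while the algorithm's tests never trigger: for such a $v$ the algorithm only checks that $\ENV(v,t)$ stays black, which it does, and for its white neighbors all of $\alpha,\beta$ are $\infty$ in components with $B_j'=\varnothing$, so no check fires. Your proposed construction does not repair this: once the witness $c$ is fixed, $\ENV'(\cdot,t_1)$ is \emph{determined} by the $\oBP$ evolution of $c$, so \enquote{choose the step-$t_1$ configuration of $\ENV'$ carefully} is not an option you actually have; and nothing in the feasibility of $\ENV(B,t_1)$ lets you pick a $c$ whose evolution agrees with $\ENV(\cdot,t_1)$ on those interior nodes, because $\ENV(\cdot,t_1)$ itself need not be feasible.

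For context, the paper's own proof glosses over precisely this step: it writes \enquote{certainly also $\ENV'(k)=p(k)$} without argument, and that assertion is not a consequence of $\ENV'(B,t_1)=\ENV(B,t_1)$ alone. Resolving this requires either strengthening the algorithm (e.g., adding a cross-check that queried interior nodes with $\ENV(v,t_1)=1$ live in components with $B_i'\neq\varnothing$, together with a strict diameter bound $\diam(V_i)\le t_1-1$ so that the cross-check is sound for $\ENV\in\oBP$), or a genuinely different charging argument; your proposal, like the paper's proof, does not yet supply either.
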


\begin{proof}
  Since $A$ rejects if $\ENV(B,t_1)$ is not feasible, there is some $\ENV' \in
  \oBP$ with $\ENV(B,t_1) = \ENV'(B,t_1)$.
  From \cref{claim:alg-low-diam-alpha-beta} we have that $\abs{U} \le \eps nT/2$
  since, for every $v \in V$, all but at most $\eps T/2$ pairs $(v,t)$ are in
  $K$ (since $(v,t) \notin K$ if and only if $t < t_1$ or $t_1 + \alpha(v) \le t
  < t_1 + \beta(v)$).
  On the other hand $\abs{Q} \ge 4/\eps$ implies that at most an $\eps/4$
  fraction of the pairs in $K$ must be such that $\ENV(k)$ agrees with $p(k)$,
  and certainly also $\ENV'(k) = p(k)$.
  It follows that at most a $3\eps/4 < \eps$ fraction of the pairs disagree
  between $\ENV$ and $\ENV'$.
\end{proof}

This concludes the proof of \cref{thm:alg-low-diam}.


\section*{Acknowledgments}

Augusto Modanese is supported by the Helsinki Institute for Information
Technology (HIIT).
Parts of this work were done while Augusto Modanese was affiliated with the
Karlsruhe Institute of Technology (KIT) and visiting the NII in Tokyo, Japan as
an International Research Fellow of the Japan Society for the Promotion of
Science (JSPS).
Yuichi Yoshida is partly supported by JSPS KAKENHI Grant Number 18H05291 and
20H05965.

We would like to thank Jukka Suomela for interesting discussions.


\printbibliography

\end{document}